\documentclass{article}
\usepackage[utf8]{inputenc}
\usepackage[T1]{fontenc}
\usepackage{microtype}
\usepackage{graphicx}
\usepackage{subfigure}
\usepackage{booktabs} 
\usepackage{mathtools}
\usepackage[colorinlistoftodos]{todonotes}
% for professional tables
\usepackage{fullpage}
\usepackage{times}
\usepackage{helvet}
\usepackage{courier}
\usepackage[hyphens]{url}
\usepackage{xcolor}
\usepackage{color, colortbl}

\usepackage{bbm}

% for figures
\usepackage[usenames,dvipsnames]{pstricks}
\usepackage{epsfig}
\usepackage{pst-grad} % For gradients
\usepackage{pst-plot} % For axes
\usepackage[space]{grffile} % For spaces in paths
\usepackage{etoolbox} % For spaces in paths
\makeatletter % For spaces in paths
\patchcmd\Gread@eps{\@inputcheck#1 }{\@inputcheck"#1"\relax}{}{}
\makeatother

\usepackage{caption}

\usepackage{amsfonts,amsthm,amssymb}
\usepackage{amsmath}
\DeclareMathOperator*{\argmax}{arg\,max}

\usepackage{nicefrac}
\usepackage{algorithm}
\usepackage[noend]{algpseudocode}
\usepackage{enumerate}
\usepackage{CJK}
\newcommand{\norm}[1]{\left\lVert#1\right\rVert}
\usepackage{mdframed}

\usepackage{multirow}
\usepackage{longtable}

\newtheorem{lemma}{Lemma}[section]
\newtheorem{theorem}[lemma]{Theorem}

\newtheorem{definition}[lemma]{Definition}

\newtheorem{observation}{Observation}

\allowdisplaybreaks

\newenvironment{proofof}[1]{\smallskip\noindent{\bf Proof of #1}}%
        {\hspace*{\fill}$\Box$\par}

% todo commands

% mathcal
\newcommand{\cC}{{C}}
\newcommand{\cW}{{W}}

\newcommand{\cB}{{\mathcal B}}
\newcommand{\cG}{{\mathcal G}}

\newcommand{\pr}{{\mathbf{Pr}}}
\sloppy
% misc symbols

\newcommand{\kp}{{k'}}

\newcommand{\opt}{\mathrm{OPT}}

\newcommand{\col}{\mathrm{col}}

\newcommand{\fhtw}{\mathrm{fhtw}}
\newcommand{\eps}{\epsilon}
\newcommand{\weight}{\omega}

\newcommand{\ratio}{f}

%math symbols

\title{Relational  Algorithms for k-means Clustering}
\author{Benjamin Moseley \\ Carnegie Mellon University\\
moseleyb@andrew.cmu.edu\and Kirk Pruhs~\thanks{Supported in part by NSF grants CCF-1421508 and CCF-1535755, and an IBM Faculty Award.}\\ University of Pittsburgh\\kirk@cs.pitt.edu \and Alireza Samadian \\University of Pittsburgh \\ samadian@cs.pitt.edu \and Yuyan Wang \\ Carnegie Mellon University \\ yuyanw@andrew.cmu.edu}
\date{May 2021}

\usepackage{graphicx}

\begin{document}
\maketitle
\begin{abstract}

% Most data encountered by data scientists is stored in relational databases.  Running $k$-means algorithm on a relation database usually requires converting this data structure into the conventional input format of multi-dimensional Euclidean data points. While standard, there is potential to improve the worst-case running time of the process.  

%This paper gives a $k$-means approximation algorithm that is a \emph{relational algorithm}, which operates directly on a relational database without converting it to a set of points.  It is well known that relational databases can represent data sets exponentially larger than their size, making it inefficient to extract all the data points. Our algorithm's running time is polynomial in the size of the database for acyclic databases and fix parameter tractable for arbitrary databases. \yuyan{Is there a way to convey this idea without mentioning these terms?} This could be exponentially smaller than $N$, the number of data points that need to be clustered.

This paper gives a $k$-means approximation algorithm that is efficient in the \emph{relational algorithms model}. This is an algorithm that operates directly on a relational database without performing a join to convert it to a matrix whose rows represent the data points.  The running time is potentially exponentially smaller than $N$, the number of data points to be clustered that the relational database represents.

Few relational algorithms are known and this paper offers techniques for designing relational algorithms as well as characterizing their limitations.  We show that given two data points as cluster centers, if we cluster points according to their closest centers, it is NP-Hard to approximate the number of points in the clusters on a general relational input. This is trivial for conventional data inputs and this result exemplifies that standard algorithmic techniques may not be directly applied when designing an efficient relational algorithm. This paper then introduces a new method that leverages rejection sampling and the $k$-means++ algorithm to construct a $O(1)$-approximate $k$-means solution.

%The majority of learning tasks faced by  data scientists involve relational data, yet most standard algorithms for standard learning problems are not designed to accept relational data as input. 

%The standard practice to address this issue is to join the relational data to create the type of geometric input that standard learning algorithms expect.
%Unfortunately, this standard practice has exponential worst-case time and space complexity.
%This leads us to consider what we call the Relational Learning Question:``Which standard learning algorithms can  be efficiently implemented on relational data, and for those that can not, is there an alternativealgorithm that can be efficiently implemented on relational data and that has similar performance guarantees to the standard algorithm?''In this paper, we address the relational learning question for two well-knownalgorithms for  the standard $k$-means clusteringproblem. We first show that the $k$-means++ algorithm can be efficiently implemented on relational data.In contrast, we show that the adaptive $k$-means algorithm likely can not be efficiently implemented on relational data, as this would imply $P =\#P$. However, we show that a slight variation of this adaptive $k$-means algorithm can be efficiently implemented onrelational data, and that this alternative algorithm has the same performance guarantee as the original algorithm, that is that itoutputs an $O(1)$-approximate sketch. 
\end{abstract}

\section{Introduction}
\label{section:intro}

Kaggle  surveys~\cite{KaggleSurvey} show that 
the majority of learning tasks faced by data scientists involve \emph{relational data}. Conventional formats usually represent data with multi-dimensional points where each dimension corresponds to a feature of the data. In contrast, a \textbf{relational database} consists of tables $T_1, T_2, \ldots, T_m$ where the features could be stored partially in the tables. The columns in each table are a subset of features\footnote{In relational database context the columns are also referred to as \emph{attributes} but here we call them features per the tradition of broader communities.} and the rows are data records for these features. The underlying data is represented by the \textbf{design matrix} $J=T_1 \Join \dots \Join T_m$ where each row in $J$ can be interpreted as a data point. Here the \textbf{join} ($\Join$) is a binary operator on two tables $T_i$ and $T_j$. The result of the join is the set of all possible concatenations of two rows from $T_i$ and $T_j$ such that they are equal in their common columns/features. If $T_i$ and $T_j$ have no common columns their join is the cross product of all rows. See Table \ref{table:join} for an example of join operation on two tables.

Almost all learning tasks are designed for data in matrix format.  The current standard practice for a data scientist is the following.
\begin{table}[H]
\begin{mdframed}
\noindent
\textbf{Standard Practice:}
\begin{enumerate}
    \item 
Extract the data points from the relational database by taking the join of all tables to find the design matrix $J=T_1 \Join \dots \Join T_m$. 
\item
Then interpret each row of $J$ as a point in a Euclidean space and the columns as the dimensions, corresponding to the features of data.  %In other words, the design matrix  is a matrix of all the data points.
\item
Import this design matrix $J$ into a standard  algorithm.
\end{enumerate}
\end{mdframed}
\end{table}

%The \textbf{design matrix} is a matrix of the data points. That is, \textbf{each row is one data point.}

\begin{table}[t]
\centering
\begin{tabular}{|c|c|}
\hline
\rowcolor[HTML]{FFFFC7} 
\multicolumn{2}{|c|}{\cellcolor[HTML]{FFFFC7}$T_1$} \\ \hline
\rowcolor[HTML]{FFFFC7} 
$f_1$                    & $f_2$                    \\ \hline
1                        & 1                        \\ \hline
2                        & 1                        \\ \hline
3                        & 2                        \\ \hline
4                        & 3                        \\ \hline
5                        & 4                        \\ \hline
\end{tabular}
\quad
\begin{tabular}{|c|c|}
\hline
\rowcolor[HTML]{FFFFC7} 
\multicolumn{2}{|c|}{\cellcolor[HTML]{FFFFC7}$T_2$} \\ \hline
\rowcolor[HTML]{FFFFC7} 
$f_2$                    & $f_3$                    \\ \hline
1                        & 1                        \\ \hline
1                        & 2                        \\ \hline
2                        & 3                        \\ \hline
5                        & 4                        \\ \hline
5                        & 5                        \\ \hline
\end{tabular}
\quad
\begin{tabular}{|c|c|c|}
\hline
\rowcolor[HTML]{FFFFC7} 
\multicolumn{3}{|c|}{\cellcolor[HTML]{FFFFC7}$T_1 \Join T_2$} \\ \hline
\rowcolor[HTML]{FFFFC7} 
$f_1$                    & $f_2$   & $f_3$                  \\ \hline
1                        & 1   & 1                      \\ \hline
1                        & 1   & 2                      \\ \hline
2                        & 1   & 1                      \\ \hline
2                        & 1    & 2                     \\ \hline
3                        & 2     & 3                    \\ \hline
\end{tabular}
\caption{A join of tables $T_1$ and $T_2$. Each has $5$ rows and $2$ features, sharing $f_2$.  The join has all features from both tables. The rows with $f_2=x$ in the join is the cross product of all rows with $f_2=x$ from $T_1$ and $T_2$. For example, for $f_2 = 1$, the four rows in $T_1 \Join T_2$ has $(f_1, f_3)$ values $\{(1, 1), (1, 2), (2, 1), (2, 2)\}$, this is the cross product of $f_1 \in \{1, 2\}$ from $T_1$ and $f_3 \in \{1, 2\}$ from $T_2$.}
\label{table:join} \vspace{-.5cm}
\end{table}

A relational database is a highly compact data representation format. The size of $J$ can be exponentially larger than the input size of the relational database \cite{atserias2008size}. Extracting $J$ makes the standard practice inefficient.
%It is well-known that the size of $J$ can be exponentially larger than the input size of the relational database \cite{atserias2008size}.  Indeed, the purpose of relational databases is to compactly represent the underlying data.   The standard practice is inefficient.  
Theoretically, there is a potential for exponential speed-up by running algorithms \emph{directly} on the tables in relational data. We call such algorithms \textbf{relational algorithms} if their running time is polynomial in  the size of tables  when the database is \emph{acyclic}. Acyclic databases will be defined shortly. This leads to the following exciting algorithmic question.

% \yuyan{I recommend removing this requirement, since our algorithm is only polynomial when it is acyclic. And later after defining acyclicity we can put a note and say a relational algorithm is efficient if it has polynomial running time on acyclic databases.} 

%Independent of the learning task, this standard practice necessarily has exponential worst-case time and space complexity as the design matrix can be exponentially larger than the underlying relational tables. Thus a natural research question is:

\begin{mdframed}
\noindent
\textbf{The Relational Algorithm Question:}
\renewcommand{\labelenumi}{\Alph{enumi}.}
\begin{enumerate}
    \item 
Which standard  algorithms can  be implemented as relational algorithms? %, which informally are algorithms that are efficient (polynomial time) when the input is in relational form?
\item
For standard algorithms that
are \emph{not} implementable by relational algorithms, is there an alternative efficient
relational algorithm that has similar performance?
\end{enumerate}
\end{mdframed}

This question has recently been of interest to the community.  However, few algorithmic techniques are known.  Moreover, we do not have a good understanding of which problems can be solved on relational data and which cannot.  Relational algorithm design has a interesting combinatorial structure that requires a deeper understanding.

We design a relational algorithm for $k$-means. It has a polynomial time complexity for \textbf{acyclic} relational databases. The relational database is acyclic if there exists a tree with the following properties. There is exactly one node in the tree for each table.  Moreover, for any feature (i.e. column) $f$, let $V(f)$ be the set of nodes whose corresponding tables contain feature $f$. The subgraph induced on $V(f)$ must be a connected component. Acyclicity can be easily checked, as the tree can be found in polynomial time if it exists \cite{yu1979algorithm}.
%We may assume this tree is given as a part of the input as it is easy to compute this tree in polynomial time using GYO algorithm if one exists \cite{yu1979algorithm}. 

%While most of the natural database schema are acyclic, our algorithm also works for cyclic databases, since it is using Inside-out\cite{FAQ} as a subroutine. However, note that answering easy questions such as if the design matrix is empty or not is NP-Hard on general (cyclic) databases. Therefore, on cyclic databases, our algorithm has a linear dependency on $n^{\text{fhtw}}$ where fhtw is the fractional hypertree width which measures the complexity of the structure of a database and it is $1$ for acyclic databases. Note that the best known algorithms for queries as simple as counting the number of rows in the design matrix of a cyclic database also have linear dependency on $n^{\text{fhtw}}$ ~\cite{FAQ}. This approach for cyclic databases can be viewed as a fixed parameter tractable algorithm for an NP-Hard problem~\cite{FAQ,grohe2006structure,marx2013tractable}. 

Luckily, most of the natural database schema are acyclic or nearly acyclic.   Answering seemingly simple questions on general (cyclic) databases, such as if the join is empty or not is NP-Hard.  For general databases, efficiency is measured in terms of the \textbf{fractional hypertree width} of the database (denoted by ``fhtw'')\footnote{See Appendix~\ref{sect:dbbackground} for a formal definition.}.  This measures how close the database structure is to being acyclic. This parameter is $1$ for acyclic databases and larger as the database is farther from being acyclic.   

State-of-the-art algorithms for queries as simple as counting the number of rows in the design matrix have linear dependency on $n^{\text{fhtw}}$ where $n$ is the \emph{maximum} number of rows in all input tables \cite{FAQ}.  Running in time linear in  $n^{\text{fhtw}}$ is the goal, as fundamental barriers need to be broken to be faster.  Notice that this is polynomial time when fhtw is a fixed constant (i.e. nearly acyclic).  Our algorithm has linear dependency on $n^{\text{fhtw}}$, matching the state-of-the-art.

\medskip
\noindent \textbf{Relational Algorithm for $k$-means:} %This paper seeks to find an efficient $k$-means clustering algorithm for data points represented using a relational database. 
$k$-means is perhaps the most widely used data mining algorithm  (e.g.  $k$-means is one of the few models in Google's BigQuery ML package~\cite{bigqueryml}). The input to the $k$-means problem consists of a
collection $S$ of points in a Euclidean space and a positive integer $k$. A feasible output is $k$  points
$c_1, \ldots, c_k$, which we call \textbf{centers}.   The objective is to choose the centers to minimize the aggregate squared
distance from each original point to its nearest center. %Unless specified otherwise, in this paper all distances are 2-norm squared distances. 
%\yuyan{I commented this out because in some places when we talk about distance we mean actual distance, not the squared ones. }

Recall extracting all data points could take time exponential in the size of a relational database. Thus, the problem is to find the cluster centers without fully realizing all data points the relational data represents.

\textit{ }\cite{Rkmeans} was the first paper to give a non-trivial $k$-means algorithm that works on relational inputs. The paper gives an $O(1)$-approximation.   The algorithm's running time has a superlinear dependency on $k^d$ when the tables are acyclic and thus is not polynomial.  Here $k$ is the number of cluster centers and $d$ is the dimension (a.k.a number of features) of the points.  This is equivalently the number of distinct columns in the relational database.  For a small number of dimensions, this algorithm is a large improvement over the standard practice and  they showed the algorithm gives up to 350x speed up on real data versus performing the query to extract the data points (not even including the time to cluster the output points).  

Several questions remain.  Is there a relational algorithm for $k$-means?  What algorithmic techniques can we use as building blocks to design relational algorithms?  Moreover, how can we show some problems are hard to solve using a relational algorithm?

\medskip \noindent \textbf{Overview of Results:} %This paper investigates the $k$-means clustering problem on relational data. %The input is a collection of tables $T_1, T_2, \ldots T_m$.  The goal is to compute the $k$-means centers for the the join $J=T_1 \Join \dots \Join T_m$.  $J$ has $d$ columns and $N$ data points. 
The main result of the paper is the following.

\begin{theorem} \label{thm:main_thm1}
Given an acyclic relational database with tables $T_1, T_2, \ldots T_m$ where the design matrix $J$ has $N$ rows and $d$ columns.  Let $n$ be the maximum number of rows in any table. Then there is a randomized algorithm running in time polynomial in $d$, $n$ and $k$ that computes an $O(1)$ approximate $k$-means clustering solution with high probability.
\end{theorem}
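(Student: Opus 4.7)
The plan is to simulate the $k$-means++ algorithm on the design matrix $J$ without ever materializing $J$, by implementing its two sampling primitives directly on the base tables $T_1, \dots, T_m$. Recall that $k$-means++ chooses the first center uniformly at random from the data and then iteratively draws each new center with probability proportional to the squared distance to the nearest existing center (so-called $D^2$ sampling); this gives an $O(\log k)$-approximation in expectation, which can be strengthened to an $O(1)$-approximation by drawing $O(k)$ candidate centers and post-processing on this small set (for instance by running an off-the-shelf weighted $k$-means approximation on the $O(k)$ candidates, which fits in $\poly(k,d)$ time). It therefore suffices to show that both sampling primitives can be executed in time $\poly(n,k,d)$ on an acyclic relational input.

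The uniform-sampling primitive is easy: for an acyclic schema the size of the join $|J|$ can be computed by a standard FAQ-style message passing pass over the join tree in time $O(n \cdot \poly(d))$, and the same partial counts support hierarchically sampling a uniformly random row of $J$. For the $D^2$ primitive, the starting observation is that for any fixed center $c$,
\[
\sum_{x \in J} \|x - c\|^2 \;=\; \sum_{f} \sum_{x \in J} (x_f - c_f)^2,
\]
a feature-wise decomposition that FAQ evaluates in polynomial time on acyclic schemas. Stored as a collection of partial sums over the join tree, this immediately yields a hierarchical sampler that draws $x \in J$ with probability proportional to $\|x-c\|^2$ for any chosen $c$.

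To lift single-center sampling into a sampler for $\min_{c \in C}\|x-c\|^2$, I would use rejection sampling with the decomposable proposal $q(x) \propto \sum_{c \in C} \|x - c\|^2$, which is simply a mixture of the $|C|\le k$ per-center distributions already supported. A proposed $x$ is accepted with probability $\min_{c \in C}\|x-c\|^2 / \sum_{c \in C}\|x-c\|^2$. The key point is that this acceptance test is local: given a specific $x$ one only needs the $|C|$ distances from $x$ to the current centers, so no global Voronoi-cell counting is required. This is precisely the step that side-steps the NP-hardness result highlighted in the abstract, which forbids us from cheaply computing cluster sizes but leaves the per-point check tractable.

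The main obstacle will be controlling the expected rejection rate, which equals $\sum_{x}\min_c \|x-c\|^2 / \sum_{x}\sum_c \|x-c\|^2$ (both FAQ-computable); in pathological configurations this can be much smaller than $1/k$. I would control it by using the triangle inequality to bound $\sum_{c \in C}\|x-c\|^2 \le 2|C|\,\min_c \|x-c\|^2 + 2 \sum_{c\in C}\|c - c_{\star}(x)\|^2$ where $c_\star(x)$ is the current nearest center, and arguing that the diameter-like second term is dominated by the current approximate $k$-means cost so that the acceptance probability stays at least $1/\poly(k)$ on average. If this worst-case bound is too weak, the backup is to oversample $O(k)$ candidates per round; the standard $k$-means++ potential-function analysis then still delivers an $O(1)$-approximation with high probability. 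Putting the pieces together produces an algorithm whose total running time is polynomial in $n$, $k$, and $d$, matching the statement of Theorem~\ref{thm:main_thm1}.
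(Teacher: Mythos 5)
Your high-level plan (simulate $k$-means++ by rejection sampling against a relationally computable proposal, then build a small weighted instance) is the same skeleton as the paper's, and your observation that the acceptance test is a local per-point computation is correct. However, there are two genuine gaps, and they are exactly the two places where the paper has to do real work.

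First, the proposal $q(x)\propto\sum_{c\in C}\|x-c\|^2$ does not give a polynomially bounded rejection rate, and your suggested fix does not repair it. The per-round acceptance probability is $\sum_{x}\min_c\|x-c\|^2 \big/ \sum_{x}\sum_c\|x-c\|^2$, and this can be arbitrarily small independently of $n$, $k$, and $d$: take two clusters of radius $\eps$ separated by distance $D$ with one current center in each; then the numerator is about $N\eps^2$ while the denominator is about $ND^2$, so the expected number of rejections is $D^2/\eps^2$, unbounded. Your triangle-inequality bound introduces the term $\sum_{c\in C}\|c-c_\star(x)\|^2$, which is an inter-center quantity that is \emph{not} dominated by the current clustering cost (in the example above it is $\approx D^2$ while the cost is $\approx N\eps^2$), so the claimed $1/\poly(k)$ acceptance rate fails. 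Oversampling more candidates per round does not help either, since each individual draw is still this expensive. This is precisely why the paper builds the laminar hyper-rectangle structure: it yields a proposal with the \emph{pointwise} guarantee $R(x)\le O(i^2 d)\,L(x)$ (Lemma 4.4), which immediately bounds the expected number of rejections per center by $O(i^2 d)$.

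Second, "running an off-the-shelf weighted $k$-means approximation on the $O(k)$ candidates" presupposes that you have the weights, i.e., the number of points of $J$ in each candidate's Voronoi cell. Theorem 1.2 of the paper shows that computing these weights is $\#P$-hard and that approximating them to any factor is NP-hard for $k\ge 3$, so this step cannot be dispatched as routine. (Running unweighted $k$-means on the $O(k)$ candidate points themselves does not give an $O(1)$-approximation for $J$.) The paper spends its entire second half constructing an alternative weighting — geometrically growing balls around each center, near-uniform sampling of test points in each "donut," discarding undersampled donuts — and then a nontrivial flow/fractional-weight argument to show that these weights, which are individually \emph{not} good approximations of the true Voronoi counts, nonetheless form an $O(1)$-approximate coreset in aggregate. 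Your proposal is silent on this entire component.
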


In appendix  \ref{sect:dbbackground},  we discuss the algorithm's time complexity for cyclic databases. To illustrate the challenges for finding such an algorithm as described in the prior theorem, even when the database is acyclic, consider the following theorem.

\begin{theorem}\label{thm:hardcount}
Given an acyclic relational database with tables $T_1, T_2, \ldots T_m$ where the design matrix $J$ has $N$ rows and $d$ columns.  Given $k$ centers $c_1,\dots,c_k$, let $J_i$ be the set of points in $J$ that are closest to $c_i$ for $i \in [k]$.  It is $\#P$-Hard to compute $|J_i|$ for $k\geq 2$ and $NP$-Hard to approximate $|J_i|$ to any factor for $k \geq 3$.
\end{theorem}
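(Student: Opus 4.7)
The plan is to encode hard counting/decision problems over the Boolean hypercube into a relational join, then use perpendicular bisectors of the centers to realize the constraints of those problems. The key observation is that the Voronoi cell of $c_1$ with respect to $k$ centers is the intersection of $k-1$ half-spaces (one from each bisector with another center), so $|J_1|$ counts rows of $J$ satisfying a conjunction of $k-1$ linear inequalities. I want the join to be the entire hypercube $\{0,1\}^d$, which is achieved by letting each $T_i$ be a single-column table on its own feature $f_i$ with rows $\{0,1\}$; no two tables share a feature, so the database is acyclic (a trivially valid join tree) and $J$ is the $d$-fold cross product $\{0,1\}^d$.

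For the \#P-hardness at $k=2$, I would reduce from \#KNAPSACK: count $x\in\{0,1\}^d$ satisfying $\sum_i a_i x_i\le b$ for given positive integers $a_i,b$. Set $a=(a_1,\dots,a_d)$, $c_1=0$, and $c_2=ta$ with $t=(2b+1)/\|a\|^2$. A direct expansion of $\|x-c_1\|^2=\|x-c_2\|^2$ yields the bisector $\langle a,x\rangle=b+\tfrac12$; the half-integer offset ensures no point of $J$ lies exactly on the bisector, so tie-breaking is irrelevant and $|J_1|$ equals the \#KNAPSACK count exactly. For $k>2$, pad by adding $k-2$ extra centers placed at distance much larger than $\sqrt{d}$ from the hypercube; these do not steal any point from $c_1$'s cell.

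For the NP-hardness of approximation at $k=3$, reduce from SUBSET-SUM using the same database but sandwich the target: place $c_1,c_2,c_3$ collinear on the line $\mathbb{R}\cdot a$ so that the bisector of $c_1,c_2$ is $\langle a,x\rangle=b+\tfrac12$ and the bisector of $c_1,c_3$ is $\langle a,x\rangle=b-\tfrac12$, with $c_2$ and $c_3$ on opposite sides of $c_1$; for example $c_1=(2b/\|a\|^2)\,a$, $c_2=c_1+a/\|a\|^2$, $c_3=c_1-a/\|a\|^2$. Since $\langle a,x\rangle$ is an integer on $\{0,1\}^d$, the Voronoi cell of $c_1$ contains precisely the $x$ with $\sum_i a_i x_i=b$, and thus $|J_1|>0$ iff SUBSET-SUM has a solution. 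Any multiplicative estimate of $|J_1|$ distinguishes $|J_1|=0$ from $|J_1|\ge 1$, which is NP-hard; padding as above extends to all $k\ge 3$. The mildly delicate step is really just the half-integer offset combined with integrality of $\langle a,x\rangle$ to rule out points exactly on bisectors; everything else is elementary linear algebra along a one-dimensional subspace, so the hardness is intrinsic to the relational setting and flows from the cross-product blow-up alone.
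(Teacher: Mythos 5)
Your proposal is correct and follows essentially the same route as the paper: encode the $2^d$ subsets as a cross-product-style acyclic join, reduce \#KNAPSACK to counting one Voronoi cell for $k=2$, and sandwich a SUBSET-SUM target between two collinear bisectors at half-integer offsets for $k=3$, so that any multiplicative approximation distinguishes an empty cell from a nonempty one. The only blemish is a constant slip in your $k=3$ example: with $c_1=(2b/\|a\|^2)a$ the bisectors sit at $\langle a,x\rangle = 2b\pm\tfrac12$, so you should take $c_1=(b/\|a\|^2)a$ to place them at $b\pm\tfrac12$ as intended; this does not affect the argument.
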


% Say one is given three centers $c_1, c_2, \ldots c_3$ and a collection of tables $T_1, T_2, \ldots T_m$ that represent a set $J$ of $n$ data points in Euclidean space. Intuitively, these are cluster centers. Let $J_i$ be the set of points in $J$ that are closest to $c_i$.  That is, $c_i$'s cluster.  We show that it is $\#P$-Hard to approximate $|J_i|$, the size of $J_i$ to any factor. 
 
You may find the proof in Section \ref{subsection:hardness_computing_weights}. We show it by reducing a $NP$-Hard problem to the problem of determining if $J_i$ is empty or not.   Counting the points closest to a center is a fundamental building block in almost all $k$-means algorithms. Moreover, we show even performing one iteration of the classic Lloyd's algorithm is $\#P$-Hard in Appendix~\ref{sect:Lloyds}.   Together this necessitates the design of new techniques to address the main theorem, shows that  seemingly trivial algorithms are difficult  relationally,  and suggests computing a coreset is the right approach for the problem as it is difficult to cluster the data directly.

%To illustrate the challenges faced for finding such an algorithm consider the following simple task.  Say one is given three centers $c_1, c_2, \ldots c_3$ and a collection of tables $T_1, T_2, \ldots T_m$ that represent a set $J$ of $n$ data points in Euclidean space. Intuitively, these are cluster centers. Let $J_i$ be the set of points in $J$ that are closest to $c_i$.  That is, $c_i$'s cluster.  We show that it is NP-Hard to approximate $|J_i|$, the size of $J_i$ to any factor. Intuitively, this is because the output of a join, the design matrix, can be so much larger than the table representation in the worst case.  Intuitively, we show that a NP-Hard problem solution can be determined if $J_i$ is empty or not.
 \medskip
\noindent \textbf{Overview of Techniques:} We first compute a \textbf{coreset}  of all points in $J$.  That is, a collection of points with weights such that if we run an $O(1)$ approximation algorithm on this weighted set, we will get a $O(1)$ approximate solution for all of $J$.  To do so, we sample points according to the principle in $k$-means++ algorithm and assign weights to the points sampled.  The number of points chosen will be $\Theta(k\log N)$. Any $O(1)$-approximate weighted $k$-means algorithm can be used on the  coreset to give Theorem \ref{thm:main_thm1}. 

\medskip
\noindent \textbf{k-means++:}  $k$-means++ is a well-known $k$-means algorithm \cite{DBLP:conf/soda/ArthurV07,aggarwal2009adaptive}.   The algorithm iteratively chooses centers $c_1, c_2, \ldots$. The first center  $c_1$ is picked uniformly from $J$. Given that $c_1, \ldots, c_{i-1}$ are picked, a point $x$ is picked 
as $c_i$ with probability
$P(x) = \frac{L(x)}{Y}$
where $L(x) = \min_{j \in [i-1]} ( \norm{x-c_j}_2^2)$ and
$Y = \sum_{x \in J} L(x)$. Here $[i-1]$ denotes $\{1,2, \ldots, i-1\}$.  %In the $i$th step the denominator $Y$ is the current $(i-1)$-means objective using centers $c_1, c_2, \ldots, c_{i-1}$ and $L(x)$ is a point $x$'s contribution to the objective.  

Say we sample $\Theta(k \log N)$ centers according to this distribution, which we call the \textbf{$k$-means++ distribution}.  It was shown in  \cite{aggarwal2009adaptive}  that if we cluster the points by assigning them to their closest centers, the total squared distance between points and their cluster centers is at most $O(1)$ times the optimal $k$-means cost with high probability. Note that this is not a feasible $k$-means solution because more than $k$ centers are used. However, leveraging this, the work showed that we can construct a coreset by weighting these centers according to the number of points in their corresponding clusters.  

We seek to mimic this approach with a relational algorithm. Let's focus on one iteration where we want to sample the center $c_i$ given $c_1, \ldots, c_{i-1}$ according to the $k$-means++ distribution. Consider the assignment of every point to its closest center in $c_1, \ldots, c_{i-1}$. Notice that the $k$-means++ probability is determined by this assignment. Indeed, the probability of a point being sampled is the cost of assigning this point to its closest center ($\min_{j \in [i-1]} \norm{x-c_j}_2^2$) normalized by $Y$. $Y$ is the summation of this cost over all points.  

The relational format makes this distribution difficult to compute without the design matrix $J$. It is hard to efficiently characterize which points are closest to which centers. The assignment \emph{partitions} the data points according to their closest centers, where each partition may not be easily represented by a compact relational database (unlike $J$).

\medskip
\noindent \textbf{A Relational k-means++ Implementation:} Our approach will sample every point according to the $k$-means++ distribution without computing this distribution directly. Instead, we use \textbf{rejection sampling} \cite{casella2004generalized}, which allows one to sample from a ``hard'' distribution $P$ using an ``easy'' distribution $Q$.  Rejection sampling works by sampling from $Q$ first, then reject the sample with another probability used to bridge the gap between $Q$ and $P$. The process is repeated until a sample is accepted. In our setting, $P$ is the $k$-means++ distribution, and we need to find a $Q$ which could be sampled from efficiently with a relational algorithm (without computing $J$). Rejection sampling theory shows that for the sampling to be efficient, $Q$ should be close to $P$ point-wise to avoid high rejection frequency. In the end, we will \emph{perfectly simulate} the $k$-means++ algorithm.
     
We now describe the intuition for designing such a $Q$. Recall that $P$ is determined by the assignment of points to their closest centers. We will approximate this assignment up to a factor of $O(i^{2}d)$ when sampling the $i^{th}$ center $c_i$, where $d$ is the number of columns in $J$. Intuitively, the approximate assignment makes things easier since for any center we can easily find the points assigned to it using an efficient relational algorithm. Then $Q$ is found by normalizing the squared distance between each point and its assigned center. %The normalize cost of each point $x$ in this new assignment is the sum of the new assignment costs.

The approximate assignment is designed as follows. Consider the $d$-dimensional Euclidean space where the data points in $J$ are located. The algorithm divides space into a \textbf{laminar} collection of \textbf{hyper-rectangles}\footnote{A laminar set of hyper-rectangles means any two hyper-rectangles from the set either have no intersection, or one of them contains the other.} (i.e., $\{x \in \mathcal{R}^d: v_j \leq x_j \leq w_j, j=1, \ldots, d\}$, here $x_j$ is the value for feature $f_j$). We assign each hyper-rectangle to a center.  A point assigns itself to the center that corresponds to the \emph{smallest} hyper-rectangle containing the point. 

The key property of hyper-rectangles that benefits our relational algorithm is: we can efficiently represent all points from $J$ inside any hyper-rectangle by removing some entries in each table from the original database and taking the join of all tables. For example, if a hyper-rectangle has constraint $v_j \leq x_j \leq w_j$, we just remove all the rows with value outside of range $[v_j, w_j]$ for column $f_j$ from the tables containing column $f_j$. The set of points assigned to a given center can be found by adding and subtracting a laminar set of hyper-rectangles, where each hyper-rectangle can be represented by a relational database.

\medskip
\noindent \textbf{Weighting the Centers:} We have sampled a good set of cluster centers. To get a coreset, we need to assign weights to them. As we have already mentioned, assuming $P \ne \#P$, the weights cannot be computed relationally. In fact, they cannot be approximated up to any factor in polynomial time unless $P = NP$. Rather, we design an alternative relational algorithm for computing the weights.  Each weight will not be an approximate individually, but we prove that the weighted centers form an $O(1)$-approximate coreset in aggregate.

The main algorithmic idea is that for each center $c_i$ we generate a collection of hyperspheres around $c_i$ containing geometrically increasing numbers of points. The space is then partitioned using these hyperspheres where each partition contains a portion of points in $J$. Using the algorithm from   \cite{abokhamis2020approximate}, we then sample a poly-log sized collection of points from each partition, and use this subsample to estimate the fraction of the points in this partition which are closer to $c_i$ than any other center. The estimated weight of $c_i$ is aggregated accordingly. %\yuyan{I made it more vague and postponed more details until section 5.}
%The main  algorithmic  idea is that for each center $c_i$ we consider a collection $\{ S_{i,j} \}$ of hyperspheres  around $c_i$ where $S_{i, j}$ contains approximately $2^j$ points. Using the algorithm from   \cite{abokhamis2020approximate}, we approximately uniformly samples a poly-log sized collection of points from each $S_{i, j}$, and then increases the weight for $c_i$ by approximately $2^j$ times the fraction of the sample that are found to be in the outer half of the ball and to be closer to $c_i$ than any other center.  We  prove these  weights form a $O(1)$-approximate coreset.

\medskip
\noindent \textbf{Paper Organization:} As relational algorithms are relatively new, we begin with some special cases which help the reader build intuition.  In Section~\ref{subsec:warmup}
 we give a warm-up by showing how to implement $1$-means++ and $2$-means++ (i.e. initialization steps of $k$-means++). In this section, we also prove Theorem~\ref{thm:hardcount} as an example of the limits of relational algorithms. In Section~\ref{section:intro:background} we go over background on relational algorithms that our overall algorithm will leverage.  In Section~\ref{sec:kmeans++} we give the $k$-means++ algorithm via rejection sampling.  Section~\ref{sec:algoverview} shows an algorithm to construct the weights and then analyze this algorithm.  Many of the technical proofs appear in the appendix due to space.

%In this paper we address the relational learning question for two well-known
%algorithms for  the $k$-means clustering, namely: we consider are:
%\begin{itemize}
%    \item 
%The first algorithm is the $k$-means++ algorithm from \cite{DBLP:conf/soda/ArthurV07}.In the $k$-means++ algorithm the $k$ centers are picked iteratively from the collection of original points, where the probability that a point is picked as the next center is proportional to the squared distance to the closest previously picked center.
%\item The second algorithm is the adaptive $k$-means  algorithm from \cite{aggarwal2009adaptive}.   In this algorithm $k'$ centers are selected
% as in the $k$-means++ algorithm. Then each center is given a weight, which should  be interpreted as a multiplicity, equal to the number of original
%points that are closest to it. It is shown in \cite{aggarwal2009adaptive} that if  $k'$ is sufficiently large (say $k'= \Theta(k \log n)$) then this weighted
%instance is an $O(1)$-approximate sketch with high probability.  An \emph{$O(1)$-approximate sketch} is a collection of weighted points, where the weights are interpreted as multiplicities, with the property that any $O(1)$-approximate  solution on this weighted instance is also an $O(1)$-approximate solution on the original instance. 
%\end{itemize}
%So plan A is to find a relational implementation of each algorithm.
%And if plan A fails, plan B is to find a variation
%that is relationally implementable and that has similar performance guarantees to the standard algorithm.

\section{Warm-up: Efficiently Implementing 1-means++ and 2-means++}
\label{subsec:warmup}
%\textbf{Ben: IS a path an acyclic join?  SHould the edges be directed? }

This section is a warm-up to understand the combinatorial structure of relational data. We will show how to do $k$-means++ for $k\in \{1,2\}$ (referred to as 1- and 2-means++) on a simple join structure. We will also show the proof of Theorem~\ref{thm:hardcount} which states that counting the number of points in a cluster is a hard problem on relational data. 

First, let us consider relationally implementing
1-means++ and 2-means++.  For better illustration, we consider a special type of acyclic table structure named \textbf{path join}. The relational algorithm used will be generalized to work on more general join structures when we move to the full algorithm in Section \ref{sec:kmeans++}.

In a path join each table $T_i$ has two features/columns $f_i$, and $f_{i+1}$. Table $T_i$ and $T_{i+1}$ then share a common column $f_{i+1}$. Assume for simplicity that each table $T_i$ contains $n$ rows.
The design matrix  $J=T_1 \Join T_2 \Join \ldots \Join T_m$ 
has $d= m+1$ features, one for each feature (i.e. column) in the tables. %The rows of $J$ are exactly the data points of the $k$-means problem we wish to implement.

Even with this simple structure, the size of the design matrix $J$ could still be exponential in the size of database - $J$ could contain up to $n^{m/2}$ rows , and $d n^{m/2}$ entries. Thus the standard practice could require time and space $\Omega(mn^{m/2})$ in the worst case.

\begin{table}[h]
\label{table:intro}
\centering
\begin{tabular}{|c|c|}
\hline
\rowcolor[HTML]{FFFFC7} 
\multicolumn{2}{|c|}{\cellcolor[HTML]{FFFFC7}$T_1$} \\ \hline
\rowcolor[HTML]{FFFFC7} 
$f_1$                    & $f_2$                    \\ \hline
1                        & 1                        \\ \hline
2                        & 1                        \\ \hline
3                        & 2                        \\ \hline
4                        & 3                        \\ \hline
5                        & 4                        \\ \hline
\end{tabular}
\quad
\begin{tabular}{|c|c|}
\hline
\rowcolor[HTML]{FFFFC7} 
\multicolumn{2}{|c|}{\cellcolor[HTML]{FFFFC7}$T_2$} \\ \hline
\rowcolor[HTML]{FFFFC7} 
$f_2$                    & $f_3$                    \\ \hline
1                        & 1                        \\ \hline
1                        & 2                        \\ \hline
2                        & 3                        \\ \hline
5                        & 4                        \\ \hline
5                        & 5                        \\ \hline
\end{tabular}
\quad
\begin{tabular}{|c|c|c|}
\hline
\rowcolor[HTML]{FFFFC7} 
\multicolumn{3}{|c|}{\cellcolor[HTML]{FFFFC7}$J=T_1 \Join T_2$} \\ \hline
\rowcolor[HTML]{FFFFC7} 
$f_1$                    & $f_2$   & $f_3$                  \\ \hline
1                        & 1   & 1                      \\ \hline
1                        & 1   & 2                      \\ \hline
2                        & 1   & 1                      \\ \hline
2                        & 1    & 2                     \\ \hline
3                        & 2     & 3                    \\ \hline
\end{tabular}
\quad
\raisebox{-0.5\height}{\includegraphics[scale=0.2]{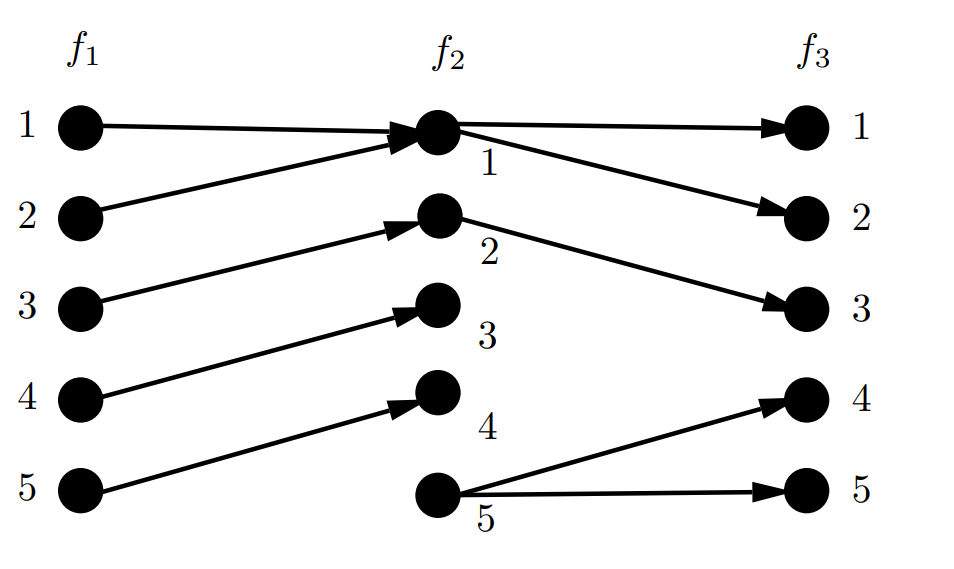}}
\caption{A path join instance where the two tables $T_1$ and $T_2$ have $m=2$ and $n=5$. This shows $T_1$, $T_2$, the design matrix $J$, and the resulting layered directed graph $G$.  \emph{Every} path from the left most layer to the right most layer of this graph $G$  corresponds to one data point for the clustering problem (i.e. a row of the design matrix).} \vspace{-.5cm}
\end{table}

\medskip
\noindent \textbf{Graph Illustration of the Design Matrix:} Conceptually consider a directed acyclic graph $G$, where there is one layer of nodes corresponding to each feature $f_i(i=1, \ldots, d)$, and edges only point from nodes in layer $f_i$ to layer $f_{i+1}$. 

The nodes in $G$ correspond to feature values, and edges in $G$ correspond to rows in tables. There is one vertex $v$ in layer $f_i$ for each value that appears in column $f_i$ in table $T_{i-1}$ or $T_i$, and one edge pointing from $u$ in layer $f_i$ to $v$ in layer $f_{i+1}$, if $(u,v)$ is a row in table $T_i$. Then, there is a one-to-one correspondence between \textbf{full paths} in $G$ (paths from layer $f_1$ to layer $f_d$) and rows in the design matrix.

\medskip
\noindent \textbf{A Relational Implementation of 1-means++:}
Implementing the 1-means++ algorithm is equivalent to \emph{generating a full path uniformly at random from $G$}. We generate this path by iteratively picking a row from table $T_1, \ldots, T_m$, corresponding to picking an arc pointing from layer $f_1$ to $f_2$, $f_2$ to $f_3$, ..., such that concatenating all picked rows (arcs) will give a point in $J$ (full path in $G$).

To sample a row from $T_1$, for every row $r \in T_1$, consider $r \Join J$, which is all rows in $J$ whose values in columns $(f_1, f_2)$ are equivalent to $r$. Let the function $F_1(r)$ denote the total number of rows in $r \Join J$. This is also the number of full paths passing arc $r$. Then, every $r \in T_1$ is sampled with probability $\frac{F_1(r)}{\sum_{r' \in T_1}F_1(r')}$, notice $\sum_{r' \in T_1}F_1(r')$ is the total number of full paths. Let the picked row be $r_1$.

After sampling $r_1$, we can conceptually throw away all other rows in $T_1$ and focus only on the rows in $J$ that uses $r_1$ to concatenate with rows from other tables (i.e., $r_1 \Join J$). For any row $r \in T_2$, let the function $F_2(r)$ denote the number of rows in $r \Join r_1 \Join J$, also equivalent to the total number of full paths passing arc $r_1$ and $r$. We sample every $r$ with probability $\frac{F_2(r)}{\sum_{r' \in T_2}F_2(r')}$. Notice that $\sum_{r' \in T_2}F_2(r')=F_1(r_1)$, the number of full paths passing arc $r_1$. Repeat this procedure until we have sampled a row in the last table $T_m$: for table $T_i$ and $r \in T_i$, assuming we have sampled $r_1, \ldots, r_{i-1}$ from $T_1, \ldots, T_{i-1}$ respectively, throw away all the other rows in previous tables and focus on $r_1\Join \ldots \Join r_{i-1} \Join J$. $F_i(r)$ is the number of rows in $r \Join r_1\Join \ldots \Join r_{i-1} \Join J$ and $r$ is sampled with probability proportional to $F_i(r)$. It is easy to verify that every full path is sampled uniformly.

For every table $T_i$ we need to find the function $F_i(\cdot)$ which is defined on all its rows. There are $m$ such functions. For each $F_i(\cdot)$, we can find all $F_i(r)$ values for $r \in T_i$ using a one-pass dynamic programming and then sample according to the values. Repeating this procedure $m$ rounds completes the sampling process. This gives a polynomial time algorithm. 

%In the sampling process, for a row $r \in T_i$ there is a function value $F_i(r)$ to compute. We note that in this special case $F_i(r)$ equals the number of paths that starts at arc $r$ and ends in layer $f_d$ in graph $G$, regardless of all previously sampled arcs/rows. Therefore, we can compute $F_i(r)$ for all $T_i$'s and $r \in T_i$ using a one-pass dynamic programming plus topological sorting of the rows in tables and then do the sampling. However, in other cases we may need to compute the $m$ functions $F_1(\cdot), F_2(\cdot), \ldots$ one by one while a row from $T_i$ is sampled after the function $F_i(\cdot)$ is computed. This still gives a polynomial time algorithm. \textbf{Ben: I don't understand the last sentence. Good up to their though and even if we cut the last sentence}

\medskip
\noindent \textbf{A Relational Implementation for 2-means++:}
Assume $x=(x_1, \ldots, x_d)$ is the first center sampled and now we want to sample the second center. By $k$-means++ principles, any row $r \in J$ is sampled with probability $\frac{\|r - x\|^2}{\sum_{r' \in J}\|r' - x\|^2}$. For a full path in $G$ corresponding to a row $r \in J$ we refer to $\|r - x\|^2$ as the \textbf{aggregated cost} over all $d$ nodes/features.

Similar to $1$-means++, we pick one row in each table from $T_1$ to $T_m$ and putting all the rows together gives us the sampled point. Assume we have sampled the rows $r_1, r_2, \ldots, r_{i-1}$ from the first $i-1$ tables and we focus on all full paths passing $r_1, \ldots, r_{i-1}$ (i.e., the new design matrix $r_1 \Join \ldots \Join r_{i-1} \Join J$). In $1$-means++, we compute $F_i(r)$ which is the total number of full paths passing arc $r_1, \ldots, r_{i-1}, r$ (i.e., $r \Join r_1 \Join \ldots \Join r_{i-1} \Join J$.) and sample $r \in T_i$ from a distribution normalized using $F_i(r)$ values. In $2$-means++, we define $F_i(r)$ to be the summation of aggregated costs over all full paths which pass arcs $r_1, \ldots, r_{i-1}, r$.  We sample $r \in T_i$ from a distribution normalized using $F_i(r)$ values. 

It is easy to verify the correctness. Again, each $F_i(\cdot)$ could be computed using a one-pass dynamic programming which gives the values for all rows in $T_i$ when we sample from $T_i$. This would involve $m$ rounds of such computations and give a polynomial algorithm. 

%\medskip \noindent \textbf{\#P-hardness of Relationally Computing the Weights for Two Centers:}

\subsection{Hardness of Relationally Computing the Weights:}
\label{subsection:hardness_computing_weights}
Here we prove Theorem~\ref{thm:hardcount}.  We will focus on showing that given a set of centers, counting the number of points in $J$ that is closest to any of them is $\#P$-hard. Due to space, see Appendix~\ref{section:ommited:proofs} for a proof of the other part of the theorem that it is hard to approximate the center weights for three centers. We prove $\#P$-Hardness by a reduction from the well known $\#P$-hard  Knapsack Counting problem. The input to the Knapsack Counting problem
consists of a set $W = \{w_1,\dots,w_h\}$ 
of nonnegative integer weights, and a nonnegative integer $L$. The output is the number of subsets of $W$ with aggregate weight at most $L$. To construct the relational instance, for each $i \in [h] $, we define the tables $T_{2i-1}$ and $T_{2i}$ as follows:
%$$T_{2h-1} = \begin{bmatrix} 0& 0 \\ 0 & w_i\end{bmatrix}\text{ and }T_{2h} = \begin{bmatrix} 0& 0 \\ w_i & 0\end{bmatrix}$$

%\begin{wrapfig}{l}{.5\textwidth}
%\begin{minipage}{.5\textwidth}
\begin{table}[h]
\label{table:nphard}
\centering
\begin{tabular}{|c|c|}
\hline
\rowcolor[HTML]{FFFFC7} 
\multicolumn{2}{|c|}{\cellcolor[HTML]{FFFFC7}$T_{2i-1}$} \\ \hline
\rowcolor[HTML]{FFFFC7} 
$f_{2i-1}$                    & $f_{2i}$                    \\ \hline
0                        & 0                        \\ \hline
0                        & $w_i$                       \\ \hline
\end{tabular}
\quad
\begin{tabular}{|c|c|}
\hline
\rowcolor[HTML]{FFFFC7} 
\multicolumn{2}{|c|}{\cellcolor[HTML]{FFFFC7}$T_{2i}$} \\ \hline
\rowcolor[HTML]{FFFFC7} 
$f_{2i}$                    & $f_{2i+1}$                    \\ \hline
0                        & 0                        \\ \hline
$w_i$                        & 0                        \\ \hline
\end{tabular}
\quad
%\caption{}
\end{table}
%\end{minipage}
%\end{wrapfig}

Let centers $c_1$ and $c_2$ be arbitrary points such that points closer to $c_1$ than $c_2$ are
those points $p$ for which $\sum_{i=1}^d p_i \le L$. Then there are $2^h$ rows in $J$, since $w_i$ can either be selected or not selected in feature $2i$. The weight of $c_1$ is the number of points
in $J$ closer to $c_1$ than $c_2$, which is in turn exactly the number of subsets of $W$ with total weight at most $L$.

\section{Related Work and Background}

\label{section:intro:background}
\medskip \noindent \textbf{Related Work on K-means:} Constant approximations are known for the $k$-means problem in the standard computational setting~\cite{LiS16,KanungoMNPSW04}.  Although the most commonly used algorithm in practice is a local search algorithm called Lloyd's algorithm, or sometimes confusingly just called  ``the k-means algorithm''. The $k$-means++ algorithm from \cite{DBLP:conf/soda/ArthurV07} is a $\Theta(\log k)$ approximation algorithm, and is commonly used in practice to seed Lloyd's algorithm. Some coreset construction methods have been used before to design algorithms for the $k$-means problem in other restricted access computational models, including steaming \cite{GuhaMMMO03,BravermanFLSY17}, and the MPC model \cite{EneIM11,BahmaniMVKV12}, as well as speeding up sequential methods \cite{MeyersonOP04,SohlerW18}.

\medskip \noindent \textbf{Relational Algorithms for Learning Problem:}  Training different machine learning models on relational data has been studied; however, many of the proposed algorithms are not efficient under our definition of a relational algorithm. 
It has been shown that using repeated patterns in the design matrix, linear regression, and factorization machines can be implemented \cite{rendle2013scaling} more efficiently.  \cite{Kumar:2015:LGL:2723372.2723713, SystemF, khamis2018ac} has improved the relational linear regression and factorization machines for different scenarios. A unified relational algorithm for problems such as linear regression, singular value decomposition and factorization machines proposed in \cite{IndatabaseLinearRegression}. Algorithms for training support vector machine  is studied in \cite{yangtowards,linearSVM}. In \cite{cheng2019nonlinear}, a relational algorithm is introduced for Independent Gaussian Mixture Models, and they have shown experimentally that this method will be faster than materializing the design matrix.

\medskip \noindent \textbf{Relational Algorithm Building Blocks:} 
In the path join scenario, the $1$- and $2$-means++ sampling methods introduced in subsection \ref{subsec:warmup} have similar procedures: starting with the first table $T_1$, iteratively evaluate some general function $F_i(\cdot)$ defined on all rows in the table $T_i$, sample one row $r_i$ according to the distribution normalized from $F_i(\cdot)$. The function $F_i(\cdot)$ for table $T_i$ is defined on the matrix $r_1 \Join \ldots \Join r_{i-1} \Join J$ where $J$ is the design matrix. This matrix is also the design matrix of a new relational database, constructed by throwing away all rows in previous tables apart from the sampled $r_1, \ldots, r_{i-1}$.

We can generalize the computation of $F_i(\cdot)$ functions into a broader class of queries that we know could be implemented efficiently on \emph{any} acyclic relational databases, namely \textbf{SumProd queries}.  See \cite{FAQ} for more details. In the following lemmas assume the relational database has tables $T_1, \ldots, T_m$ and their design matrix is $J$, let $n$ be the maximum number of rows in each table $T_i$, $m$ be the number of tables and $d$ be the number of columns in $J$. 

\begin{definition}
For the $j^{th}$ feature ($j \in [d]$) let  $q_j: \mathbb{R} \rightarrow S$ be an efficiently computable function that maps feature values to some set $S$. Let the binary operations $\oplus$ and $\otimes$ be any operators such that   $(S,\oplus,\otimes)$ forms a commutative semiring. The value of $\bigoplus_{x\in J}\bigotimes_{j \in [d]} q_j(x_j)$ is a SumProd query. 
\end{definition}

%\yuyan{We've been using $x_i$ to denote the value of $x$ in column $i$, we haven't defined $F$, is $F$ equivalent to $[d]$ then?}

 %The following was shown in \cite{FAQ}.

\begin{lemma}[\cite{FAQ}] \label{lem:sumprod}
Any SumProd query can be computed efficiently in time $O(md^2 n^{\text{fhtw}} \log(n))$ where fhtw is the fractional hypertree width of the database. For acyclic databases fhtw=1 so the running time is polynomial.
\end{lemma}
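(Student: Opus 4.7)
The plan is to exploit a fractional hypertree decomposition of width $\fhtw$ and reduce the SumProd query to bottom-up message passing on the decomposition tree. First I would construct such a decomposition $(T,\chi)$; in the acyclic case $\fhtw=1$ and $T$ can be taken to be the join tree of the database, with each bag $\chi(t)$ equal to the attribute set of one table. I then root $T$ arbitrarily and, for each non-root node $t$ with parent $p(t)$, define the separator $S_t := \chi(t)\cap\chi(p(t))$.

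Next, for each bag $t$ in bottom-up order, I would materialize the restricted join of the tables assigned to $\chi(t)$. By the AGM fractional edge cover bound combined with a worst-case optimal join algorithm such as Leapfrog Triejoin, this step costs $O(d\, n^{\fhtw}\log n)$ and produces at most $n^{\fhtw}$ output tuples. On top of this materialized relation I would compute a message $M_t: S_t \to S$ defined by
\[
M_t(s) \;:=\; \bigoplus_{y}\; \bigotimes_{j\in \mathrm{own}(t)} q_j(y_j)\;\otimes\; \bigotimes_{c \in \mathrm{children}(t)} M_c(y|_{S_c}),
\]
where the outer $\bigoplus$ ranges over extensions $y$ of $s$ to $\chi(t)$ appearing in the materialized relation, and $\mathrm{own}(t)$ is the set of attributes whose ``owner'' bag (say, the topmost bag containing the attribute) is $t$. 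The Running Intersection Property of the decomposition guarantees that each attribute has a unique owner, so every factor $q_j$ is applied exactly once across the whole computation. Correctness then follows from the commutative-semiring distributivity, which allows the outer $\bigoplus$ to be pushed past $\bigotimes$ across each separator without changing the aggregate.

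The final answer equals $M_{\mathrm{root}}(\emptyset)$. The per-bag cost is $O(d^2 n^{\fhtw}\log n)$: a $\log n$ factor for indexed access on separator tuples, one factor of $d$ for enumerating attributes within the bag, and another factor of $d$ for combining the children's messages with the per-attribute maps $q_j$. Summing over all $m$ bags yields the stated $O(md^2 n^{\fhtw}\log n)$ bound, which is polynomial in $n$, $m$, and $d$ in the acyclic case since $\fhtw=1$ so $n^{\fhtw}=n$.

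The main obstacle in turning this plan into a rigorous proof is twofold. First, justifying the semiring rearrangement in general requires the Running Intersection Property so that each attribute lives on a contiguous subtree of $T$ and can be marginalized at its owner bag without affecting factors elsewhere; commutativity of $\otimes$ is essential for reordering the product before distributivity is applied, and this is precisely what the ``commutative semiring'' hypothesis buys. Second, establishing the $n^{\fhtw}$ bound on each bag's intermediate relation reduces to invoking the AGM bound and pairing it with a worst-case optimal join subroutine, which is the technical heart of the argument in the cyclic case, but is essentially trivial in the acyclic case where each bag is already a single input table of size at most $n$.
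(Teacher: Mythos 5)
This lemma is not proved in the paper at all: it is imported verbatim from \cite{FAQ} (restated in the appendix as the Inside-Out theorem), and your sketch is a faithful reconstruction of exactly that standard argument — message passing over a (fractional) hypertree decomposition, AGM plus a worst-case optimal join to bound each bag by $n^{\fhtw}$, and semiring distributivity with the running intersection property for correctness. So your proposal is correct and takes essentially the same approach as the cited source; the only minor caveats are that the per-bag cost accounting (the two factors of $d$ and the $\log n$) is informal and that one must assume the decomposition itself is given or efficiently computable, which holds in the acyclic case via the GYO-style reduction.
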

%A function $q_f: \mathbb{R} \rightarrow S$ 
%for each feature $f \in F$ for some base set $S$. We generally assume each $q_f$ is easy to compute. 

 %Most imporantly this means
%that $\otimes$ distributes over $\oplus$. 

%Evaluating $Q$ results in the following
%element of the semiring:
%$$
%\bigoplus_{x\in J}\bigotimes_{f \in F} q_f(x_f)$$ where
%$x$ is a row in the design matrix and $x_f$ is
%the value for feature $f$ in that row.
% For example, if each $q_f(x_f) =1$,
%$\otimes$ is multiplication, and $\oplus$
%is addition, then the resulting SumProd query 
%computes $ \sum_{x\in J}\prod_{f \in F} 1$, which
%evaluates to the number of 
%rows in the design matrix $J$.

%For example, for $1$- and $2$-means++ in general acyclic relational databases, to compute every $F_i(\cdot)$ we evaluate a SumProd query. Conceptually think of removing all rows in $T_j$ apart from $r_j$ for $j \in [i-1]$. Evaluating a SumProd query on the new database will return all the $F_i(r)$ values for rows $r \in T_i$ as a vector. \textbf{ben: Is this a formal definition?  Doesn't feel quite right.}

%SumProd queries in different scenarios gives algorithm designers a useful black-box subroutine. The scenarios and derivation of the subroutines are thoroughly discussed in Appendix \ref{XXXappendix}. 

Despite the cumbersome formal definition of SumProd queries, below we list their key applications used in this paper.  With a little abuse of notation, throughout this paper we use $\Psi(n, d, m)$ to denote the worst-case time bound on any SumProd queries.

\begin{lemma}
\label{lem:box_computing}
Given a point $y \in \mathcal{R}^d$ and a hyper-rectangle $b = \{x \in \mathcal{R}^d: v_i \leq x_i \leq w_i, i=1, \ldots, d\}$ where $v$ and $w$ are constant vectors, we let $J \cap b$ denote the data points represented by rows of $J$ that also fall into $b$. Pick any table $T_j$. Using one single SumProd query we can compute for all $r \in T_j$ the value $\sum_{p \in r \Join J \cap b}\norm{p - y}_2^2$. The time required is at most that required by one SumProd query,  $\Psi(n, d, m)$,
\end{lemma}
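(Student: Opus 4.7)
The plan is to express the target quantity as a single SumProd query over a suitable commutative semiring, with the columns of $T_j$ treated as free (group-by) variables so that the FAQ algorithm of Lemma~\ref{lem:sumprod} produces, in a single pass, the desired scalar for every row $r \in T_j$. The two structural observations that make this possible are that $\norm{p-y}_2^2 = \sum_{i=1}^d (p_i-y_i)^2$ is additive across features and that the box indicator $\mathbbm{1}[p \in b] = \prod_{i=1}^d \mathbbm{1}[v_i \le p_i \le w_i]$ is multiplicative across features. Both need to be carried inside a single semiring element so that one product-of-features expression emits exactly the restricted cost.

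Concretely, I would work in the ``dual number'' semiring on $\mathbb{R}^2$ defined by $(a,b)\oplus(a',b') = (a+a',\, b+b')$ and $(a,b)\otimes(a',b') = (aa',\, ab' + a'b)$, i.e., $\mathbb{R}[\varepsilon]/(\varepsilon^2)$ written as pairs; a short algebraic check shows this is a commutative semiring. For each feature $i \in [d]$, I would set
$q_i(x_i) = \bigl(\mathbbm{1}[v_i \le x_i \le w_i],\ \mathbbm{1}[v_i \le x_i \le w_i]\cdot (x_i - y_i)^2\bigr).$
A straightforward induction on $d$ then gives
$\bigotimes_{i=1}^d q_i(x_i) = \Bigl(\prod_i \mathbbm{1}_i,\ \sum_i \mathbbm{1}_i (x_i-y_i)^2 \prod_{j\ne i}\mathbbm{1}_j\Bigr),$
whose second coordinate equals $\mathbbm{1}[x \in b]\cdot\norm{x-y}_2^2$ in all cases: if $x \in b$ every indicator is $1$ and the sum collapses to $\norm{x-y}_2^2$, and if $x \notin b$ the in-sum indicator annihilates the one factor missing from the outer product, so the second coordinate is $0$.

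Taking $\bigoplus_{x \in J}$ while leaving the columns of $T_j$ as free variables then yields, for each $r \in T_j$, a pair whose second coordinate is exactly $\sum_{p \in r \Join (J \cap b)} \norm{p-y}_2^2$. The FAQ framework underlying Lemma~\ref{lem:sumprod} supports this free-variable (group-by) extension within the same $\Psi(n,d,m)$ time budget, which yields the claimed bound. I expect the only delicate point to be the correctness argument for the second coordinate: the naive choice $q_i(x_i) = (\mathbbm{1}_i,(x_i-y_i)^2)$ fails when some but not all indicators vanish, so the indicator must be baked into both coordinates. Once this identity is verified, the invocation of Lemma~\ref{lem:sumprod} is immediate.
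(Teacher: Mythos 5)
Your proof is correct, and it rests on the same underlying semiring as the paper's: the pair semiring carrying (indicator/count, accumulated cost), which is the paper's Theorem~\ref{thm:sumsum:query} semiring $(a,b)\oplus(c,d)=(a+c,b+d)$, $(a,b)\otimes(c,d)=(ad+cb,bd)$ up to swapping coordinates. The one genuine difference is how the box constraint enters. The paper handles it \emph{outside} the query: since the constraint $v_i \le x_i \le w_i$ is per-column, it first deletes from each table the rows whose value in column $f_i$ falls outside $[v_i,w_i]$, so that the join of the filtered tables is exactly $J \cap b$, and then runs the plain SumSum query with $Q_f(x_f)=(x_f-y_f)^2$. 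You handle it \emph{inside} the semiring by baking the indicator $\mathbbm{1}[v_i\le x_i\le w_i]$ into both coordinates of $q_i$, and your identity $\bigotimes_i q_i(x_i) = \bigl(\prod_i \mathbbm{1}_i,\ \mathbbm{1}[x\in b]\cdot\norm{x-y}_2^2\bigr)$ checks out (indeed the second coordinate collapses to $(\prod_j\mathbbm{1}_j)\sum_i(x_i-y_i)^2$). Your correctly flagged observation that the naive $q_i=(\mathbbm{1}_i,(x_i-y_i)^2)$ fails when only some indicators vanish is exactly the point that makes the algebraic route slightly delicate; the paper's filtering route sidesteps it entirely at the cost of a trivial linear preprocessing pass. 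Both yield a single group-by SumProd query in $\Psi(n,d,m)$ time, and your appeal to the group-by extension of Lemma~\ref{lem:sumprod} is the same one the paper makes via Theorem~\ref{thm:insideout}.
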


Lemma \ref{lem:box_computing} is an immediate result of Theorem \ref{thm:sumsum:query} which you may find in Appendix \ref{sect:dbbackground} and the fact that we can efficiently represent all points from $J$ inside any hyper-rectangle by removing some entries in each table from the original database and taking the join of all tables. The following lemma follows by an application of the main result in \cite{abokhamis2020approximate}.  In Appendix \ref{sect:faqai} we formally show to apply their result to give the following lemma.

\begin{lemma}[\cite{abokhamis2020approximate}]\label{lem:ball_computing}
Given a hypersphere $\{x \in \mathcal{R}^d: \|x - y_0\|^2 \leq z_0^2\}$ where $y_0$ is a given point and $z_0$ is the radius, a $(1+\epsilon)$-approximation of the number of points in $J$ that lie inside this hypersphere could be computed in $O\left( \frac{m^6  \log^4 n}{\epsilon^2} \Psi(n, d, m) \right)$ time. 
\end{lemma}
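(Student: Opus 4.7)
The plan is to reduce approximate counting of points in the hypersphere to an application of the approximate aggregation framework of \cite{abokhamis2020approximate} (FAQ-AI), which answers functional aggregate queries in the presence of a single additive inequality constraint. The key observation is that the hypersphere condition is \emph{separable across features}: expanding $\|x - y_0\|^2 \leq z_0^2$ gives $\sum_{j=1}^d (x_j - y_{0,j})^2 \leq z_0^2$, which is an additive constraint of the form $\sum_{j=1}^d q_j(x_j) \leq C$ with $q_j(x_j) := (x_j - y_{0,j})^2$ and $C := z_0^2$. This is precisely the form of inequality that FAQ-AI is designed to handle.

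Formally, the target quantity is $\sum_{x \in J} \mathbf{1}\!\left[\sum_{j=1}^d (x_j - y_{0,j})^2 \leq z_0^2\right]$. I would install the per-feature function $q_j$ on the column corresponding to feature $j$ (each $q_j$ depends on a single column, so it can be applied inside any one of the tables containing that column) and then feed the resulting query into the FAQ-AI algorithm. Because exact evaluation of such an inequality-constrained query amounts to counting inside a semialgebraic region on the design matrix and is in general intractable, FAQ-AI returns a $(1+\epsilon)$-multiplicative approximation whose running time is polynomial in $m$, $\log n$, $1/\epsilon$, and the cost of the underlying unconstrained SumProd oracle.

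Next I would invoke the main theorem of \cite{abokhamis2020approximate} with these $q_j$. By Lemma~\ref{lem:sumprod}, the unconstrained SumProd oracle cost on this database is $\Psi(n,d,m)$. Substituting this into the FAQ-AI complexity yields the claimed $O\!\bigl(\frac{m^6 \log^4 n}{\epsilon^2}\, \Psi(n,d,m)\bigr)$ bound. Since the rewriting of the hypersphere constraint introduces no new interactions between columns, the structural parameters of the database (its acyclicity and fractional hypertree width in particular) are preserved, and the only overhead beyond one SumProd call is the polylogarithmic and $\epsilon^{-2}$ factors coming from the FAQ-AI approximation scheme itself.

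The main obstacle is translating the FAQ-AI guarantees precisely to our instance. I would need to verify that: (a) the per-feature penalties $q_j(x_j) = (x_j - y_{0,j})^2$ satisfy the input format required by \cite{abokhamis2020approximate} (non-negativity and efficient evaluation are clear); (b) the multiplicative approximation guarantee transfers to the counting objective (indicator-sum) rather than only to weighted sums, which typically requires a dyadic decomposition over the threshold followed by a union-bound over the $O(\log n)$ buckets; and (c) the exponents of $m$ and $\log n$ come out to $m^6$ and $\log^4 n$ respectively when the underlying oracle is the SumProd query of Lemma~\ref{lem:sumprod}. This bookkeeping — tracking the semiring structure through the FAQ-AI reduction and reading off the resulting polynomial factors — is exactly the content deferred to Appendix~\ref{sect:faqai}.
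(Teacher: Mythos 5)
Your proposal is correct and follows essentially the same route as the paper: rewrite the hypersphere membership test as the single additive inequality $\sum_{j}(x_j-y_{0,j})^2\le z_0^2$ over per-column functions and invoke the approximate inequality-constrained SumProd machinery of \cite{abokhamis2020approximate}, with the stated time bound read off from that work. The only cosmetic difference is in how you imagine the internals of the cited algorithm — the paper's Appendix~\ref{sect:faqai} realizes the $(1\pm\epsilon)$ count via a SumProd query over a multiset-convolution semiring whose partial results are compressed by $(1+\delta)$ geometric bucketing, rather than a dyadic decomposition over the threshold — but both descriptions defer the same bookkeeping to \cite{abokhamis2020approximate}.
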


Notice that a SumProd query could be used to output either a scalar (similar to Lemma \ref{lem:ball_computing}) or a vector whose entries are function values for every row $r$ in a chosen table $T_j$ (in Lemma \ref{lem:box_computing}). We say the SumProd query is \textbf{grouped by} $T_j$ in the latter case.  %Using these two results, one could design polynomial sampling schemes following specific distributions required by our algorithms. More details will be covered in Section \ref{sec:kmeans++} and \ref{sec:algoverview}.

\section{The $k$-means++ Algorithm} \label{sec:kmeans++}
In this section, we describe a relational implementation of the $k$-means++ algorithm. It is sufficient to explain how center $c_i$ is picked given the previous centers $c_1, \ldots, c_{i-1}$. Recall that the $k$-means++ algorithm picks a point $x$ to be $c_i$ with probability $P(x) = \frac{L(x)}{Y}$ where $L(x) = \min_{j \in [i-1]}  \norm{x-c_j}_2^2$ and $Y = \sum_{x \in J} L(x)$  is a normalizing constant. 

The implementation consists of two parts. The first part, described in Section \ref{subsubsect:boxconstruction}, shows how to partition the $d$-dimensional Euclidean space into a laminar set of hyper-rectangles (referred to as \textbf{boxes} hereafter) that are generated around the previous centers. The second part, described in Section \ref{subsubsect:Qsampling}, samples according to the ``hard'' distribution $P$ using rejection sampling and an ``easy'' distribution $Q$. 

Conceptually, we assign every point in the design matrix $J$ to an \emph{approximately} nearest center among $c_1, \ldots, c_{i-1}$. This is done by assigning every point in $J$ to one of the centers contained in the \emph{smallest} box this point belongs to. Then $Q$ is derived using the squared distance between the points in $J$ and their assigned centers. 

For  illustration, we show the special case of when $k=3$ in Appendix \ref{section:3means}.  We refer the reader to this section as a warm-up before reading the general algorithm below.

\subsection{Box Construction}
\label{subsubsect:boxconstruction}
Here we explain the algorithm for constructing a set of laminar boxes given the centers sampled previously.  The construction is completely combinatorial. It only uses the given centers and we don't need any relational operation for the construction.

\medskip
\noindent \textbf{Algorithm Description:} Assume we want to sample the $i^{th}$ point in $k$-means++. The algorithm maintains two collections $\cG_i$ and $\cB_i$ of tuples. Each tuple consists of a box and a point in that box, called the \textbf{representative} of the box. This point is one of the previously sampled centers. One can think of the tuples in $\cG_i$ as ``active'' ones that are subject to changes and those in $\cB_i$ as ``frozen'' ones that are finalized, thus removed from $\cG_i$ and added to $\cB_i$. When the algorithm terminates, $\cG_i$ will be empty, and the boxes in $\cB_i$ will be a laminar collection of boxes that we use to define the ``easy'' probability distribution $Q$. 

The initial tuples in $\cG_i$ consist of one \emph{unit hyper-cube} (side length is $1$) centered at each previous center $c_j$, $j \in [i-1]$, with its representative point  $c_j$. Up to  scaling of initial unit hyper-cubes, we can assume that initially no pair of boxes in $\cG_i$ intersect. This property of $\cG_i$ is maintained throughout the  process. Initially $\cB_i$ is empty. Over time, the implementation keeps growing the boxes in $\cG_i$ in size and moves tuples from $\cG_i$ to $\cB_i$. 

The algorithm repeats the following steps in rounds. At the beginning of each round, there is no intersection between any two boxes in $\cG_i$. The algorithm performs a doubling step where it \textbf{doubles} every box in $\cG_i$. Doubling a box means each of its $d-1$ dimensional face is moved twice as far away from its representative. Mathematically, a box whose representative point is $y \in \mathcal{R}^d$ may be written as $\{x\in \mathcal{R}^d: y_i - v_i \leq x_i \leq y_i + w_i, i=1, \ldots, d\}$ ($v_i,w_i>0$). This box becomes $\{x\in \mathcal{R}^d: y_i - 2v_i \leq x_i \leq y_i + 2w_i, i=1, \ldots, d\}$ after doubling. 

After doubling, the algorithm performs the following operations on intersecting boxes until there are none. The algorithm iteratively picks two arbitrary intersecting boxes from $\cG_i$. Say the boxes are $b_1$ with representative $y_1$ and $b_2$ with representative $y_2$. The algorithm executes a \textbf{melding} step on $(b_1, y_1)$ and $(b_2, y_2)$, which has the following procedures:
\begin{itemize}
    \item Compute the smallest box $b_3$ in the Euclidean space that contains both $b_1$ and $b_2$. 
    \item Add $(b_3, y_1)$ to $\cG_i$ and delete $(b_1, y_1)$ and $(b_2, y_2)$ from $\cG_i$.
    \item %If $b_1$ was created before (that is, $b_1$ was not melded with another box since the last doubling step),
    Check if $b_1$ (or $b_2$) is a box created by the doubling step at the beginning of the current round and hasn't been melded with other boxes ever since. If so, the algorithm computes a box $b_1'$ (resp. $b_2'$) from $b_1$ (resp. $b_2$) by \textbf{halving} it.  That is,  each $d-1$ dimensional face is moved so that its distance to the box's representative is halved. Mathematically, a box $\{x\in \mathcal{R}^d: y_i - v_i \leq x_i \leq y_i + w_i, i=1, \ldots, d\}$ ($v_i,w_i>0$), where vector $y$ is its representative, becomes $\{x\in \mathcal{R}^d: y_i - \frac{1}{2}v_i \leq x_i \leq y_i + \frac{1}{2}w_i, i=1, \ldots, d\}$ after halving. Then $(b_1', y_1)$ (or $(b_2', y_2)$) is added to $\cB_i$. Otherwise do nothing.
    %\item If $b_2$ was created before the last doubling step, then the implementation computes a box $b_2'$ from $b_2$  by halving, which means that each $d-1$ dimensional simplex is moved so that its distance to the box's representative is halved. Then $(b_2', r_2)$ is added to $\cB_i$.
\end{itemize}
Notice that  melding decreases the size of $\cG_i$.

The algorithm terminates when there is  one tuple $(b_0, y_0)$ left in $\cG_i$, at which point the algorithm adds a box that contains the whole  space with representative $y_0$ to $\cB_i$. Note that during each round of the doubling and melding, the boxes which are added to $\cB_i$ are the ones that after doubling were melded with other boxes, and they are added at their shapes before the doubling step. % Pseudo-code for this algorithm can be found in the appendix. 

\begin{lemma}
\label{lemma:box:intersection}
The collection of boxes in $\cB_i$ constructed by the above algorithm is laminar.
\end{lemma}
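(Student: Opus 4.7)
The plan is to establish two round-level invariants and then perform a short case analysis. The invariants are: (I) immediately before the doubling step of every round, the boxes in $\cG_i$ are pairwise disjoint---this is precisely the property the excerpt asserts is maintained, with the base case supplied by the initial scaled unit cubes and the inductive step by the fact that the melding loop terminates only when no two boxes in $\cG_i$ intersect; and (II) if $b \in \cG_i$ at the start of round $t$, then for every later round $t' \ge t$ (before termination) there is a \emph{descendant} $b^{(t')} \in \cG_i$ at the start of round $t'$ with $b^{(t')} \supseteq b$. Invariant (II) follows from a single-round argument: in each round the lineage of $b$ is either never involved in a meld (in which case it persists as $2b \supseteq b$) or is consumed in one or more melds, each of which replaces the current box with the smallest box enclosing it together with another box of $\cG_i$. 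Both doubling and smallest-enclosing-box melding are superset operations, so containment is preserved step by step.

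The key observation linking $\cB_i$ back to $\cG_i$ is that every box $B$ inserted into $\cB_i$ during round $t$ is the halving of a freshly doubled box, and therefore equals some box that was in $\cG_i$ at the start of round $t$ (before the round's doubling step). Now consider two boxes $B_1, B_2 \in \cB_i$ inserted in rounds $t_1 \le t_2$. If $t_1 = t_2$, both $B_1$ and $B_2$ belong to $\cG_i$ at the start of round $t_1$, and invariant (I) immediately yields $B_1 \cap B_2 = \emptyset$. If $t_1 < t_2$, invariant (II) supplies a descendant $B_1^{(t_2)} \in \cG_i$ at the start of round $t_2$ with $B_1^{(t_2)} \supseteq B_1$, while $B_2$ itself belongs to $\cG_i$ at the start of round $t_2$. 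Invariant (I) applied at round $t_2$ forces either $B_2 = B_1^{(t_2)}$ (giving $B_2 \supseteq B_1$) or $B_2 \cap B_1^{(t_2)} = \emptyset$ (giving $B_2 \cap B_1 = \emptyset$). In every situation the pair is laminar. The final box, added at termination and containing all of $\mathbb{R}^d$, contains every other element of $\cB_i$ and so trivially preserves laminarity.

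The main delicacy is a clean formulation of invariant (II), because within a single round a box's lineage may be absorbed into the result of several successive melds, and distinct lineages may even merge into a common descendant. The remedy is to define the descendant relation one step at a time (the unique box in $\cG_i$ that replaces a given box after it is doubled, or after it is consumed in a meld) and then take the transitive closure across rounds. Since every one-step transition is a superset, the containment chain $b \subseteq b^{(t+1)} \subseteq b^{(t+2)} \subseteq \cdots$ exists whenever needed, and the case analysis above goes through by a straightforward induction on the round index.
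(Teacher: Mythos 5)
Your proof is correct and follows essentially the same route as the paper's: the paper likewise combines the disjointness of $\cG_i$ before each doubling step with an induction showing that every box already in $\cB_i$ remains contained in some box of $\cG_i$ (your invariant (II) with the descendant chain), and concludes laminarity from the fact that boxes added to $\cB_i$ are exactly the pre-doubling shapes of boxes in $\cG_i$. Your version merely spells out the final two-round case analysis more explicitly than the paper does.
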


\begin{proof}

Note that right before each doubling step, the boxes in $\cG_i$ are disjoint and that is because the algorithm in the previous iteration melds all the boxes that have intersection with each other. We prove by induction that at all time, for every box $b$ in $\cB_i$ there exist a box $b'$ in $\cG_i$ such that $b \subseteq b'$. 
Since the boxes added to $\cB_i$ in each iteration are a subset of the boxes in $\cG_i$ before the doubling step and they do not intersect each other, laminarity of $\cB_i$ is a straight-forward consequence. 

% Now we prove that in all time for every box $b$ in $\cB_i$ there exist a box $b'$ in $\cG_i$ such that $b \subseteq b'$. 
Initially $\cB_i$ is empty and therefore the claim holds. Assume in some arbitrary iteration $\ell$ this claim holds right before the doubling step, then after the doubling step since every box in $\cG_i$ still covers all of the area it was covering before getting doubled, the claim holds. Furthermore, in the melding step every box $b_3$ that is resulted from melding of two boxes $b_1$ and $b_2$ covers both $b_1$ and $b_2$; therefore, $b_3$ will cover $b_1$ and $b_2$ if they are added to $\cB_i$, and if a box in $\cB_i$ was covered by either of $b_1$ or $b_2$, it will be still covered by $b_3$.
\end{proof}

The collection of boxes in $\cB_i$ can be thought of as a tree where every node corresponds to a box. The root node is the entire space. In this tree, for any box $b'$, among all boxes included by $b'$, we pick the inclusion-wise \emph{maximal} boxes and let them be the \textbf{children} of $b'$. Thus the number of boxes in $\cB_i$ is $O(i)$ since the tree has $i$ leaves, one for each center.

\subsection{Sampling}
\label{subsubsect:Qsampling}

To define our easy distribution $Q$, for any point $x \in J$, let $b(x)$ be the minimal box in $\cB_i$ that contains $x$ and $y(x)$ be the representative of $b(x)$. Define $R(x) = \norm{x-y(x)}_2^2$,
and $Q(x) = \frac{R(X)}{Z}$ where $Z = \sum_{x \in J} R(x)$ normalizes the distribution. We call $R(x)$ the \textbf{assignment cost} for $x$. We will show how to sample from target distribution $P(\cdot)$ using $Q(\cdot)$ and rejection sampling, and how to implement the this designed sampling step relationally.

\medskip 
\noindent \textbf{Rejection Sampling:}
The algorithm repeatedly samples a point $x$ with probability $Q(x)$, then either (A) rejects $x$ and resamples, or (B) accepts $x$ as the next center $c_i$ and finishes the sampling process. After sampling $x$, the probability of accepting $x$ is $\frac{ L(x)}{R(x)}$, and that of rejecting $x$ is $1-\frac{L(x)}{R(x)}$. Notice that here $\frac{L(x)}{R(x)} \leq 1$ since $R(x) = \norm{x-y(x)}_2^2 \geq \min_{j \in [i-1]}\norm{x-c_j}_2^2$.

If $S(x)$ is the the event of initially sampling $x$ from distribution $Q$, and $A(x)$ is the event of subsequently accepting $x$, %we can calculate the probability of accepting $x$ in a particular round using the standard rejection sampling calculation:
the probability of choosing $x$ to be $c_i$ in one given round is:
\begin{align*}
    \Pr[S(x) \text{ and } A(x)] &= \Pr[A(x) \mid S(x)] \Pr[S(x)]
   = \frac{{L(x)} }{R(x)} Q(x)
  = \frac{L(x)}{Z}
\end{align*}
Thus the probability of $x$ being the accepted sample is proportional to $L(x)$, as desired.

We would like $Q(\cdot)$ to be close to $P(\cdot)$ point-wise so that the algorithm is efficient. Otherwise, the acceptance probability $\frac{L(x)}{R(x)}$ is low and it might keep rejecting samples.

\medskip 
\noindent \textbf{Relational Implementation of Sampling:}
We now explain how to relationally sample a point $x$ with probability $Q(x)$. The implementation heavily leverages Lemma \ref{lem:box_computing}, which states for given box $b^*$ with representative $y^*$, the cost of assigning all points in $r \Join J \cap b^*$ to $y^*$ for each row $r \in T_i$ can be computed in polynomial time using a SumProd query grouped by $T_i$.  Recall that we assign all points in $J$ to the representative of the smallest box they belong to. We show that the total assignment cost is computed by evaluating SumProd queries on the boxes and then adding/subtracting the query values for different boxes.

Following the intuition provided in Section \ref{subsec:warmup}, the implementation generates a single row from table $T_1, T_2, \ldots, T_m$ sequentially. The concatenation of these rows (or the join of them) gives the sampled point $x$. It is sufficient to explain assuming we have sampled $r_1, \ldots, r_{\ell-1}$ from the first $\ell-1$ tables, how to implement the generation of a row from the next table $T_\ell$. Just like $1$- and $2$-means++ in subsection \ref{subsec:warmup}, the algorithm evaluates a function $F_\ell(\cdot)$ defined on rows in $T_\ell$ using SumProd queries, and samples $r$ with probability $\frac{F_\ell(r)}{\sum_{r' \in T_\ell}F_\ell(r')}$. Again, we focus on $r_1 \Join \ldots \Join r_{\ell-1} \Join J$, denoting the points in $J$ that uses the previously sampled rows. The value of $F_\ell(r)$ is determined by points in $r \Join r_1 \Join \ldots \Join r_{\ell-1} \Join J$. %Pseudo-code for this implementation can be found in Appendix \ref{sect:pseudo-code}.

%To generate a row from $T_\ell$, the implementation iteratively computes a vector $C^\ell$ that has one entry $C^\ell_r$ for each row $r$ of $T_\ell$. Once $C_\ell$ is computed,  a row $r$ is selected from $T_\ell$ with probability proportional to $C^\ell_r$. Next, $T_\ell$ is replaced by a table with a single row $r$ and the algorithm recurses on the next table $T_{\ell+1}$.  Pseudo-code for this implementation can be found in Appendix \ref{sect:pseudo-code}.

%To ensure we generate a row according to the correct distribution, we need that $C^\ell_r$ to be the following.  Consider all points $X(r, \ell)$ that contain the $r$th entry in table  $\ell$ \footnote{Note tables 1 to $\ell-1$ are replaced by the single entry we have sampled}. Define $C^\ell_r = \sum_{x \in X(r, \ell)} R(x)$ will be the summation of the squared distance to their assigned centers.  We now discuss how to compute this relationally.  Initially $C^\ell$ is the all zeros vector.

To ensure we generate a row according to the correct distribution $Q$, we define the function $F_\ell(\cdot)$ as follows. Let $F_\ell(r)$ be the total assignment cost of all points in $r \Join r_1 \Join \ldots \Join r_{\ell - 1} \Join J$. That is, $F_\ell(r) = \sum_{x \in r \Join r_1 \Join \ldots \Join r_{\ell - 1} \Join J}R(x)$. Notice that the definition of function $F_\ell(\cdot)$ is very similar to $2$-means++ apart from that each point is no longer assigned to a given center, but the representative of the smallest box containing it.

Let $G(r, b^*, y^*)$ denote the cost of assigning all points from $r \Join r_1\Join \ldots \Join r_{\ell-1} \Join J$ that lies in box $b^*$ to a center $y^*$. By replacing the $J$ in Lemma \ref{lem:box_computing} by $r_1 \Join \ldots \Join r_{\ell-1} \Join J$, we can compute all $G(r, b^*, y^*)$ values in polynomial time using one SumProd query grouped by $T_\ell$. The value $F_\ell(r)$ can be expanded into subtraction and addition of $G(r, b^*, y^*)$ terms. The expansion is recursive. For a box $b_0$, let $H(r, b_0) = \sum_{x \in r \Join r_1\Join \ldots \Join r_{\ell-1} \Join J \cap b_0} R(x)$. Notice that $F_\ell(r) = H(r, b_0)$ if $b_0$ is the entire Euclidean space. Pick any row $r \in T_\ell$. Assume we want to compute $H(r, b_0)$ for some tuple $(b_0, y_0)\in \cB_i$. 

Recall that the set of boxes in $\cB_i$ forms a tree structure. If $b_0$ has no children this is the base case - $H(r, b_0) = G(r, b_0, y_0)$ by definition since all points in $b_0$ must be assigned to $y_0$. Otherwise, let $(b_1, y_1), \ldots, (b_q, y_q)$ be the tuples in $\cB_i$ where $b_1, \ldots, b_q$ are children of $b_0$. Notice that, by definition all points in $b_0 \setminus (\bigcup_{j \in [q]}b_j)$ is assigned to $y_0$. Then, one can check that the following equation holds for any $r$:
$$H(r, b_0) = G(r, b_0, y_0) - \sum_{j \in [q]}G(r, b_j, y_0) + \sum_{j \in [q]}H(r, b_j)$$
Starting with setting $b_0$ as the entire Euclidean space, the equation above could be used to recursively expand $H(\cdot, b_0)=F_\ell(\cdot)$ into addition and subtraction of $O(|\cB_i|)$ number of $G(\cdot, \cdot, \cdot)$ terms, where each term could be computed with one SumProd query by Lemma \ref{lem:box_computing}.

%The algorithm is recursive and each recursive call begins with a box in $b$ in $\cB_i$ with representative $r$. Initially this is the whole Euclidean space. First in the recursive call, $C^\ell$ is incremented by the total 2-norm squared distances of points in $b$ to $r$. As described in Section~\ref{section:intro:background} this be computed by a SumProd query with box constraint $b$ and grouped by $T_\ell$. Some of these points are assigned to $r$ and some are not, so this is over counting.  Next we subtract the over counting and recurse on the boxes contained in $b$ and the points assigned to their respective centers. 

%Let $(b_1, r_1), \ldots (b_h, r_h)$ be the children of $(b, r)$ in the laminar decomposition $\cB_i$. If no such boxes exists, then this is a base case of the recursion, and no further action is taken. Otherwise for each $j \in [h]$, $C^\ell$ is decremented by a the total 2-norm squared distances of points contained in $b_j$ from $r_j$.  This removes the over counting as these points are assigned to centers associated with boxes contained in $b$. This can be  computed by a SumProd query with box constraint $b_j$ grouped by table $T_\ell$. Then the implementation recurses on each $(b_j, r_j)$ for $j \in [h]$. 

\medskip
\noindent \textbf{Runtime Analysis of the Sampling:} We now discuss the running time of the sampling algorithm simulating $k$-means++.
%The proof of the next two lemmas can be found in the appendix.  
These lemmas show how close the probability distribution we compute is as compared to the $k$-means++ distribution.  This will help bound the running time.

%This lemma bounds the distance of points to their centers. 

\begin{lemma}
\label{lem:newbox_side_lengths}
Consider the box construction algorithm when sampling the $i^{th}$ point in the $k$-means++ simulation. Consider the end of the $j^{th}$ round where all melding is finished but the boxes have not been doubled yet. Let $b$ be an arbitrary box in $\cG_i$ and $h(b)$ be the number of centers in $b$ at this time. Let $c_a$ be an arbitrary one of these $h(b)$ centers.
Then:
\renewcommand{\labelenumi}{\Alph{enumi}.}
\begin{enumerate}
    \item 
    The distance from $c_a$ to any $d-1$ dimensional face of $b$ is at least $2^j$.
    \item
    The length of each side of $b$ is at most $ h(b) \cdot 2^{j+1}$.
\end{enumerate}
\end{lemma}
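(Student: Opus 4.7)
I plan to prove Properties A and B simultaneously by induction on $j$, where one round consists of the doubling step followed by all of the meldings needed to eliminate intersections. The invariants I carry are (A) every center in a box $b \in \cG_i$ lies at distance $\ge 2^j$ from every $d-1$-dimensional face of $b$, and (B) every side of $b$ has length $\le h(b)\cdot 2^{j+1}$; note that specializing (A) to $c_a = y$ (the representative of $b$) says the two half-widths $v_k$ and $w_k$ in each coordinate $k$ are each at least $2^j$, a fact I will use throughout.

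The inductive step rests on two geometric sublemmas. Sublemma 1 (doubling): for any center $c_a \in b$, the distance from $c_a$ to, say, the left face of $b$ in coordinate $k$ changes from $(c_a)_k - (y_k - v_k)$ to
\[
(c_a)_k - (y_k - 2v_k) \;=\; \big[(c_a)_k - (y_k - v_k)\big] + v_k,
\]
i.e., the old distance plus $v_k$; by the inductive hypothesis the first summand is $\ge 2^{j-1}$ (applied to $c_a$) and the second is $\ge 2^{j-1}$ (applied to $y$ itself), so the new distance is $\ge 2^j$, symmetrically for the right face and for every coordinate. Sublemma 2 (melding): when intersecting boxes $b_1, b_2$ are replaced by their smallest enclosing box $b_3$, the range of $b_3$ in each coordinate is the union of the overlapping ranges of $b_1, b_2$, so its length is at most $\mathrm{side}(b_1) + \mathrm{side}(b_2)$; simultaneously, every center previously inside $b_1 \cup b_2$ now lies at distance to every face of $b_3$ at least what it had in its component box, because $b_3 \supseteq b_1$ and $b_3 \supseteq b_2$.

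Combining these yields the induction. At the start of round $j$ the doubling step lifts (A) from level $2^{j-1}$ to level $2^j$ by Sublemma 1 and scales (B) from $h(b)\cdot 2^j$ to $h(b)\cdot 2^{j+1}$ by direct calculation. Subsequent meldings preserve (A) by the distance-monotonicity half of Sublemma 2, and preserve (B) by a straightforward induction on the sequence of pairwise merges: if doubled boxes $b_1, \ldots, b_t$ are iteratively melded into a single $b^*$, then repeated application of the size-subadditivity half of Sublemma 2 gives $\mathrm{side}(b^*) \le \sum_\ell \mathrm{side}(b_\ell) \le \sum_\ell h(b_\ell) \cdot 2^{j+1} = h(b^*) \cdot 2^{j+1}$ in each coordinate. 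The base case is handled by using the freedom to scale the initial unit hyper-cubes so that they are pairwise disjoint and the two invariants hold at the initial configuration; any constant drift between the nominal ``unit'' side and the stated bound is absorbed into this initial rescaling.

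The main obstacle is propagating (A) to non-representative centers during doubling: a box in $\cG_i$ can, after earlier melding rounds, contain several centers in highly asymmetric positions relative to the representative $y$, whereas doubling is defined around $y$. The telescoping identity in Sublemma 1 is precisely what transfers the doubling boost uniformly to every interior center, and it is the only place in the argument where the inductive hypothesis must be invoked twice — once for $c_a$ to bound the old distance, and once for $y$ to bound the half-width $v_k$. Everything else is either a trivial scaling (doubling's effect on sides) or a monotone geometric inclusion (melding's effect on both distances and sides).
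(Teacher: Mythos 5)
Your overall strategy is the paper's: induct on rounds, observe that doubling doubles side lengths while melding is subadditive in each coordinate, and get part A from monotonicity of face distances under melding plus a ``doubling boost.'' Your Sublemma~1, with the telescoping identity $(c_a)_k-(y_k-2v_k)=[(c_a)_k-(y_k-v_k)]+v_k$ and the double invocation of the inductive hypothesis (once at $c_a$, once at the representative $y$), is in fact more explicit than the paper, which simply asserts part A as ``a direct consequence of the definition of doubling and melding.'' That part of your argument is sound.

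There is, however, one step that fails as written: the final equality in $\mathrm{side}(b^*) \le \sum_\ell \mathrm{side}(b_\ell) \le \sum_\ell h(b_\ell)\cdot 2^{j+1} = h(b^*)\cdot 2^{j+1}$. You define $h(\cdot)$ throughout as the number of centers \emph{geometrically contained} in a box, but after the doubling step the boxes $b_1,\dots,b_t$ being melded can overlap, and a single center can lie in several of them (e.g.\ in one dimension, a large box doubled about an off-center representative can swallow a center belonging to a small neighboring box). Then $\sum_\ell h(b_\ell)$ double-counts that center and can strictly exceed $h(b^*)$, so the chain delivers $\mathrm{side}(b^*)\le \big(\sum_\ell h(b_\ell)\big)2^{j+1}$, which is weaker than the claimed $h(b^*)\cdot 2^{j+1}$. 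This is exactly the point where the paper introduces its bookkeeping device: it assigns to each box the centers it contained \emph{just before the doubling step} (when the boxes in $\cG_i$ are pairwise disjoint, so these assigned sets partition the centers), proves the side bound against the assigned count $h'(b)$, and only at the end identifies $h'(b)=h(b)$ using disjointness at the end of the round. The repair to your argument is small --- replace $h(b_\ell)$ in the middle of the chain by the center count of the pre-doubled $b_\ell$, whose disjointness makes the sum a genuine lower bound on $h(b^*)$ --- but without some such device the displayed equality is unjustified and the induction for part B does not close.
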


\begin{proof}

The first statement is a direct consequence of the definition of doubling and melding since at any point of time the distance of all the centers in a box is at least $2^j$. To prove the second statement, we define the assignment of the centers to the boxes as following. Consider the centers inside each box $b$ right before the doubling step. We call these centers, the centers assigned to $b$ and denote the number of them by $h'(b)$. When two boxes $b_1$ and $b_2$ are melding into box $b_3$, we assign their assigned centers to $b_3$.

We prove each side length of $b$ is at most $h'(b) 2^{j+1}$ by induction on the number $j$ of executed doubling steps. Since $h'(b) = h(b)$ right before each doubling, this will prove the second statement. The statement is obvious in the base case, $j=0$. 
The statement also obviously holds by induction
 after  a doubling step as $j$ is incremented and
 the side lengths double and the number of assigned boxes don't change. It also holds during every meld step because each
 side length of the newly created larger box is at most the
 aggregate maximum side lengths of the smaller boxes that are moved to $\cB_i$, and
 the number of assigned centers in the newly
 created larger box is
 the aggregate of the assigned centers in the two smaller boxes that are moved to $\cB_i$. Note that since for any box $b$ all the assigned centers to $b$ are inside $b$ at all times, $h'(b)$ is the number of centers inside $b$ before the next doubling.
\end{proof}

This lemma bounds the difference of the two probability distributions.
\begin{lemma}
\label{lemma:box:boundary}Consider the box generation algorithm when sampling the $i$th point in the $k$-means++ simulation.
For all points $x$, 
$R(x) \leq  O(i^2 d)\cdot L(x)$.
\end{lemma}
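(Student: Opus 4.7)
The plan is to pivot on the round $\ell^*$, defined as the smallest nonnegative integer such that $x$ lies inside some box of $\cG_i$ at the end of round $\ell^*$; by construction, $x$ lies outside every box of $\cG_i$ at the end of round $\ell^*-1$ whenever $\ell^* \ge 1$. Let $\hat B_\ell$ denote the (unique) box of $\cG_i$ that contains $x$ at the end of round $\ell$ (for $\ell \ge \ell^*$), and let $r$ be the round in which $b(x)$ is added to $\cB_i$, so that $b(x)$'s shape equals $\hat B_{r-1}$. I treat $\ell^* \ge 1$ first and handle $\ell^* = 0$ at the end.

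For the lower bound on $L(x)$: at the end of round $\ell^*-1$, every previously sampled center $c$ lies inside some box of $\cG_i$, and by Lemma \ref{lem:newbox_side_lengths} the $\ell_\infty$-ball of radius $2^{\ell^*-1}$ centered at $c$ is contained in that box. Since $x$ lies outside every box of $\cG_i$, it lies outside this ball, so $\norm{x - c}_\infty > 2^{\ell^*-1}$. Minimizing over $c$ gives $L(x) > 4^{\ell^*-1}$.

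For the upper bound on $R(x)$, the key claim is that $\hat B_\ell$ undergoes no melding in rounds $\ell^*+1, \ldots, r-1$, so that $\hat B_{r-1} = b(x)$ is obtained from $\hat B_{\ell^*}$ purely by successive doubling. For if $\hat B_\ell$ did meld in some round $\ell+1 \le r-1$, then the halving step of the algorithm would add the pre-doubled shape of $\hat B_\ell$ to $\cB_i$; this pre-doubled shape still contains $x$ yet is strictly smaller than $b(x)$, contradicting the minimality of $b(x)$ in $\cB_i$. Consequently the representative of $b(x)$ coincides with that of $\hat B_{\ell^*}$, so $y(x) \in \hat B_{\ell^*}$; together with $x \in \hat B_{\ell^*}$, Lemma \ref{lem:newbox_side_lengths} now bounds $\norm{x - y(x)}_\infty \le h \cdot 2^{\ell^*+1}$, where $h \le i-1$ is the number of centers in $\hat B_{\ell^*}$. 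Hence $R(x) \le d \cdot h^2 \cdot 4^{\ell^*+1}$, and dividing by the lower bound on $L(x)$ yields $R(x)/L(x) \le 16\,d\,h^2 = O(i^2 d)$.

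For the edge case $\ell^* = 0$, the point $x$ lies in the initial unit cube of some center $c_0$, and the same non-melding argument gives $y(x) = c_0$; moreover $\norm{x - c_0}_\infty \le 1/2$ while every other sampled center $c$ satisfies $\norm{c - c_0}_\infty \ge 1$ (by the initial scaling that keeps the unit cubes disjoint), so $\norm{x - c}_\infty \ge 1/2$ by the triangle inequality. Thus $L(x) \ge \min(R(x), 1/4)$ and $R(x) \le d/4$, giving $R(x)/L(x) \le d = O(i^2 d)$ as required. I expect the main subtlety to be the non-melding argument in the third paragraph: the naive bound $\norm{x - y(x)}_\infty \le h \cdot 2^r$ obtained from the full side length of $b(x)$ is far too loose to pair with the lower bound on $L(x)$ (which depends on $\ell^*$, not $r$) when $r \gg \ell^*$; invoking the minimality of $b(x)$ to pin $x$'s $\cG_i$-trajectory down to the much smaller $\hat B_{\ell^*}$ is exactly what eliminates the $4^{r - \ell^*}$ slack.
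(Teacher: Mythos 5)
Your proof is correct and follows essentially the same route as the paper's: pivot on the first round at which $x$ enters a box of $\cG_i$, lower-bound $L(x)$ via part A of Lemma \ref{lem:newbox_side_lengths} applied one round earlier, and upper-bound $R(x)$ via part B at the entry round after arguing that the representative of the minimal $\cB_i$-box containing $x$ is already a center of the entry-round box. Your explicit non-melding argument (via minimality of $b(x)$) and the $\ell^*=0$ edge case are just more careful renderings of steps the paper states tersely.
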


\begin{proof}

Consider an arbitrary point $x$. 
Let $c_\ell$, $\ell \in [i-1]$, be the center
that is closest to $x$ under the 2-norm distance. Assume $j$ is minimal 
such that just before the $(j+1)$-th doubling round, $x$ is contained in a box $b$ in $\cG_i$. 
We argue about the state of the algorithm
at two times, the time $s$ just
before doubling round $j$ and the time $t$ just
before doubling round $j+1$. 
Let $b$ be a minimal box in $\cG_i$ that contains $x$
at time $t$, and
let $y$ be the representative for box $b$. Notice that we assign $x$ to the representative of the smallest box in $\cB_i$ that contains it, so $x$ will be assigned to $y$. Indeed, none of the boxes added into $\cB_i$ before time $t$ contains $x$ by the minimality of $j$, and when box $b$ gets added into $\cB_i$ (potentially after a few more doubling rounds) it still has the same representative $y$.
By Lemma \ref{lem:newbox_side_lengths} the
squared distance from from $x$ to $r$ is at most
$(i-1)^2 d 2^{2j+2}$. So it is sufficient 
to show that the squared distance from $x$ to $c_\ell$
is $\Omega(2^j)$.
% To accomplish this, we break the proof into two cases. 
% In the first case assume that $c_\ell$ was outside of 
% $b$ at time $t$. Let $b'$ be the box in
% $\cG_i$ that contains $c_\ell$ at time $t$. 
% Then by Lemma \ref{lem:newbox_side_lengths}
% the distance from $c$ to the edge of $b'$
% at time $t$ is at least $2^{2j}$, and hence the distance
% from $c_\ell$ to $x$ is also at least $2^{2j}$
% as $b$ and $b'$ don't intersect at time $t$. 

Let $b'$ be the box in
$\cG_i$ that contains $c_\ell$ at time $s$. Note that $x$ could not have been
inside $b'$ at time $s$ by the definition of $t$ and $s$.
% (thisincludes the possibility that $b$ hadn't yet beencreated at time $s$). 
Then by Lemma \ref{lem:newbox_side_lengths}
the distance from $c_\ell$ to the edge of $b'$
at time $t$ is at least $2^{2j-2}$, and hence the distance
from $c_\ell$ to $x$ is also at least $2^{2j-2}$
as $x$ is outside of $b'$. 
\end{proof}

The following theorem bounds the running time. 

\begin{theorem}
\label{thm:sampling_time}
The expected time complexity for running $k'$ iterations of this implementation of
$k$-means++ is $O(k'^4 dm \Psi(n, d, m))$.
\end{theorem}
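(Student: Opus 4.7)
The plan is to decompose the cost of the $i$-th iteration of the $k$-means++ simulation into (a) the cost of one draw from the easy distribution $Q$, and (b) the expected number of draws before rejection sampling accepts, and then sum the product over $i = 1, \ldots, k'$. Box construction in Section~\ref{subsubsect:boxconstruction} is purely combinatorial and uses only the $i-1$ previously chosen centers, so its cost is dominated by the sampling cost and can be absorbed into the final bound.

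For the cost of one draw from $Q$, I would follow the sequential row-generation scheme of Section~\ref{subsubsect:Qsampling}: for each of the $m$ tables $T_\ell$, evaluate $F_\ell(\cdot)$ on all of its rows and then sample one row from the induced distribution. The recursive expansion
\[
H(r,b_0) \;=\; G(r,b_0,y_0) \;-\; \sum_{j\in[q]} G(r,b_j,y_0) \;+\; \sum_{j\in[q]} H(r,b_j)
\]
together with Lemma~\ref{lem:box_computing} lets me write $F_\ell(\cdot)$ as a signed sum of $O(|\cB_i|)$ SumProd queries grouped by $T_\ell$, each of cost $\Psi(n,d,m)$. Since $\cB_i$ is a tree with $i-1$ leaves (one per previously sampled center), $|\cB_i| = O(i)$, so one draw from $Q$ costs $O(m \cdot i \cdot \Psi(n,d,m))$.

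The key step — and the main obstacle — is bounding the expected number of rejection-sampling rounds per center. I would compute the per-round acceptance probability directly: if $x$ is drawn from $Q$ and accepted with probability $L(x)/R(x)$, then the unconditional acceptance probability in a single round is
\[
\sum_{x \in J} Q(x)\cdot\frac{L(x)}{R(x)} \;=\; \sum_{x \in J} \frac{R(x)}{Z}\cdot\frac{L(x)}{R(x)} \;=\; \frac{Y}{Z}.
\]
Thus the expected number of rounds is $Z/Y$. Invoking Lemma~\ref{lemma:box:boundary}, $R(x) \le O(i^2 d)\,L(x)$ for every $x$, so $Z = \sum_x R(x) \le O(i^2 d)\,Y$ and the expected number of rounds is $O(i^2 d)$. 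This is the place where the laminar box construction and the geometric doubling schedule earn their keep; without the $O(i^2 d)$ point-wise bound the rejection sampler could run for exponentially long.

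Putting the pieces together, the expected time for the $i$-th iteration is $O(i^2 d) \cdot O(m \cdot i \cdot \Psi(n,d,m)) = O(i^3\, d\, m\, \Psi(n,d,m))$. Summing over $i = 1, \ldots, k'$ gives
\[
\sum_{i=1}^{k'} O\!\left(i^3\, d\, m\, \Psi(n,d,m)\right) \;=\; O\!\left(k'^4\, d\, m\, \Psi(n,d,m)\right),
\]
which is the claimed bound. A brief sanity check that box construction and bookkeeping of $\cB_i$ take $\mathrm{poly}(i,d)$ time independent of $n$ (since no relational operations are involved) closes the argument.
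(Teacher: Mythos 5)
Your proposal is correct and follows essentially the same route as the paper's proof: cost $O(m\,i\,\Psi(n,d,m))$ per draw from $Q$ via the recursive expansion into $O(|\cB_i|)=O(i)$ SumProd queries, an $O(i^2 d)$ bound on the expected number of rejection rounds from Lemma~\ref{lemma:box:boundary}, and a sum over $i\in[k']$. Your explicit computation of the per-round acceptance probability as $Y/Z$ is a slightly cleaner way of stating what the paper asserts pointwise, but it is the same argument.
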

\begin{proof}
When picking center $c_i$,  a point $x$ can be sampled with probability $Q(x)$ in time $O(m i \Psi(n, m, d))$. This is because the implementation samples one row from each of the $m$ tables. To sample one row we evaluate $O(|\cB_i|)$ SumProd queries, each in $O(\Psi(n,m,d))$ time. As mentioned earlier $\cB_i$ can be thought of as a tree of boxes with $i-1$ leaves, so $|\cB_i| = O(i)$.  %The implementation needs to group by each of the tables. 

By Lemma \ref{lemma:box:boundary}, the probability of accepting any sampled $x$ is $\frac{L(x)}{R(x)} = \frac{1}{O(i^2 d)}$. The expected number of sampling from $Q$ until getting accepted is $O(i^2 d)$. Thus the expected time of finding $c_i$ is $O(i^3dm\Psi(n,m,d))$. Summing over $i \in [k']$, we get $O(k'^4 dm \Psi(n, m, d))$.
\end{proof}

\section{Weighting the Centers}
\label{sec:algoverview}

%Recall that the next step of
% the adaptive $k$-means algorithm, computing 
% a weight $w_i$ for each center $c_i$ equal to the
% number of points closest to center $c_i$, is $\#P$-hard, so we have to resort to plan B for
% this step. That is, we 
% design an alternative relational algorithm for this step
% that has the same ``good property'', as the original algorithm. Recall that in this case, the good
%property is that the computed weights 
%form a $(O(k \log N), O(1))$ coreset with high probability.
Our algorithm samples a collection
$\cC$ of $k'= \Theta(k\log{N})$ centers using the 
$k$-means++ sampling  described in the prior section.  We give weights to the centers to get a coreset.

%\color{blue} Yuyan: Just adding a note. I checked the original paper. It says, if we sample $16(k+\sqrt{k})$ points, with probability at least $0.03$ it's gonna be a constant approximation. So the constant in the $\Theta$ is like $1067$. \color{black}

Ideally,  we would compute the weights in the standard way. That is, let $w_i$ denote the number of points that are closest to point $c_i$ among all centers in $\cC$. These pairs of centers and weights $(c_i, w_i)$ are known to form a coreset. Unfortunately, as stated in Theorem \ref{thm:hardcount}, computing such $w_i$'s even approximately is $NP$ hard. Instead, we will find a different set of weights which still form a coreset and are computable.

Next we describe a relational algorithm to compute a collection $W'$ of weights, one weight $w'_i \in W'$ for each center $c_i \in \cC$. The proof that the centers with these alternative weights $(c_i, w'_i)$ also form a coreset is postponed until the appendix.

%\subsection{Algorithm for Computing Alternative Weights}
%\label{sec:weight_est}

\medskip
\noindent \textbf{Algorithm for Computing Alternative Weights:}
Initialize the weight $w'_i$ for each center $c_i \in \cC$ to zero. In the $d$-dimensional Euclidean space, for each center $c_i \in \cC$, we generate a collection of hyperspheres (also named \textbf{balls}) $\{B_{i,j}\}_{j \in [\lg N]}$, where  $B_{i,j}$ contains approximately $2^j$ points from $J$. The space is then partitioned into $\{B_{i,0}, B_{i,1} - B_{i,0}, B_{i,2} - B_{i,1}, \ldots\}$. For each partition, we will sample a small number of points and use this sample to estimate the number of points in this partition that are closer to $c_i$ than any other centers, and thus aggregating $w'_i$ by adding up the numbers. Fix small constants $\epsilon, \delta > 0$. The following steps are repeated for $j \in [\lg N]$:
\begin{itemize}
\item
Let $B_{i, j}$ be a ball of radius $r_{i,j}$ centered at $c_i$. Find a $r_{i, j}$ such that the number of points in $J \cap B_{i, j}$ lies in the range $[(1-\delta)2^j, (1+\delta) 2^j]$.
This is an application of Lemma~\ref{lem:ball_computing}.
\item 
Let $\tau$ be a constant that is at least $30$. A collection $T_{i,j}$ of $ \frac{\tau}{\epsilon^2}\kp^2 \log^2 N$ ``test'' points are independently sampled following the same \textbf{approximately uniform} distribution with
replacement from every ball $B_{i,j}$. Here an ``approximately uniform'' distribution means one where every point $p$ in $B_{i,j}$ is sampled with a probability $\gamma_{p,i,j} \in [(1-\delta)/|B_{i,j}|, (1+\delta)/|B_{i,j}|]$ on each draw. This can be accomplished efficiently similar to the techniques used in Lemma~\ref{lem:ball_computing} from \cite{abokhamis2020approximate}.  Further elaboration is given in the Appendix~\ref{sect:faqai}.
\item
Among all sampled points $T_{i,j}$, find $S_{i,j}$, the set of points that lie in the \textbf{``donut''} $D_{i,j} = B_{i,j} - B_{i, j-1}$. Then the cardinality $s_{i,j} = |S_{i,j}|$ is computed. %That is $s_{i,j} = \sum_{p \in T_{i,j}} \mathbbm{1}_{ p \in D_{i,j}   }$.
\item
Find $t_{i,j}$, the number of points in $S_{i,j}$ that are closer to $c_i$ than
any other center in $\cC$. %That is $t_{i,j} = \sum_{p \in T_{i,j}}  \left( \mathbbm{1}_{ p \in D_{i,j}   } \right)\left( \mathbbm{1}_{ \alpha(p)=i } \right)$, where $\alpha(p) = \argmin_{j \in [k']}\norm{p - c_j}_2^2$ is the index of the center in $\cC$ closest to the point $p \in S$.

\item
Compute the ratio  $\ratio'_{i,j} = \frac{t_{i,j}}{s_{i,j}}$ (if $s_{i, j} = t_{i,j} = 0 $ then $\ratio'_{i,j}= 0$).
\item
If $\ratio'_{i,j} \geq \frac{1}{2\kp^2 \log{N}}$ then $w'_i$ is incremented by $\ratio'_{i,j} \cdot 2^{j-1}$, else $w'_i$ stays the same.
\end{itemize}

At first glance, the algorithm appears naive: $w_i'$ can be significantly underestimated if in some donuts only a small portion of points are closest to $c_i$, making the estimation inaccurate based on sampling. However, in Section \ref{sec:weight_algo_analysis}, we prove the following theorem which shows that the alternative weights computed by our algorithm actually form a coreset.

\begin{theorem}
\label{thm:coreset_kmeans}
The centers $C$, along with the computed weights $W'$, form an $O(1)$-approximate coreset with high probability. 
\end{theorem}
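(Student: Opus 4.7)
The plan is to show that for any candidate set of $k$ centers $X$, the weighted cost $\sum_i w'_i \|c_i - X\|_2^2$ approximates the true cost $\sum_{p \in J} \|p - X\|_2^2$ up to a constant factor (writing $\|c - X\|_2$ for the distance from $c$ to its nearest point in $X$). I would first invoke the $k$-means++ overseeding guarantee of \cite{aggarwal2009adaptive}: since Section~\ref{sec:kmeans++} gives a perfect simulation of $k$-means++ with $k' = \Theta(k \log N)$ centers, the assignment cost $\Phi := \sum_{p \in J} \|p - \phi(p)\|_2^2$ is $O(\opt)$ with high probability, where $\phi(p)$ denotes $p$'s nearest center in $C$. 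The squared triangle inequality together with Cauchy--Schwarz then yields the standard fact that the \emph{idealized} weights $w_i := |\phi^{-1}(c_i)|$ already make $(C,\{w_i\})$ an $O(1)$-coreset, since
$$
\Bigl| \sum_{p \in J} \|p - X\|_2^2 - \sum_i w_i \|c_i - X\|_2^2 \Bigr| \leq O(\Phi) + O\Bigl(\sqrt{\Phi \cdot \textstyle\sum_p \|p - X\|_2^2}\Bigr) = O\Bigl(\sum_{p \in J} \|p - X\|_2^2\Bigr),
$$
using $\Phi = O(\opt) \leq O(\sum_p \|p - X\|_2^2)$.

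The remaining work is to bound $|\sum_i (w_i - w'_i) \|c_i - X\|_2^2|$ uniformly over $X$. I would decompose both $w_i$ and $w'_i$ per-donut: let $A_{i,j} := D_{i,j} \cap \phi^{-1}(c_i)$ and $f_{i,j} := |A_{i,j}|/|D_{i,j}|$, so that $w_i = \sum_j |A_{i,j}|$ while $w'_i = \sum_j f'_{i,j} \cdot 2^{j-1} \cdot \mathbf{1}[f'_{i,j} \geq \tfrac{1}{2(k')^2 \log N}]$. Lemma~\ref{lem:ball_computing} guarantees $|D_{i,j}| = (1 \pm O(\delta))\, 2^{j-1}$, so $2^{j-1}$ is a valid donut-size surrogate. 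For ``heavy'' donuts (roughly, $f_{i,j}$ well above the threshold), a multiplicative Chernoff bound over the $\Theta((k')^2 \log^2 N / \epsilon^2)$ approximately-uniform donut samples gives $f'_{i,j} = (1 \pm \epsilon) f_{i,j}$ with failure probability $N^{-\Omega(1)}$, and a union bound over all $O((k')^2 \log N)$ donut-center pairs keeps these estimates simultaneously tight. The same concentration forces ``light'' donuts ($f_{i,j}$ well below the threshold) to be discarded with high probability; borderline cases can fall on either side but contribute negligibly either way.

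The delicate piece is bounding the cost of the discarded points, which are forgotten entirely by $w'_i$. For every lost $p \in A_{i,j}$ we have $\|p - c_i\|_2 \leq r_{i,j}$, so the squared triangle inequality gives
$$
\bigl| \|p - X\|_2^2 - \|c_i - X\|_2^2 \bigr| \leq \|p - c_i\|_2^2 + 2 \|p - c_i\|_2 \cdot \|c_i - X\|_2.
$$
The threshold $\tfrac{1}{2(k')^2 \log N}$ bounds the total number of discarded points by $\sum_{i,j} \tfrac{|D_{i,j}|}{\Theta((k')^2 \log N)} = O(N/(k' \log N))$, while $\sum_{p \text{ lost}} \|p - c_i\|_2^2 \leq \Phi = O(\opt)$. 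Applying Cauchy--Schwarz to the cross term (using $\sum_p \|c_{\phi(p)} - X\|_2^2 \leq O(\sum_p \|p - X\|_2^2 + \Phi)$), the aggregate discarded-donut error is $O(\Phi + \sqrt{\Phi \cdot \sum_p \|p - X\|_2^2}) = O(\sum_p \|p - X\|_2^2)$, which combines with the multiplicative $(1 \pm O(\epsilon))$ error from heavy donuts and Step~1 to close the argument.

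The main obstacle is this last step. Geometric closeness of a single discarded point to its center is not itself sufficient; one must argue that forgetting these points uniformly over \emph{every} candidate set $X$ preserves cost to within a constant factor. The threshold $\tfrac{1}{2(k')^2 \log N}$ is calibrated precisely to jointly (i) keep the aggregate count of discarded points at $O(N/(k' \log N))$ and (ii) make the Cauchy--Schwarz cross-term between $\Phi = O(\opt)$ and $\sum_p \|p - X\|_2^2$ absorbable into the true cost for every $X$ simultaneously.
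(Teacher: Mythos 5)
Your Step~1 (the overseeding guarantee plus the standard argument that the idealized weights $w_i=|\phi^{-1}(c_i)|$ form a coreset) and your Chernoff analysis of the heavy donuts match the paper (Lemmas~\ref{lem:kirk1} and~\ref{lem:kirk2}). The gap is in the treatment of the undersampled donuts, which you correctly identify as the delicate piece but do not actually resolve. Your plan is to \emph{discard} the lost points and argue their contribution is absorbable; the bounds you have available are $\sum_{p\,\mathrm{lost}}\|p-c_{\phi(p)}\|^2\le\Phi=O(\opt)$ and a count bound of $O(N/(k'\log N))$ lost points. Writing $A=\sum_{p\,\mathrm{lost}}\|c_{\phi(p)}-X\|^2$ for the weighted cost you are deleting and $\Psi_X=\sum_p\|p-X\|^2$, your triangle-inequality/Cauchy--Schwarz step only yields $A\le 2\sum_{p\,\mathrm{lost}}\|p-X\|^2+O(\Phi)\le 2\Psi_X+O(\opt)$. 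Feeding this back into the lower-bound direction of the coreset inequality gives $\Psi_X\le O(\opt)+2\sum_i w_i'\|c_i-X\|^2+4\Psi_X$, where $\Psi_X$ appears on the right with coefficient greater than one, so the inequality is vacuous. A small \emph{count} of discarded points does not force a small \emph{cost} share: for a candidate $X$ whose cost is concentrated on the undersampled points (all of which may sit close to a single far-away center, keeping $\Phi$ small), the weighted cost can underestimate $\Psi_X$ by an unbounded multiplicative factor, which is exactly what a coreset must rule out.

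The paper closes this hole with a different mechanism: it does not discard the weight of undersampled points but \emph{redistributes} it, defining fractional weights via a flow on an auxiliary acyclic graph in which each point in an undersampled donut $D_{i,j}$ forwards its unit weight to the many points of the inner donut $D_{i,j-1}$ that share the most popular center there. Because the distance to the closest center strictly decreases along flow paths (Lemma~\ref{lemma:acyclic}) and because each node carries at most $1+O(1/\log N)$ units of flow (Lemma~\ref{lem:max_weight}, using that undersampled donuts contain few forwarding points while the receiving set is large), the total reassignment cost telescopes to $O(\opt)$ (Lemma~\ref{lem:bound-cost-by-alg}). This converts the error from the undersampled donuts into an \emph{additive} $O(\opt)$ term rather than a constant fraction of $\Psi_X$, which is what the argument needs; the computed weights are then shown to approximate these fractional weights, not the idealized counts minus the lost mass. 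To repair your proof you would need either this redistribution idea or a separate argument that the lost points carry at most, say, a $1/10$ fraction of $\Psi_X$ uniformly over $X$, and the latter is not implied by anything you have established.
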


The running time of a naive implementation of this algorithm would be dominated by sampling of the test points. Sampling a single test point can be accomplished with $m$ applications of the algorithm from \cite{abokhamis2020approximate} and setting the approximation error to $\delta = \epsilon/m$. Recall the running time of the algorithm from \cite{abokhamis2020approximate} is $O\left(  \frac{m^6  \log^4 n}{\delta^2} \Psi(n, d, m) \right)$. Thus, the time to sample all test points is $O\left(   \frac{k'^2 m^9  \log^6 n}{\epsilon^4} \Psi(n, d, m) \right)$. Substituting for $k'$, and noting that $N \le n^m$, we obtain a total time for a naive implementation of $O\left(   \frac{k^2 m^{11}  \log^8 n}{\epsilon^4} \Psi(n, d, m) \right)$.

\section{Analysis of the Weighting Algorithm}
\label{sec:weight_algo_analysis}

 The goal in this subsection is to prove
 Theorem \ref{thm:coreset_kmeans} which states
that the alternative weights form an $O(1)$-approximate coreset with high probability. Throughout our analysis,
``with high probability'' means that for any constant $\rho > 0$ the probability of the statement not being true can be made less than $\frac{1}{N^\rho}$ asymptotically by appropriately setting the constants in the algorithm.

Intuitively if a decent fraction of the points in 
each donut are closer to
center $c_i$ than any other center, then Theorem
\ref{thm:coreset_kmeans} can be proven
by using a  straight-forward  application of Chernoff bounds to show that each alternate weight $w'_i$ is likely close to the true weight $w_i$. 
The conceptual difficultly is if only a very small portion of points in a donut $D_{i, j}$ are closer to $c_i$ than any other points, in which case the estimated $\ratio'_{i,j} < \frac{1}{2\kp^2 \log{N}}$ and thus the ``uncounted'' points in $D_{i,j}$ would contribute no weight
to the computed weight $w'_i$. We call this the \textbf{undersampled} case. If many docuts around a center $i$ are undersampled, the computed weight $w'_i$
may well poorly approximate the actual weight $w_i$.

To address this, we need to prove that
omitting the weight from these  uncounted points does not have
a significant impact on the objective value. 
We break our proof into four parts.
The first part, described in 
subsubsection \ref{subsubsect:definingfractional},
involves conceptually defining a fractional weight $w_i^f$ for each center
$c_i \in \cC$. Each point has a weight of $1$, and instead of giving all this weight to its closes center, we allow fractionally assigning the weight to various ``near'' centers. $w_i^f$ is then the aggregated weight over all points for $c_i$.
The second part, described in subsubsection \ref{subsubsect:fractionalproperties}, establishes
various properties of the fractional weight that
we will need.
The third part, described in subsubsection  \ref{subsubsect:alternatefractional}, shows that
each fractional weight $w^f_i$ is likely
 to be closely approximated the computed weight $w'_i$. 
The fourth part, described in subsubsection \ref{subsubsect:fractionaloptimal}, shows that the fractional weights for the centers in $C$ form a $O(1)$-approximate coreset.
Subsubsection \ref{subsubsect:fractionaloptimal} also
contains the proof of
 Theorem \ref{thm:coreset_kmeans}.

\subsection{Defining the Fractional Weights} 
\label{subsubsect:definingfractional}

To define the fractional weights we first
define 
an auxiliary directed acyclic graph $G = (S, E)$
where there is one node in $S$ corresponding to each row in $J$. For the rest of this section, with a little abuse of notation we use $S$ to denote both the nodes in graph $G$, and the set of $d$-dimensional data points in the design matrix. 
 Let  $p$ be an  arbitrary point in $S - \cC$.
 Let $\alpha(p)$ denote the subscript of the center closest to $p$, i.e., if $c_i \in \cC$ is closest to $p$ then $\alpha(p)=i$. 
 Let $D_{i,j}$ be the donut around $c_{i}$ that contains
$p$.  If $D_{i,j}$ is not undersampled then $p$ will have one outgoing edge  $(p, c_{i})$.  So let us now assume that $D_{i,j}$ 
is undersampled. Defining the outgoing
edges from $p$ in this case is a bit more complicated.

Let $A_{i,j}$ be the 
points  $q \in D_{i,j}$ that are closer to 
$c_{i}$ than any other center in $C$ (i.e., $\alpha(q)=i$). 
If $j =1$ then $D_{i, 1}$ contains only
the point $p$, and the only outgoing edge
from $p$ goes to $c_i$. 
So let us now assume $j>1$. 
Let $c_h$ the center that
 is closest to the most points in 
 $D_{i, j-1}$, the next donut in toward
 $c_{i}$ from $D_{i, j}$. That is 
 $c_h = \argmax_{c_j \in \cC} \sum_{q \in D_{i, j-1}} \mathbbm{1}_{\alpha(q) = c_j }$.
 Let $M_{i, j-1}$ be points  in $D_{i, j-1}$ that are closer to $c_h$ than
 any other center. That is $M_{i, j-1}$ is the collection of $q \in D_{i, j-1}$
 such that $\alpha(q)=h$. Then there is a directed
 edge from $p$ to each point in $M_{i, j-1}$.
 Before defining how to derive the fractional weights from $G$, let us take a detour to note that $G$
 is acyclic. The proof of following lemma can be found in Appendix \ref{section:ommited:proofs}.

\begin{lemma}
\label{lemma:acyclic}
$G$ is acyclic. 
\end{lemma}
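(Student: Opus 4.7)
The plan is to exhibit a real-valued potential function on the nodes of $G$ that strictly decreases along every edge, which immediately rules out directed cycles. A natural candidate is $\phi(p) = \|p - c_{\alpha(p)}\|_2$, namely the distance from $p$ to its nearest center in $\cC$.

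First I would observe that the centers in $\cC$ are sinks in $G$: the construction only attaches outgoing edges to points $p \in S - \cC$. Consequently, any directed cycle must lie entirely within $S - \cC$, so it suffices to show that $\phi(q) < \phi(p)$ along every edge $(p,q)$ whose endpoints are both non-centers. The only edges of $G$ that can have a non-center target are those produced by the ``undersampled, $j > 1$'' case, since in both the non-undersampled case and the $j = 1$ case the lone outgoing edge from $p$ points to the center $c_{\alpha(p)}$, which is in $\cC$. So the entire burden falls on the undersampled-$j{>}1$ edges.

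Fix such an edge $(p,q)$ with $p \in D_{i,j}$, $\alpha(p) = i$, $D_{i,j}$ undersampled, $j > 1$, and $q \in M_{i,j-1} \subseteq D_{i,j-1}$ with $\alpha(q) = h$ (where $c_h$ is the center closest to the plurality of $D_{i,j-1}$). Using the radii of the balls defining the donuts, $p \in D_{i,j} = B_{i,j} \setminus B_{i,j-1}$ gives $\|p - c_i\|_2 > r_{i,j-1}$, while $q \in D_{i,j-1} \subseteq B_{i,j-1}$ gives $\|q - c_i\|_2 \le r_{i,j-1}$. Therefore
\[
\phi(q) \;=\; \|q - c_h\|_2 \;\le\; \|q - c_i\|_2 \;\le\; r_{i,j-1} \;<\; \|p - c_i\|_2 \;=\; \phi(p),
\]
where the first inequality uses $\alpha(q) = h$, i.e.\ $c_h$ is at least as close to $q$ as $c_i$. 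Thus $\phi$ strictly decreases along this edge.

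Assembling the pieces: any putative directed cycle in $G$ contains only non-centers, hence consists only of undersampled-$j{>}1$ edges, along each of which $\phi$ strictly decreases; a strictly decreasing potential around a cycle is a contradiction, so no such cycle exists. The main thing to be careful about is not a conceptual obstacle but a bookkeeping one: I should confirm that for undersampled-$j{>}1$ edges the target really does lie in a strictly smaller donut of the same center $c_i$ (which is immediate from the definition $M_{i,j-1} \subseteq D_{i,j-1}$), and handle ties in $\alpha(\cdot)$ by a fixed tie-break, so that $\phi$ is well-defined. With these in place, the acyclicity of $G$ follows directly.
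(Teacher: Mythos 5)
Your proof is correct and follows essentially the same route as the paper's: both arguments show that the distance from a point to its nearest center strictly decreases along every edge (using $\|p - c_i\| > r_{i,j-1} \ge \|q - c_i\| \ge \|q - c_{\alpha(q)}\|$), so no directed cycle can exist. Your version is slightly more careful in isolating which edges need checking and in flagging the tie-breaking issue, but the core monotone-potential argument is identical.
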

\begin{proof}
Consider a directed edge $(p,q) \in E$,
and $c_i$ be the center in $\cC$
that $p$ is closest
to, and $D_{i,j}$ the donut around $c_i$
that contains $p$. Then since $p \in D_{i, j}$
it must be the case that 
$\norm{p - c_i}_2^2 > r_{i, j-1}$.
Since $q \in B_{i,  j-1}$ it must
be the case that $\norm{q - c_i}_2^2 \le r_{i, j-1}$. Thus $\norm{p - c_i}_2^2 > \norm{q - c_i}_2^2$. Thus the closest center to
$q$ must be closer to $q$ than the closest center to $p$ is to $p$. 
Thus as one travels along a directed path
in $G$, although identify of
the closest center can change, 
the distance to the closest 
center must be monotonically decreasing.
Thus, $G$ must be acyclic. 
\end{proof}

We explain how to compute a fractional
weight
$w^f_p$ for each point $p \in S$ using the
network $G$. 
Initially each $w_p^f$ is set to 1. 
Then conceptually these weights flow
toward the sinks in $G$, splitting 
evenly over all outgoing edges at each vertex.
More formally,  the following flow step is repeated until is no longer
possible to do so: 

\medskip
\noindent
\textbf{Flow Step:}
Let $p \in S$ be an arbitrary point
that currently has positive fractional  weight
 and that has positive outdegree $h$ in $G$. 
Then for each directed edge $(p, q)$ in $G$
increment $w_q^f$ by $w_p^f / h$.  
Finally set $w_p^f$ to zero.
\medskip

As the sinks in $G$ are exactly the centers in $\cC$,  the centers in $\cC$ will be the only points 
that end up with positive fractional weight. 
Thus we use $w^f_i$ to refer to the
resulting fractional weight on center
$c_i \in \cC$.

\subsection{Properties of the Fractional Weights}
\label{subsubsect:fractionalproperties}

 Let $\ratio_{i,j}$ be the fraction of points that are closest to $c_i$ among all centers in $\cC$ in this donut $D_{i,j} = B_{i,j} - B_{i,j-1}$.
We show in Lemma~\ref{lem:kirk1} and Lemma~\ref{lem:kirk2} that with high probability, either the estimated ratio is a good approximation of $\ratio_{i,j}$, or the real ratio $\ratio_{i,j}$ is very small.

We show in Lemma~\ref{lem:max_weight} that the maximum flow through any node is bounded by $1+\eps$ when $N$ is big enough.  This follows using induction because each point has $\Omega(\kp \log{N})$ neighbors and every point can have in degree from one set of nodes per center. We further know every point that is not uncounted actually contributes to their centers weight.

\begin{lemma}\label{lem:kirk1}
With high probability either $|\ratio_{i,j}-\ratio'_{i,j}| \leq \epsilon \ratio_{i,j}$ or $\ratio'_{i,j} \le \frac{1}{2\kp^2 \log{N}}$.
%\kirk{We could also say:  ``Conditioned on $\ratio'_{i,j} > \frac{1}{2\kp^2 \log{N}}$,with high probability  $|\ratio_{i,j}-\ratio'_{i,j}| \leq \epsilon \ratio_{i,j}$''. Don't really care.}
\end{lemma}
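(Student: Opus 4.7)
The plan is to apply Chernoff concentration bounds to the two binomial-style random variables $s_{i,j}$ and $t_{i,j}$ that define the estimator $\ratio'_{i,j} = t_{i,j}/s_{i,j}$, then split into two cases according to whether the true ratio $\ratio_{i,j}$ is above or below a threshold of order $1/(\kp^2 \log N)$. First I would establish concentration of the denominator $s_{i,j}$. Since $|B_{i,j}| \in [(1-\delta)2^j,(1+\delta)2^j]$ for every $j$, a direct calculation gives $|D_{i,j}|/|B_{i,j}| = 1/2 \pm O(\delta)$, and combined with the per-point sampling probabilities $\gamma_{p,i,j} \in [(1-\delta)/|B_{i,j}|,(1+\delta)/|B_{i,j}|]$, the probability that a single draw from $T_{i,j}$ lands in $D_{i,j}$ is $1/2 \pm O(\delta)$. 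Because $|T_{i,j}| = (\tau/\epsilon^2)\kp^2 \log^2 N$, a multiplicative Chernoff bound (with $\tau$ taken large compared to the desired tail exponent $\rho$) yields $s_{i,j} = (1\pm\epsilon/6)|T_{i,j}|/2$ except with probability $N^{-\rho-1}$.

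Next, view $t_{i,j}$ directly as a sum of $|T_{i,j}|$ independent Bernoullis, each with success probability $q = \sum_{p \in D_{i,j},\,\alpha(p)=i}\gamma_{p,i,j} = (1/2 \pm O(\delta))\ratio_{i,j}$, using that there are exactly $\ratio_{i,j}|D_{i,j}|$ such points by definition of $\ratio_{i,j}$. In Case 1, assume $\ratio_{i,j} \ge \kappa/(\kp^2 \log N)$ for a small absolute constant $\kappa$; then $\mathbb{E}[t_{i,j}] = \Omega(\log N/\epsilon^2)$, large enough for a multiplicative Chernoff to give $t_{i,j}=(1\pm\epsilon/6)\mathbb{E}[t_{i,j}]$ with probability $1-N^{-\rho-1}$. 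Dividing by the already concentrated $s_{i,j}$ and choosing $\delta = O(\epsilon)$ cancels the $1/2$ factors and yields $|\ratio'_{i,j}-\ratio_{i,j}|\le\epsilon\ratio_{i,j}$. In Case 2, assume $\ratio_{i,j} < \kappa/(\kp^2 \log N)$; on the good event for $s_{i,j}$, the inequality $\ratio'_{i,j} \le 1/(2\kp^2 \log N)$ reduces to $t_{i,j} \le \beta$ where $\beta = (1-\epsilon/6)|T_{i,j}|/(4\kp^2 \log N) = \Theta(\log N/\epsilon^2)$. Taking $\kappa < 1/(2e)$ makes $\mathbb{E}[t_{i,j}] \le (1/e)\beta$, so the Chernoff upper tail $\Pr[t_{i,j}\ge\beta] \le (e\,\mathbb{E}[t_{i,j}]/\beta)^\beta$ decays exponentially in $\beta$ and is at most $N^{-\rho-1}$ for $\tau$ sufficiently large.

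The main obstacle is bookkeeping of constants: $\delta$ must be set as a small enough multiple of $\epsilon$ so that the $(1\pm O(\delta))$ slack from non-uniform sampling is absorbed into the $\epsilon$ factor in Case 1, the threshold constant $\kappa$ must satisfy $\kappa < 1/(2e)$ so that the Chernoff upper tail in Case 2 actually decays, and $\tau$ must be chosen large relative to $\rho$ so that each of the three Chernoff failure events (for $s_{i,j}$, for $t_{i,j}$ in Case 1, and for $t_{i,j}$ in Case 2) holds with probability at most $N^{-\rho-1}$. A union bound over these constantly many events then delivers the per-donut statement with probability $1-O(N^{-\rho-1})$.
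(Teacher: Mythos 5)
Your proposal is correct and follows essentially the same route as the paper's proof: a Chernoff bound to show the denominator $s_{i,j}$ is a constant fraction of $|T_{i,j}|$, Chernoff bounds on $t_{i,j}$ viewed as a sum of Bernoullis with mean $(1\pm O(\delta))\ratio_{i,j}s_{i,j}$, and a case split on whether $\ratio_{i,j}$ exceeds a threshold of order $1/(\kp^2\log N)$ (your two cases correspond to the paper's Cases 1--2 merged into one two-sided bound, plus its Case 3). The only cosmetic difference is that you organize the cases around a deterministic dichotomy on $\ratio_{i,j}$ rather than around the three bad events, and you use the multiplicative form $(e\mu/\beta)^\beta$ in the undersampled case where the paper uses an additive Chernoff; both work once $\kappa$ is a fixed constant strictly below $1/(2e)$.
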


To prove Lemma \ref{lem:kirk1}, we use the following Chernoff Bound.
\begin{lemma}
\label{lemma:Chernoff}
Consider Bernoulli trials $X_i, \ldots, X_n$.
Let $X = \sum_{i=1}^n X_i$
and $\mu = E[X]$.
 Then, for any $\lambda >0$:
\begin{align*}
\pr[X \geq \mu + \lambda] &\leq \exp\left(-\frac{\lambda^2}{2\mu+\lambda} \right) & \textnormal{ Upper Chernoff Bound}\\ 
\pr[X \leq \mu - \lambda] &\leq \exp\left(-\frac{\lambda^2}{3\mu}\right) & \textnormal{ Lower Chernoff Bound}
\end{align*}
\end{lemma}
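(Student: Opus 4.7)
The plan is to prove both tail bounds via the standard moment generating function method (the ``Chernoff method''), under the implicit assumption that the Bernoulli trials are independent. First I would apply Markov's inequality to an exponentiated form of $X$: for any $t > 0$,
\[
\pr[X \geq \mu + \lambda] = \pr[e^{tX} \geq e^{t(\mu+\lambda)}] \leq \frac{E[e^{tX}]}{e^{t(\mu+\lambda)}}.
\]
By independence, $E[e^{tX}] = \prod_{i=1}^n E[e^{tX_i}]$, and for each Bernoulli variable with $E[X_i] = p_i$ the identity $E[e^{tX_i}] = 1 + p_i(e^t - 1)$ combined with $1 + x \leq e^x$ yields $E[e^{tX_i}] \leq \exp(p_i(e^t - 1))$. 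Taking the product and using $\sum_i p_i = \mu$ gives $E[e^{tX}] \leq \exp(\mu(e^t - 1))$, so
\[
\pr[X \geq \mu + \lambda] \leq \exp\bigl(\mu(e^t - 1) - t(\mu+\lambda)\bigr).
\]

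Next I would optimize the bound over $t > 0$. Differentiating the exponent and setting it to zero gives the minimizer $t = \ln(1 + \lambda/\mu)$, which after substitution produces the standard Chernoff form
\[
\pr[X \geq \mu + \lambda] \leq \left(\frac{e^{\delta}}{(1+\delta)^{1+\delta}}\right)^{\!\mu}, \qquad \delta := \lambda/\mu.
\]
To reduce this to the stated inequality $\exp(-\lambda^2/(2\mu+\lambda))$, I would show that $(1+\delta)\ln(1+\delta) - \delta \geq \delta^2/(2+\delta)$ for all $\delta \geq 0$. Let $g(\delta) = (1+\delta)\ln(1+\delta) - \delta - \delta^2/(2+\delta)$; then $g(0) = 0$ and an elementary derivative computation shows $g'(\delta) \geq 0$ on $[0,\infty)$, so $g \geq 0$ as required. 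Multiplying by $\mu$ and exponentiating delivers the upper Chernoff bound in the stated form.

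The lower tail is handled symmetrically by applying Markov's inequality to $e^{-tX}$ for $t > 0$. Using $E[e^{-tX_i}] \leq \exp(p_i(e^{-t}-1))$ and optimizing with $t = -\ln(1 - \lambda/\mu)$ (valid for $\lambda < \mu$; when $\lambda \geq \mu$ the stated bound is trivial since the probability is zero), one obtains
\[
\pr[X \leq \mu - \lambda] \leq \left(\frac{e^{-\delta}}{(1-\delta)^{1-\delta}}\right)^{\!\mu}.
\]
To recover the stated bound $\exp(-\lambda^2/(3\mu))$, I would verify $h(\delta) := -(1-\delta)\ln(1-\delta) - \delta + \delta^2/3 \leq 0$ on $[0,1)$. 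This follows from $h(0) = h'(0) = 0$ together with $h''(\delta) = -1/(1-\delta) + 2/3 < 0$ on the interval, which makes $h$ concave with its maximum at $\delta = 0$.

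The only mildly delicate part of the argument is verifying the two elementary inequalities at the end; they are standard and follow from straightforward derivative analysis. Everything else is a textbook application of the Chernoff method and uses no problem-specific structure.
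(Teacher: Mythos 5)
Your proof is correct. The paper itself offers no proof of this lemma --- it is invoked as a standard Chernoff/Bernstein-type tail bound --- so there is nothing to compare against; your moment-generating-function argument, the optimization at $t=\ln(1+\lambda/\mu)$ (resp.\ $t=-\ln(1-\lambda/\mu)$), and the two elementary inequalities $(1+\delta)\ln(1+\delta)-\delta\geq \delta^2/(2+\delta)$ and $-(1-\delta)\ln(1-\delta)-\delta\leq -\delta^2/3$ are exactly the standard route and all check out (the first needs a second-derivative step, since $g'(0)=0$, but $g''(\delta)=\tfrac{1}{1+\delta}-\tfrac{8}{(2+\delta)^3}\geq 0$ closes it). Your explicit note that independence of the trials is implicitly assumed, and that the lower bound is trivial when $\lambda\geq\mu$, are both appropriate; the heterogeneous-$p_i$ treatment is also needed here since the paper applies the bound to indicators with nonuniform sampling probabilities $\gamma_{p,i,j}$.
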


\begin{proof}{Proof of Lemma \ref{lem:kirk1}:}
Fix any center $c_i \in \cC$ and $j \in [\log N]$.  By applying the low Chernoff bound 
from Lemma \ref{lemma:Chernoff} it is
straight forward to conclude that $\tau$
is large then 
 with high probability at least
a third of the test points in each 
$T_{i,j}$ are in the donut $D_{i,j}$.
That is, with high probability 
$s_{i,j} \ge  \frac{\tau}{3\epsilon^2}\kp^2 \log^2 N$ .   
So let us consider a particular $T_{i,j}$
and condition 
$s_{i,j}$ having some fixed
value that is at least $\frac{1}{3\epsilon^2}\kp^2 \log^2 N$.
So $s_{i,j}$ is conditioned on being large.

Recall $t_{i,j} = \sum_{p \in W_{i,j}}
(\mathbbm{1}_{ p \in T_{i,j} })
(\mathbbm{1}_{ \alpha(p) =i })$,
and the indicator random variables
$\mathbbm{1}_{ p \in T_{i,j} }$ 
are Bernoulli trials. 
Further note by the definition of
$\gamma_{p,i,j}$ it is the case that
$ E[t_{i,j}] = \sum_{p \in W_{i,j}} \gamma_{p,i,j}(\mathbbm{1}_{ \alpha(p) =i })$.
Further note that as the sampling of
test points is nearly uniform that
$ f_{i,j} (1-\delta) s_{i,j} \le E[t_{i,j}] \le  f_{i,j} (1+\delta) s_{i,j}$.
For notational convenience, let $\mu = E[t_{i,j}]$.
We now break the proof into three cases,
that cover the ways in which the statement
of this lemma would not be true.
For each case, we show with high probability the case does not occur. 

\textbf{Case 1: $\ratio'_{i,j} \geq \frac{1}{2\kp^2 \log{N}}$ and $\ratio_{i,j} > \frac{1-\eps}{2 k'^2 \log{N}} $ and $\ratio'_{i,j} \geq (1+\epsilon) \ratio_{i,j}$.}
We are going to prove the probability of this case happening is very low. 
If we set $\lambda = \eps \mu $, then using Chernoff bound, we have

\begin{align*}
    \pr[t_{i,j} \geq (1+\eps) \mu ] &\le \exp\left(-\frac{(\eps \mu)^2}{2\mu+\eps \mu} \right) &\textnormal{[Upper Chernoff Bound]}
    \\
    &\leq
    \exp\left(-\frac{\eps^2 (1-\delta) \ratio_{i,j} s_{i,j}}{2+\eps} \right) &\textnormal{[$ \mu \geq  (1-\delta) \ratio_{i,j} s_{i,j}$]}
    \\
    &\leq
    \exp\left(-\frac{\eps^2 (1-\delta) (1-\eps) s_{i,j}}{3(2 k'^2 \log{N})} \right) &\textnormal{[$ \ratio_{i,j} > \frac{1-\eps}{2 k'^2 \log{N}}$]}
    \\
    &\leq
    \exp\left(-\frac{\eps^2 (1-\delta) (1-\eps) \tau \kp^2 \log^2 N}{3(2 k'^2 \log{N})(3\epsilon^2)} \right) &\textnormal{[$ s_{i,j} \geq \frac{\tau}{3\epsilon^2}\kp^2 \log{N}$]}
    \\
    &=
    \exp\left(-\frac{(1-\delta) (1-\eps) \tau \log N}{18} \right) 
\end{align*}
Therefore, for $\delta \leq \eps/2 \leq 1/10$ and $\tau \geq 30$ this case cannot happen with high probability.

\textbf{Case 2: $\ratio'_{i,j} \geq \frac{1}{2\kp^2 \log{N}}$ and $\ratio_{i,j} > \frac{1-\eps}{2 k'^2 \log{N}} $ and  $\ratio'_{i,j} < (1-\epsilon) \ratio_{i,j}$.}
We can use Lower Chernoff Bound with $\lambda=\eps \mu$ to prove the probability of this event is very small. 
\begin{align*}
    \pr[t_{i,j} \leq (1-\eps) \mu ]
    &\leq \exp\left(-\frac{(\eps \mu)^2}{3\mu} \right)
    \\
    &\leq
    \exp\left(-\frac{\eps^2 (1-\delta) \ratio_{i,j} s_{i,j} }{3} \right)
    &\textnormal{[$ \mu \geq (1-\delta) \ratio_{i,j} s_{i,j}$]}
    \\
    &\leq
    \exp\left(-\frac{\eps^2 (1-\delta) (1-\eps) s_{i,j}}{3(2 k'^2 \log{N})} \right) &\textnormal{[$ \ratio_{i,j} > \frac{1-\eps}{2 k'^2 \log{N}}$]}
    \\
    &\leq
    \exp\left(-\frac{\eps^2 (1-\delta) (1-\eps) \tau \kp^2 \log^2 N}{3(2 k'^2 \log{N})(3\epsilon^2)} \right) &\textnormal{[$ s_{i,j} \geq \frac{\tau}{3\epsilon^2}\kp^2 \log{N}$]}
    \\
    &=
    \exp\left(-\frac{(1-\delta) (1-\eps) \tau \log N}{18} \right) 
\end{align*}

Therefore, for $\delta \leq \eps/2 \leq 1/10$ and $\tau \geq 30$ this case cannot happen with high probability.

\textbf{ Case 3:
\boldmath $\ratio'_{i,j} \geq \frac{1}{2\kp^2 \log{N}}$ and $\ratio_{i,j}\leq \frac{1-\epsilon}{2\kp^2 \log{N}}$\unboldmath:} 
Since $\ratio'_{i,j} = \frac{t_{i,j}}{s_{i,j}}$,  in this case:
\begin{align}
\label{eq:dax1}
    t_{i,j} \geq \frac{s_{i,}}{2\kp^2 \log{N}}
\end{align}
Since $\mu \le   f_{i,j} (1+\delta) s_{i,j}$,
in this case:
\begin{align}
\label{eq:dax2}
\mu \le   \frac{1-\epsilon}{2\kp^2 \log{N}} (1+\delta) s_{i,j}
\end{align}
Thus subtracting line \ref{eq:dax1} from line \ref{eq:dax2} we conclude that:
\begin{align}
t_{i,j} \ge \mu + \frac{(\epsilon - \delta + \epsilon \delta) s_{i,j}}{2\kp^2 \log{N}}    
\end{align}
Let $\lambda = \frac{(\epsilon - \delta + \epsilon \delta) s_{i,j}}{2\kp^2 \log{N}}$. 
We can conclude that
\begin{align*}
\pr[t_{i,j} \ge \mu + \lambda  ] &\le \exp\left(-\frac{\lambda^2}{2\mu+\lambda} \right) &\textnormal{Upper Chernoff Bound}\\
&\le \exp\left(\frac{-\lambda^2}{\frac{1-\epsilon}{2\kp^2 \log{N}} (1+\delta) s_{i,j} +\lambda} \right) &\textnormal{Using line \ref{eq:dax2}}\\
&= \exp\left(
\frac{- 
\left(
\frac{ (\epsilon - \delta + \epsilon \delta) s_{i,j}}{2 \kp^2\log{N}}
\right)^2}
{\frac{1-\epsilon}{2\kp^2 \log{N}} (1+\delta) s_{i,j} +\frac{ (\epsilon - \delta + \epsilon \delta) s_{i,j}}{2 \kp^2 \log{N}}
}\right) & \\
&= \exp\left(\frac{- \left(\frac{ (\epsilon - \delta + \epsilon \delta)^2 s_{i,j}}{ \kp^2 \log{N}}\right)}{2(1-\epsilon)(1+\delta)  + 2(\epsilon - \delta + \epsilon \delta)   } \right) \\
&\le \exp\left( \frac{ - (\epsilon - \delta + \epsilon \delta)^2 s_{i,j}}{12 \kp^2 \log{N}}\right)  &\\
&= \exp\left( \frac{ - (\epsilon - \delta + \epsilon \delta)^2 \tau\log{N} }{12 \epsilon^2 }\right) &\textnormal{Substituting our lower bound on }  s_{i,j}
\end{align*}
Therefore, for $\delta \leq \eps/2 \leq 1/10$ and $\tau \geq 30$ this case cannot happen with high probability. 

\end{proof}

The next case proves the how large $f'_{i,j}$ is when we know that $f_{i,j}$ is large.

\begin{lemma}\label{lem:kirk2}
% If $\ratio_{i,j} \ge \frac{1 +\epsilon}{2\kp^2 \log{N}}$ then with high probability either $|\ratio_{i,j}-\ratio'_{i,j}| \leq \frac{\epsilon}{2\kp^2 \log{N}}$ or $\ratio'_{i,j} \ge \frac{1}{2\kp^2 \log{N}}$.
If $\ratio_{i,j} > \frac{1+\epsilon}{2k'^2\log{N}}$ then 
with high probability $\ratio'_{i,j} \geq \frac{1}{2k'^2\log{N}}$. 
\end{lemma}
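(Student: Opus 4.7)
The plan is to mirror the three-case Chernoff argument used in the proof of Lemma~\ref{lem:kirk1}, but now there is only one ``bad event'' to rule out, namely that $t_{i,j}/s_{i,j}$ falls below the threshold $1/(2k'^2 \log N)$ even though the true fraction $\ratio_{i,j}$ is comfortably above it. Since $t_{i,j}$ is a sum of independent Bernoulli trials once we condition on $s_{i,j}$, the Lower Chernoff Bound from Lemma~\ref{lemma:Chernoff} should suffice.

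First, I would show that $s_{i,j}$ is large with high probability. By construction, $B_{i,j}$ contains $(1\pm\delta)2^j$ points of $J$ while $B_{i,j-1}$ contains $(1\pm\delta)2^{j-1}$, so the donut $D_{i,j}$ has a constant fraction (roughly $1/2$) of the points of $B_{i,j}$. Because the $\tau\epsilon^{-2}k'^2\log^2 N$ test points in $T_{i,j}$ are drawn approximately uniformly from $B_{i,j}$, the Lower Chernoff Bound gives that $s_{i,j} \geq \frac{\tau}{3\epsilon^2}k'^2\log^2 N$ with high probability, exactly as in the proof of Lemma~\ref{lem:kirk1}. Condition on any such fixed value of $s_{i,j}$ for the remainder of the argument.

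Next, conditioned on $s_{i,j}$, the random variable $t_{i,j}$ is a sum of Bernoulli indicators and its expectation $\mu$ satisfies $\mu \geq (1-\delta)\ratio_{i,j}\, s_{i,j}$ by the near-uniformity of the sampling distribution. Using the hypothesis $\ratio_{i,j} > (1+\epsilon)/(2k'^2 \log N)$, we therefore get
\[
\mu \;\geq\; \frac{(1-\delta)(1+\epsilon)}{2k'^2 \log N}\, s_{i,j}.
\]
The event $\ratio'_{i,j} < \frac{1}{2k'^2\log N}$ is equivalent to $t_{i,j} < \frac{s_{i,j}}{2k'^2 \log N}$, which with the displayed bound on $\mu$ implies $t_{i,j} \leq \mu - \lambda$ for
\[
\lambda \;=\; \frac{\bigl((1-\delta)(1+\epsilon) - 1\bigr)\,s_{i,j}}{2k'^2 \log N}.
\]
For $\delta \leq \epsilon/2$ this gives $\lambda = \Omega(\epsilon\, s_{i,j}/(k'^2 \log N))$, and in particular $\lambda = \Omega(\epsilon \mu)$.

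Finally I would plug into the Lower Chernoff Bound from Lemma~\ref{lemma:Chernoff}. Using $\lambda^2/\mu = \Omega(\epsilon^2 s_{i,j}/(k'^2 \log N))$ together with our high-probability lower bound $s_{i,j} \geq \tfrac{\tau}{3\epsilon^2}k'^2 \log^2 N$, the exponent becomes $\Omega(\tau \log N)$, so the failure probability is at most $N^{-\Omega(\tau)}$. Choosing $\tau$ sufficiently large (and $\delta \leq \epsilon/2 \leq 1/10$ as in Lemma~\ref{lem:kirk1}) makes this $o(N^{-\rho})$ for any constant $\rho$, yielding the claim. The only delicate part is the bookkeeping around the two-sided sampling error $(1\pm\delta)$ when translating between $\mu$ and $\ratio_{i,j}\, s_{i,j}$; the constants are essentially forced to match those in Lemma~\ref{lem:kirk1} so that a single choice of $\tau$ and $\delta$ works for both lemmas simultaneously.
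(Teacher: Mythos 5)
Your proposal is correct and follows essentially the same route as the paper: condition on $s_{i,j}\ge \frac{\tau}{3\epsilon^2}k'^2\log^2 N$, note that the bad event forces $t_{i,j}\le \mu-\lambda$ with $\lambda=\frac{(\epsilon-\delta-\epsilon\delta)s_{i,j}}{2k'^2\log N}$ (your $(1-\delta)(1+\epsilon)-1$ is exactly this quantity), and apply the Lower Chernoff Bound to get failure probability $N^{-\Omega(\tau)}$. The paper's proof is the same calculation with the same constants, so no further comment is needed.
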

\begin{proof} We can prove that the probability of $\ratio'_{i,j} < \frac{1}{2k'^2\log{N}}$ and $\ratio_{i,j} \geq \frac{1+\epsilon}{2k'^2\log{N}}$ is small. Multiplying the conditions for this case by
$s_{i,j}$ we can conclude that 
$t_{ij} < \frac{s_{i,j}}{2\kp^2 \log{N}}$ and $\mu \geq(1-\delta) \frac{(1+\epsilon) s_{i,j}}{2\kp^2 \log{N}}$.
And thus $t_{i,j} \le \mu - \lambda$
where $\lambda = \frac{(\epsilon-\delta -\epsilon\delta) s_{i,j}}{2\kp^2 \log{N}}$. Then we can conclude
that:
\begin{align*}
\pr[t_{i,j} \leq \mu - \lambda  ] &\le \exp\left(-\frac{\lambda^2}{3\mu}\right) 
&\textnormal{[Lower Chernoff Bound]}\\
&= \exp\left(-\frac{\left(\frac{(\epsilon-\delta -\epsilon\delta) s_{i,j}}{2\kp^2 \log{N}}\right)^2}{3\mu}\right) 
& \\
&\le \exp\left(-\frac{\left(\frac{(\epsilon-\delta -\epsilon\delta) s_{i,j}}{2\kp^2 \log{N}}\right)^2}{3\frac{1-\epsilon}{2\kp^2 \log{N}} (1+\delta) s_{i,j}}\right) 
& \\
&= \exp\left(-\frac{\left(\frac{(\epsilon-\delta -\epsilon\delta)^2 s_{i,j}}{2\kp^2 \log{N}}\right)}{3(1-\epsilon)(1+\delta) }\right) 
& \\
&\le \exp\left(\frac{-(\epsilon-\delta -\epsilon\delta)^2 s_{i,j}}{12 \kp^2 \log{N}}\right) 
& \mbox{[$\delta< \eps \leq 1$]} \\
&\le \exp\left(\frac{-(\epsilon-\delta -\epsilon\delta)^2 ( \frac{\tau}{3\epsilon^2}\kp^2 \log^2 N)}{12 \kp^2 \log{N}}\right) 
& \textnormal{[Using our lower bound on $ s_{i,j}$]} \\
%&= \exp\left(\frac{- \tau \log N)}{24  }\right)
\end{align*}
Therefore, for $\delta \leq \eps/2 \leq 1/10$ and $\tau \geq 30$ this case cannot happen with high probability.

\end{proof}

%The prior lemma implies the the following corollary. In particular, this ensures that if $f_{i,j}$ is sufficiently large then $f'_{i,j}$ must be above the threshold to contribute to the weight $w'_i$ and is close to $f_{i,j}$. Moreover, if $f_{i,j}$ is sufficiently small then $f'_{i,j}$ does not contribute to $w'_i$.

%\begin{corollary}\label{cor:samplegood}
%With high probability, if $f_{i,j} \geq \frac{1+\eps}{2\kp^2\log N}$ then $|f_{i,j} - f'_{i,j}| \leq \eps f_{i,j}$. Moreover, if $f_{i,j} \leq \frac{1-\eps}{2\kp^2\log N}$ then $f'_{i,j} < \frac{1}{2\kp^2\log N}$.
%\end{corollary}

We now seek to bound the fractional weights computed by the algorithm. Let $\Delta_i(p)$ denote the total weight received by a point $p \in S\setminus \cC$ from other nodes (including the initial weight one on $p$). Furthermore, let $\Delta_o(p)$ denote the total weight sent by $p$ to all other nodes. Notice that in the flow step $\Delta_o(p) = \Delta_i(p)$ for all $p$ in $S \setminus C$. 

%The second statement in the lemma will be used later when
\begin{lemma}
\label{property:total_weights}
Let $\Delta_i(p)$ denote the total weight received by a point $p \in S\setminus \cC$ from other nodes (including the initial weight one on $p$). Furthermore, let $\Delta_o(p)$ denote the total weight sent by $p$ to all other nodes. With high probability, for all $q\in S$, $\Delta_i(q) \leq 1 + \frac{1+2\epsilon}{\log{N}}\max_{p:(p,q) \in E}\Delta_o(p)$. 
\end{lemma}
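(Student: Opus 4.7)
The plan is to fix an arbitrary $q \in S \setminus \cC$ and enumerate the incoming edges $(p,q) \in E$, grouped by the center $c_i = c_{\alpha(p)}$ that $p$ is closest to. Since $q$ is not a center, only case (b) of the construction of $G$ can create an edge into $q$: the point $p$ must lie in some donut $D_{i, j+1}$ around $c_i$, that donut must be undersampled, and $q$ must belong to $M_{i, j}$---that is, $q \in D_{i, j}$ and $\alpha(q) = h$ where $c_h$ is the plurality nearest-center among points of $D_{i, j}$. Because the donuts around a fixed center $c_i$ partition the space, the value of $j$ is pinned down by $i$ and $q$, so the incoming edges decompose into at most $\kp$ groups indexed by $i$.

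For each such $i$, I would bound the inflow using two estimates. First, every $p$ in $P_i := \{p' \in D_{i, j+1} : \alpha(p') = i\}$ has outdegree exactly $|M_{i, j}|$ in $G$, so it sends weight $\Delta_o(p)/|M_{i, j}|$ to $q$; letting $M := \max_{p : (p,q) \in E} \Delta_o(p)$, the total inflow from group $i$ is at most $(|P_i|/|M_{i, j}|) \cdot M$. Second, by the plurality definition of $c_h$, a pigeonhole argument yields $|M_{i, j}| \ge |D_{i, j}|/\kp$; and because $D_{i, j+1}$ is undersampled, a high-probability application of Lemma~\ref{lem:kirk2} (union-bounded over the $O(\kp \log N)$ pairs) gives $|P_i|/|D_{i, j+1}| = \ratio_{i, j+1} \le (1+\eps)/(2 \kp^2 \log N)$. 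The ball-size guarantee $|B_{i, j}| \in [(1-\delta)2^j, (1+\delta)2^j]$ then implies $|D_{i, j}| \ge (1-3\delta) 2^{j-1}$ and $|D_{i, j+1}| \le (1+3\delta) 2^j$, so $|D_{i, j+1}|/|D_{i, j}| \le 2(1+3\delta)/(1-3\delta)$.

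Combining these estimates, the per-$i$ contribution is at most $\frac{(1+\eps)(1+3\delta)}{(1-3\delta)\, \kp \log N}\, M$. Summing over the at most $\kp$ values of $i$ and choosing $\delta$ sufficiently small relative to $\eps$ (a short calculation shows $\delta \le \eps/(6+9\eps)$ suffices), the aggregate contribution is bounded by $\frac{1 + 2\eps}{\log N}\, M$, and adding the initial unit weight on $q$ yields the lemma. The high-probability qualifier is preserved because it is only invoked through Lemma~\ref{lem:kirk2} at the $(i,j+1)$ pairs that actually feed $q$, of which there are at most $O(\kp \log N)$ globally.

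The hard part will be ensuring that the $\kp$-fold sum over centers $i$ is absorbed by the $1/\kp$-type factors coming from the undersampled and plurality conditions. The construction of $G$ is engineered precisely so this cancellation works: an undersampled donut contributes only a $1/\kp^2$ fraction of its points as edge-sources, while the plurality definition of $c_h$ guarantees $|M_{i,j}|$ captures at least a $1/\kp$ fraction of $|D_{i,j}|$, and the donut-size ratios are $O(1)$. Without this delicate matching between the two $1/\kp$ factors and the $\kp$-fold outer sum, the multiplicative blow-up in $\Delta_i(q)$ would be $\Theta(\kp/\log N)$ instead of the $O(1/\log N)$ required to eventually conclude that the maximum flow through any node is $1 + O(\eps)$.
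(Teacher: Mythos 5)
Your proposal is correct and follows essentially the same argument as the paper: partition the incoming edges of $q$ by the predecessor's closest center, bound the size of each group via the undersampled condition together with Lemma~\ref{lem:kirk2}, lower-bound each predecessor's outdegree via the pigeonhole/plurality property of $M_{i,j}$, and sum over the at most $\kp$ groups. Your version is in fact slightly more careful with the donut indices and the $(1\pm 3\delta)$ donut-size arithmetic than the paper's write-up, but the decomposition and the cancellation of the $\kp$-fold sum against the $1/\kp^2$ and $1/\kp$ factors are identical.
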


\begin{proof}
Fix the point $q$ that redirects its weight (has outgoing arcs in $G$). Consider its direct predecessors: $P(q) = \{p: (p,q) \in E\}$.  Partition $P(q)$ as follows: $P(q) = \bigcup_{i = 1, \ldots, \kp}P_{c_i}(q)$, where $P_{c_i}(q)$ is the set of points that have flowed their weights into $q$, but $c_i$ is actually their closest center in $\cC$. Observe the following. The point $q$ can only belong to one donut around $c_i$.  Due to this, $P_{c_i}(q)$ is either empty or contains a set of points in a single donut around $c_i$ that redirect weight to $q$.

Fix $P_{c_i}(q)$ for some $c_i$. If this set is non-empty suppose this set is in the $j$-th donut around $c_i$. Conditioned on the events stated in Lemmas \ref{lem:kirk1} and \ref{lem:kirk2}, since the points in $P_{c_i}(q)$ are undersampled, we have $|P_{c_i}(q)| \leq \frac{(1+\epsilon)2^{j-1}}{2\kp^2 \log{N}}$. Consider any $p \in P_{c_i}(q)$.  Let $\beta_i$ be the number of points that $p$ charges its weight to (this is the same for all such points $p$). It is the case that $\beta_i$ is at least $\frac{(1-\delta)2^{j-1}}{2\kp}$ since $p$ flows its weights to the points that are assigned to the center that has the most number of points assigned to it from $c_i$'s $(j-1)$th donut.   

Thus, $q$ receives weight from $|P_{c_i}(q)| \leq \frac{(1+\epsilon)2^{j-1}}{2\kp^2 \log{N}}$ points and each such point gives its weight to at least $\frac{(1-\delta)2^{j-1}}{2\kp}$ points with equal split.
The total  weight that $q$ receives from points in $P_{c_i}(q)$ is at most the following. 

\begin{align*}
&\qquad \frac{2\kp}{(1-\delta)2^{j-1}}  \sum_{p \in P_{c_i}(q)} \Delta_o(p) &\\
 &\leq \frac{2\kp}{(1-\delta)2^{j-1}}  \sum_{p \in P_{c_i}(q)} \max_{p \in P_{c_i}(q) }\Delta_o(p)& \\
 &\leq \frac{2\kp}{(1-\delta)2^{j-1}} \cdot \frac{(1+\epsilon) \cdot 2^{j-1}}{2\kp^2 \log{N}} \  \max_{p \in P_{c_i}(q) }\Delta_o(p)  &\mbox{[$|P_{c_i}(q)| \leq \frac{(1+2\epsilon)2^{j-1}}{2\kp^2 \log{N}}$]}\\ 
 &\leq \frac{1+2\eps}{\kp \log N} \max_{p \in P_{c_i}(q)}\Delta_o(p)
  & \mbox{[$\delta \leq \frac{\epsilon}{2} \leq \frac{1}{10}$]}
\end{align*}

Switching the max to $\max_{p:(p,q) \in E}\Delta_o(p)$, summing over all centers $c_i \in \cC$ and adding the original unit weight on $q$ gives the lemma.

%These are points that $q$ has outgoing edges to in $G$.  It is the case that $\beta$ is at least $\frac{2^{j-1}}{2\kp}$ since $q$ redirects it

%it is the majority set in the previous donut. For any flow $f$, and $p \in P_{C_i}(q)$ at most $\frac{1}{Q_i}\Delta_f(p)$ is passed from $p$ to $q$ by $f$. We have
%\begin{align*}
%\cF(q) &= 1 + \sum_{f \in \cF} \sum_{C_i} \sum_{p \in P_{C_i}(q)} \Delta_f(p,q)\\
%&\leq 1 + \sum_{C_i} \frac{1}{Q_i} \sum_{p \in P_{C_i}(q)}  \sum_{f \in \cF} \Delta_f(p)\\
%&\leq 1 + \sum_{C_i} \frac{1}{Q_i} \sum_{p \in P_{C_i}(q)} \cF(p) \\
%&\leq 1 + \sum_{C_i} \frac{1}{Q_i} |P_{C_i}(q)|\max_{p:(p,q) \in E}(\cF(p))\\
%&\leq 1 +  \kp \cdot \frac{1+\epsilon}{\kp^2 \log{n}}\max_{p:(p,q) \in E}(\cF(p))\\
%&= 1 + \frac{1+\epsilon}{\kp \log{n}}\max_{p:(p,q) \in E}(\cF(p))
%\end{align*}

\end{proof}

The following crucial lemma  bounds the maximum weight that a point can receive. 

\begin{lemma}
\label{lem:max_weight}
% With high probability, for any $p \in S$, $\Delta(p) \leq 1+ \frac{2}{\log{N}}$.
Fix  $\eta$ to be a constant smaller than $\frac{\log(N)}{10}$ and $\eps <1$. Say that for all $q \in S \setminus C$ it is the case that $\Delta_o(q) =\eta \Delta_i(q)$. Then, with  high probability for any $p \in S \setminus C$ it is the case that $\Delta_i(p) \leq 1+ \frac{2\eta}{\log{N}}$.
\end{lemma}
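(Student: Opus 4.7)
}

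The plan is to prove the bound by induction on the topological order of $G$. Lemma \ref{lemma:acyclic} guarantees that $G$ is a DAG, so such an order exists. Throughout, I would condition on the high-probability event under which Lemma \ref{property:total_weights} holds for every vertex simultaneously; since Lemmas \ref{lem:kirk1} and \ref{lem:kirk2} (which underlie Lemma \ref{property:total_weights}) fail only with inverse-polynomial probability for each $(i,j)$, a union bound over all $O(k' \log N)$ pairs still gives high probability.

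For the base case, consider a source vertex $p \in S \setminus \cC$, i.e., a point with no in-neighbors in $G$. Then the only weight $p$ ever holds is its own initial unit weight, so $\Delta_i(p) = 1 \le 1 + \frac{2\eta}{\log N}$, as required.

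For the inductive step, pick $p \in S \setminus \cC$ with at least one in-neighbor, and assume the bound $\Delta_i(q) \le 1 + \frac{2\eta}{\log N}$ holds for every in-neighbor $q$ of $p$. By the hypothesis of the lemma, $\Delta_o(q) = \eta \Delta_i(q) \le \eta\left(1 + \frac{2\eta}{\log N}\right)$ for every such $q$. Applying Lemma \ref{property:total_weights} to $p$ then yields
\begin{align*}
\Delta_i(p) &\le 1 + \frac{1+2\epsilon}{\log N} \cdot \max_{q : (q,p) \in E} \Delta_o(q) \\
&\le 1 + \frac{(1+2\epsilon)\eta}{\log N} + \frac{2(1+2\epsilon)\eta^2}{\log^2 N}.
\end{align*}
It then suffices to verify that the right-hand side is at most $1 + \frac{2\eta}{\log N}$; this reduces to checking $(1+2\epsilon) + \frac{2(1+2\epsilon)\eta}{\log N} \le 2$, which follows from the hypotheses $\eta \le \log N /10$ and a sufficiently small constant $\epsilon$ (the setting in the paper, where $\epsilon \le 1/10$, comfortably suffices, giving the slack $(1+2\epsilon)(1 + 2\eta/\log N) \le 1.2 \cdot 1.2 < 2$).

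The step I expect to require the most care is the parameter tuning in the final inequality. The induction essentially says ``each layer of the DAG amplifies the incoming-weight bound by a factor $\eta(1+2\epsilon)/\log N$ on top of the baseline unit weight,'' and to keep the bound from drifting above $1 + 2\eta/\log N$ we must ensure that both $\epsilon$ and $\eta/\log N$ are small enough that the geometric series of amplifications stays bounded by $2\eta/\log N$. The conditions $\eta < \log N /10$ and $\epsilon$ a small constant are exactly tight enough to make this work; any weaker constant would force the inductive bound to grow from layer to layer and eventually break.
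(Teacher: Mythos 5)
Your proof is correct and follows essentially the same route as the paper's: induction over the DAG (you use topological order, the paper uses longest-path-to-node length, which is equivalent), feeding the inductive bound through Lemma~\ref{property:total_weights} via $\Delta_o(q)=\eta\Delta_i(q)$ and then checking the arithmetic with $\eta\le\log N/10$ and $\epsilon$ a small constant. Your final parameter check is in fact cleaner than the paper's (which contains a typo, writing $\Delta_o(p)\le 2(1+\tfrac{2\eta}{\log N})$ where $\eta$ is meant), and like the paper you implicitly need $\epsilon$ bounded by a small constant such as $1/5$ rather than the nominal $\epsilon<1$ in the lemma statement.
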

\begin{proof}%[Lemma \ref{lem:max_weight}]
We can easily prove this by induction on nodes.  The lemma is true for all nodes that have no incoming edges in $G$.  Now assume it is true for all nodes whose longest path that reaches them in $G$ has length $t-1$. Now we prove it for nodes whose longest path that reaches then in $G$ is $t$.  Fix such a node $q$. For any node $p$ such that $(p,q) \in E$, by induction we have $\Delta_i(p) \leq 1 +\frac{2\eta}{\log{N}}$, so $\Delta_o(p) \leq 2(1+\frac{2\eta}{\log{N}})$. By Lemma~\ref{property:total_weights}, $\Delta_i(q) \leq 1 + \frac{1+2\epsilon}{\log{N}}\max_{p:(p,q) \in E}\Delta_o(p) \leq  1 + \left (\frac{\eta(1+2\epsilon)}{\log{N}}  \right) \left (1+ \frac{2\eta}{\log{N}} \right ) = 1+\frac{\eta}{\log{N}} + \frac{\eta}{\log{N}} \cdot \frac{2(1+2\epsilon)\eta +2\epsilon}{\log{N}} \leq 1+ \frac{2\eta}{\log{N}}$.

\end{proof}

\subsection{Comparing Alternative Weights to Fractional Weights} 
\label{subsubsect:alternatefractional}

It only remains to bound the cost of mapping points to the centers they  contribute weight to.  This can be done by iteratively charging the total cost of reassigning each node with the flow.  In particular, each point will only pass its weight to nodes that are closer to their center. We can charge the flow through each node to the assignment cost of that node to its closest center, and argue that the cumulative reassignment cost bounds the real fractional assignment cost. Further, each node only has $1+\eps$ flow going through it.  This will be sufficient to bound the overall cost in Lemma~\ref{lem:bound-cost-by-alg}.

\begin{lemma}\label{lem:close-weights}
With high probability, for every center $c_i$, it is the case that the estimated weight $w'_i$ computed by the weighting algorithm is $(1 \pm 2\eps)w^f_i$ where $w^f_i$ is the fractional weight of $i$.
\end{lemma}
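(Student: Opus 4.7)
The plan is to relate both $w'_i$ and the fractional weight $w^f_i$ to a common intermediate quantity, namely a weighted count of the direct predecessors of $c_i$ in the flow graph $G$. First we characterize these predecessors: by the construction of $G$, a point $p$ has a direct edge $(p, c_i) \in E$ if and only if $\alpha(p) = i$ and either $p$ lies in the innermost donut $D_{i,1}$ or the donut $D_{i,j}$ (with $j \geq 2$) containing $p$ is not undersampled. Because $c_i$ is a sink and every such predecessor's unique outgoing edge points to $c_i$, we obtain the identity $w^f_i = \sum_{p : (p,c_i) \in E} \Delta_i(p)$.

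Next, since each non-sink passes on exactly the weight it receives ($\Delta_o(p) = \Delta_i(p)$), Lemma~\ref{lem:max_weight} applies with $\eta = 1$, giving $1 \leq \Delta_i(p) \leq 1 + 2/\log N$ for every predecessor. Writing $|P_i|$ for the number of predecessors of $c_i$, this sandwiches
\[
|P_i| \;\leq\; w^f_i \;\leq\; \bigl(1 + \tfrac{2}{\log N}\bigr)\,|P_i|.
\]
The third step is to express $|P_i|$ in terms of the true ratios and donut sizes. Summing over the relevant donut indices, $|P_i| = \sum_{j \in J^*_i} |A_{i,j}|$, where $J^*_i$ is the set of indices $j$ whose donut $D_{i,j}$ is not undersampled (together with $j=1$). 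From the algorithm's choice of radii, $|B_{i,j}| \in [(1-\delta)2^j, (1+\delta)2^j]$, which yields $|D_{i,j}| = (1 \pm 3\delta)\,2^{j-1}$, and hence $|A_{i,j}| = (1 \pm 3\delta)\,\ratio_{i,j}\,2^{j-1}$.

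Finally, we compare term-by-term with $w'_i = \sum_{j : \ratio'_{i,j} \geq 1/(2\kp^2\log N)} \ratio'_{i,j} \cdot 2^{j-1}$. The summation index set for $w'_i$ is precisely the set of non-undersampled donuts, matching $J^*_i$ (the $j=1$ case holding since $\ratio_{i,1}$ is essentially $1$). Since the threshold condition $\ratio'_{i,j} \geq 1/(2\kp^2 \log N)$ triggers the first alternative of Lemma~\ref{lem:kirk1}, we get $\ratio'_{i,j} = (1 \pm \eps)\,\ratio_{i,j}$ with high probability, so $\ratio'_{i,j} \cdot 2^{j-1} = (1 \pm O(\eps))\,|A_{i,j}|$ once $\delta \leq \eps/2$. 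Summing and combining with the Lemma~\ref{lem:max_weight} sandwich, we conclude $w'_i = (1 \pm 2\eps)\,w^f_i$ after carefully tracking constants.

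The main obstacle is the boundary behaviour at the threshold $1/(2\kp^2 \log N)$. It must be ruled out that the algorithm omits a donut whose true ratio is substantially above threshold (which would leave a large portion of $w^f_i$ untracked by $w'_i$); this is exactly what Lemma~\ref{lem:kirk2} provides. Symmetrically, Lemma~\ref{lem:kirk1} rules out the algorithm including a donut whose true ratio is far below threshold, which would over-count relative to $w^f_i$. A union bound over the $O(\kp \log N)$ pairs $(i,j)$, together with the high-probability guarantees of Lemmas~\ref{lem:kirk1}, \ref{lem:kirk2}, and \ref{lem:max_weight}, completes the argument.
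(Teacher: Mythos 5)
Your proposal is correct and follows essentially the same route as the paper: both sandwich the fractional weight $w^f_i$ between the count of $c_i$'s direct predecessors and $(1+\tfrac{2}{\log N})$ times that count via Lemma~\ref{lem:max_weight} with $\eta=1$, and then argue that $w'_i$ is a $(1\pm O(\eps))$ estimate of that same count using Lemmas~\ref{lem:kirk1} and~\ref{lem:kirk2} with a union bound. Your write-up is in fact somewhat more explicit than the paper's, which asserts the step ``$w'_i$ is a $(1\pm\epsilon)$ approximation of $w^s_i$'' without spelling out the donut-size and threshold bookkeeping you supply.
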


\begin{proof}
Apply union of bounds to Lemma \ref{lem:kirk1} and \ref{lem:kirk2}  over all $i$ and $j$. 

Fix a center $c_i$. Consider all of the points that are closest to $c_i$ and are not undersampled. Let $w^s_i$ denote the number of these points. All the incomming edges to $c_i$ in $G$, are coming from these points; therefore based on Lemma \ref{lem:max_weight}, $w^s_i \leq w^f_i \leq w^s_i (1+\frac{2}{\log(N)})$. On the other hand, $w'_i$ is $(1\pm \epsilon)$ approximation of $w^s_i$.  Therefore, $ \frac{1-\epsilon}{1+\frac{2}{\log(N)}} w^f_i\leq w'_i \leq (1+\epsilon) w^f_i$.  Assuming that $\log N$ is sufficiently larger than $\eps$, the lemma follows. 

\end{proof}

\subsection{Comparing Fractional Weights to Optimal} 
\label{subsubsect:fractionaloptimal}

Next we bound the total cost of the fractional assignment defined by the flow. According to the graph $G$, any point $p \in S$ and $c_i \in \cC$, we let $\weight(p, c_i)$ be the fraction of weights that got transferred from $p$ to $c_i$. Naturally we have $\sum_{c_i \in \cC} \weight(p, c_i) = 1$ for any $p \in S$ and the fractional weights $w_i^f = \sum_{p \in S}\weight(p, c_i)$ for any $c_i \in \cC$.

\begin{lemma} \label{lem:bound-cost-by-alg}
Let  $\phi_{opt}$ be the optimal $k$-means cost on the original set $S$. With high probability, it is the case that:
$$\sum_{p\in S}\sum_{c_i \in \cC}\weight(p, c_i)\|p - c_i\|^2 \leq 160 (1+\eps) \phi_{opt}$$
\end{lemma}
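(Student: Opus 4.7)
The plan is to prove $\sum_{p \in S} \Phi(p) \leq 160(1+\epsilon)\phi_{opt}$, where $\Phi(p) := \sum_{c_i \in \cC} \weight(p, c_i) \|p - c_i\|^2$, by setting up a recursion on the flow network $G$ and then reducing to the standard $k$-means++ oversampling guarantee. Let $\phi(p) := \|p - c_{\alpha(p)}\|^2$ denote $p$'s cost under the nearest-center assignment. For $p$ whose donut is not undersampled, $\Phi(p) = \phi(p)$ directly by the construction of $G$. For $p$ in an undersampled donut $D_{\alpha(p),j}$, the flow at $p$ distributes uniformly over $M_{\alpha(p), j-1}$, so by linearity $\weight(p, c_i) = |M_{\alpha(p),j-1}|^{-1}\sum_{q \in M_{\alpha(p),j-1}} \weight(q, c_i)$.

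Applying the squared-norm triangle inequality $\|p-c_i\|^2 \leq 2\|p-q\|^2 + 2\|q-c_i\|^2$ inside the inner sum and interchanging summations yields the recursion
\[
\Phi(p) \;\leq\; \frac{2}{|M_{\alpha(p),j-1}|}\sum_{q \in M_{\alpha(p),j-1}}\|p-q\|^2 \;+\; \frac{2}{|M_{\alpha(p),j-1}|}\sum_{q \in M_{\alpha(p),j-1}}\Phi(q).
\]
Since every $q \in M_{\alpha(p),j-1} \subset B_{\alpha(p), j-1}$ while $p \notin B_{\alpha(p), j-1}$, the donut geometry gives $\|q-c_{\alpha(p)}\| \leq r_{\alpha(p),j-1} \leq \|p-c_{\alpha(p)}\|$, so $\|p-q\|^2 \leq 4\phi(p)$ and the first term is at most $8\phi(p)$. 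Hence $\Phi(p) \leq 8\phi(p) + \frac{2}{|M_{\alpha(p),j-1}|}\sum_q \Phi(q)$ holds uniformly (trivially in the non-undersampled case).

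Summing over all $p \in S$ and swapping the outer and inner sums, the coefficient of each $\Phi(q)$ on the right becomes $K(q) := 2\sum_{p:(p,q)\in E}\frac{1}{|M_{\alpha(p),j(p)-1}|}$; since $\Delta_o(p) \geq 1$, we have $K(q) \leq 2\sum_{p:(p,q)\in E}\frac{\Delta_o(p)}{|M_{\alpha(p),j(p)-1}|}$, and the per-center partitioning argument used in the proof of Lemma~\ref{property:total_weights} combined with Lemma~\ref{lem:max_weight} shows $K(q) \leq O(1/\log N)$ with high probability. Rearranging gives $\sum_p \Phi(p) \leq 8(1+o(1))\sum_p \phi(p)$ for sufficiently large $N$, where the $o(1)$ slack can be absorbed into a $(1+\epsilon)$ factor.

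Because $\cC$ was obtained by $k$-means++ sampling of $k' = \Theta(k\log N)$ centers, the oversampling analysis of \cite{aggarwal2009adaptive} gives $\sum_p \phi(p) \leq C \cdot \phi_{opt}$ for a constant $C$ with high probability; a single union bound over all the high-probability events invoked above (Lemmas~\ref{lem:kirk1}, \ref{lem:kirk2}, \ref{lem:max_weight}, the ADK success event, and the ball-approximation guarantee of Lemma~\ref{lem:ball_computing}) then yields the stated $160(1+\epsilon)\phi_{opt}$ bound once the constants are tightened. The main obstacle is pinning down the exact constants: showing that the factor of $8$ coming from the triangle-inequality step combines with the ADK constant to give precisely $160$, while ensuring that all of the $O(1/\log N)$, $O(\epsilon)$, and $O(\delta)$ slack gets absorbed into the single $(1+\epsilon)$ factor. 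A secondary bookkeeping concern is handling the boundary case where $p \in D_{\alpha(p),1}$, in which the recursion terminates immediately at $c_{\alpha(p)}$ and must be checked separately.
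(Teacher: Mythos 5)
Your argument is correct and reaches the same bound, but it packages the key step differently from the paper. The paper unrolls the flow along entire paths of $G$: it iterates the relaxed triangle inequality to get $\|p-c_i\|^2 \leq 8\bigl(\|p-c_{\alpha(p)}\|^2 + \sum_{j}2^j\|q_j - c_{\alpha(q_j)}\|^2\bigr)$, sums over all paths weighted by $\Delta(P)$, and then controls the geometrically growing coefficients $2^j$ by introducing an auxiliary flow $\Delta'$ that doubles at every node and invoking Lemma~\ref{lem:max_weight} with $\eta=2$ to show $\Delta'_i(q)\leq 1+4/\log N$. You instead apply the triangle inequality only once per node, obtain the aggregate self-referential inequality $\sum_p\Phi(p)\leq 8\sum_p\phi(p) + \max_q K(q)\cdot\sum_q\Phi(q)$, and solve it; the feedback coefficient $K(q)\leq 2(1+2\epsilon)/\log N$ follows directly from the counting in the proof of Lemma~\ref{property:total_weights} (you do not even need $\Delta_o(p)\geq 1$ or Lemma~\ref{lem:max_weight} for this, since $\sum_{p:(p,q)\in E}1/|M_{\alpha(p),j(p)-1}|\leq (1+2\epsilon)/\log N$ is exactly the per-center partition bound). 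Your route is arguably cleaner: it dispenses with the path-sum machinery $\mathcal{P}_q(j)$, $\Delta(P)$, and the doubled flow $\Delta'$, and makes transparent why the factor-$2$ growth per hop is harmless (it is dominated by the $O(1/\log N)$ attenuation of incoming flow). What the paper's explicit unrolling buys is a per-point statement, namely a bound on the charge accumulated at each individual node $q$, which is slightly more informative than your aggregate inequality but is not needed for the lemma. On your two stated concerns: the constant is indeed $8\times 20=160$, since the paper takes $k'\geq 1067k\log N$ so that Theorem~1 of \cite{aggarwal2009adaptive} gives $\phi^*\leq 20\phi_{opt}$ with high probability, and all $O(1/\log N)$ and $O(\delta)$ slack is absorbed into the single $(1+\epsilon)$ exactly as you propose; the boundary case $p\in D_{\alpha(p),1}$ is handled trivially because there the only outgoing edge goes to $c_{\alpha(p)}$, so $\Phi(p)=\phi(p)$ and no recursion term appears.
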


\begin{proof}
Let  $\phi^* = \sum_{p \in S} \|p - c_{\alpha(p)}\|^2$. Consider any $p \in S$ and center $c_i$ such that $\omega(p,c_i)> 0$.
Let $P$ be any path from $p$ to $c_i$ in $G$. If node $p$'s only outgoing arc is to its closest center $c_{\alpha(p)} = c_i$, then $P = p \rightarrow c_i$, we have $\sum_{c \in \cC}\weight(p, c) \|p - c\|^2 = \|p - c_{\alpha(p)}\|^2$. Otherwise assume $P = p \rightarrow q_1 \rightarrow q_2 \rightarrow \ldots \rightarrow q_\ell \rightarrow c_i$. Note that the closest center to $q_\ell$ is $c_i$.   Let $\Delta(P)$ be the fraction of the original weight of $1$ on $p$ that is given to $c_i$ along this path according to the flow of weights. As we observed in the proof of Lemma \ref{lemma:acyclic}, we have  $\|p - c_{\alpha(p)}\| > \|q_1 - c_{\alpha(p)}\| \geq \|q_1 - c_{\alpha(q_1)}\| > \|q_2 - c_{\alpha(q_1)}\| \geq \|q_2 - c_{\alpha(q_2)}\| > \ldots > \|q_\ell - c_{\alpha(q_\ell)}\|$.   This follows because for any arc $(u,v)$ in the graph, $v$ is in a donut closer to $c_{\alpha(u)}$ than the donut $u$ is in, and $v$ is closer to  $c_{\alpha(v)}$ than $c_{\alpha(u)}$. 

We make use of the relaxed triangle inequality for squared $\ell_2$ norms. For any three points $x,y,z$, we have $\|x-z\|^2 \leq 2(\|x-y\|^2 + \|y-z\|^2)$. Thus, we bound $\|p - c_i\|^2$ by
\begin{align*}
    \|p - c_i\|^2 &= \|p- c_{\alpha(p)} + c_{\alpha(p)} - q_1 + q_1 - c_i\|^2 \\
    &\leq 2\|p- c_{\alpha(p)} + c_{\alpha(p)} - q_1\|^2 + 2\|q_1 - c_i\|^2   &&\text{[relaxed triangle inequality]}\\
    & \leq 2(\|p- c_{\alpha(p)}\|+\| c_{\alpha(p)} - q_1\|)^2 + 2\|q_1 - c_i\|^2 && \mbox{[triangle inequality]}\\
    & \leq 8\|p-c_{\alpha(p)}\|^2 + 2\|q_1 - c_i\|^2  && \mbox{[$\|p- c_{\alpha(p)}\|\geq\| c_{\alpha(p)} - q_1\|$]}.
\end{align*}
    
    Applying the prior steps to each $q_i$ gives the following.
\begin{align*}
 %   & = 8\|p-c_{\alpha(p)}\|^2 + 2\|q_1- c_{\alpha(q_1)} + c_{\alpha(q_1)} - q_2 + q_2 - c_i\|^2 \\
  %  & \leq 8\|p-c_{\alpha(p)}\|^2 + 16\|q_1- c_{\alpha(q_1)}\|^2 + 4\|q_2 - c_i\|^2 \\
  %  & \ldots \\
   % & \leq 8(\|p-c_{\alpha(p)}\|^2 + 2\|q_1- %c_{\alpha(q_1)}\|^2 + 2^2\|q_2 - c_{\alpha(q_2)}\|^2 + \ldots + 2^\ell \|q_\ell - c_{\alpha(q_\ell)}\|^2) \\
   \|p - c_i\|^2  &\leq 8(\|p - c_{\alpha(p)}\|^2 + \sum_{j=1}^\ell 2^j \|q_j - c_{\alpha(q_j)}\|^2)
\end{align*}

Let $\mathcal{P}_q(j)$ be the set of all paths $P$ that reach point $q$ using $j$ edges. If $j=0$, it means $P$ starts with point $q$. We seek to bound $\sum_{j=0}^\infty  2^j \sum_{P \in \mathcal{P}_q(j)} \Delta(P) \|q - c_{\alpha(q_j)}\|^2$. This will bound the charge on point $q$ above over all path $P$ that contains it.   

 Define  a weight function $\Delta'(p)$ for each node $p \in S \setminus C$.  This will be a new flow of weights like $\Delta$, except now the weight increases at each node. In particular,  give each node initially a weight of $1$. Let $\Delta'_o(p)$ be the total weight leaving $p$.  This will be evenly divided among the nodes that have outgoing edges from $p$.  Define $\Delta'_i(p)$ to be the weight incoming to $p$ from all other nodes plus one, the initial weight of $p$.  Set $\Delta'_o(p)$ to be $2 \Delta'_i(p)$, twice the incoming weight.  

 Lemma~\ref{lem:max_weight} implies that the maximum weight of any point $p$ is $\Delta'_i(p) \leq 1 + \frac{4}{\log N}$.  Further notice that for any $q$ it is the case that $\Delta'_i(q)= \sum_{j=0}^\infty  2^j \sum_{P \in \mathcal{P}_q(j)} \Delta(P) $. Letting $\mathcal{P}(p,c_i)$ be the set of all paths that start at $p$ to center $c_i$. Notice such paths correspond to how $p$'s unit weight goes to $c_i$.  We have $\weight(p, c_i) = \sum_{P \in \mathcal{P}(p, c_i)} \Delta(P)$. Let $\mathcal{P}$ denote the set of all paths, $\ell(P)$ denote the length of path $P$ (number of edges on $P$) , and let $P(j)$ denote the $j$th node on path $P$. Thus we have the following. 

\begin{eqnarray*}
&&\sum_{p\in S}\sum_{c_i \in \cC}\weight(p, c_i)\|p - c_i\|^2\\
&= &\sum_{p\in S}\sum_{c_i \in \cC}\sum_{P \in \mathcal{P}(p,c_i)} \Delta(P) \|p - c_i\|^2 \\
&\leq & 8\sum_{p\in S}\sum_{c_i \in \cC} \sum_{P\in \mathcal{P}(p,c_i)} \Delta(P)(\sum_{j=0}^{\ell(p) - 1}2^j \|P(j) - c_{\alpha(P(j))}\|^2 ) \\
& = & 8 \sum_{P \in \mathcal{P}} \Delta(P)(\sum_{j=0}^{\ell(p) - 1}2^j \|P(j) - c_{\alpha(P(j))}\|^2 ) \\
& = & 8\sum_{q \in S} \sum_{j=0}^{+\infty}\sum_{P \in \mathcal{P}_q(j)} 2^j \Delta(P) \|q - c_{\alpha(q)}\|^2 \\
& = & 8\sum_{q \in S} \Delta'_i(q) \|q - c_{\alpha(q)}\|^2 \\
& \leq & \sum_{q \in S} 8(1+\frac{4}{\log{N}})\|q - c_{\alpha(q)}\|^2 = 8(1+\frac{4}{\log{N}})\phi^* 
\end{eqnarray*}
Lemma \ref{lem:bound-cost-by-alg} follows because if $k' \geq 1067k\log{N}$, $\phi^* \leq 20 \phi_{opt}$ with high probability by Theorem~1 in \cite{aggarwal2009adaptive}.

\end{proof}

Finally, we prove that finding any $O(1)$-approximation solution for optimal weighted $k$-means on the set $(\cC, \cW')$ gives a constant approximation for optimal $k$-means for the original set $S$. Let $\cW^f = \{w_1^f, \ldots, w_\kp^f\}$ be the fractional weights for centers in $\cC$. Let $\phi_{\cW^f}^*$ denote the optimal weighted $k$-means cost on $(\cC, \cW^f)$, and $\phi_{\cW'}^*$ denote the optimal weighted $k$-means cost on $(\cC, \cW')$. We first prove that $\phi_{\cW^f}^* = O(1) \phi_\opt$, where $\phi_\opt$ denote the optimal $k$-means cost on set $S$.

\begin{lemma} \label{lem:const_appx_adaptive}
Let $(\cC, \cW^f)$ be the set of points sampled and the weights collected by fractional assignment $\weight$. With high probability, we have $\phi_{\cW^f}^* = O(1) \phi_\opt$.
\end{lemma}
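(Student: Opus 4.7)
The plan is to exhibit a specific feasible solution to the weighted $k$-means problem on $(\cC, \cW^f)$ whose cost is at most $O(1)\,\phi_\opt$, which immediately upper bounds $\phi_{\cW^f}^*$. A natural candidate is the set $C^* = \{c_1^*, \ldots, c_k^*\}$ of optimal centers for the (unweighted) $k$-means problem on $S$, so that $\phi_\opt = \sum_{p \in S} \|p - c^*(p)\|^2$ where $c^*(p)$ denotes $p$'s closest center in $C^*$.

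For this choice, the weighted cost is
\[
\sum_{c_i \in \cC} w_i^f \cdot \min_{c \in C^*} \|c_i - c\|^2 \;=\; \sum_{c_i \in \cC} \sum_{p \in S} \weight(p, c_i) \min_{c \in C^*} \|c_i - c\|^2,
\]
using the definition $w_i^f = \sum_{p\in S}\weight(p,c_i)$. For each pair $(p,c_i)$ I would upper bound $\min_{c \in C^*} \|c_i - c\|^2$ by $\|c_i - c^*(p)\|^2$ and then apply the relaxed triangle inequality for squared distances, obtaining
\[
\min_{c \in C^*} \|c_i - c\|^2 \;\le\; \|c_i - c^*(p)\|^2 \;\le\; 2\|c_i - p\|^2 + 2\|p - c^*(p)\|^2.
\]

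Plugging this in and swapping the order of summation splits the bound into two pieces. The first piece, $2\sum_{p}\sum_{c_i}\weight(p,c_i)\|p-c_i\|^2$, is exactly twice the quantity bounded by Lemma \ref{lem:bound-cost-by-alg}, giving $O(1)\,\phi_\opt$ with high probability. The second piece is
\[
2\sum_{p \in S}\|p-c^*(p)\|^2 \sum_{c_i \in \cC}\weight(p,c_i) \;=\; 2\sum_{p\in S}\|p-c^*(p)\|^2 \;=\; 2\phi_\opt,
\]
since the fractional assignment out of $p$ sums to $1$. Combining the two pieces yields $\phi_{\cW^f}^* \le O(1)\,\phi_\opt$ with high probability, as desired.

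There is no serious obstacle here, only mild bookkeeping: the only probabilistic input is Lemma \ref{lem:bound-cost-by-alg}, and the only geometric input is the relaxed triangle inequality (the same one used in the proof of that lemma). The remaining work is to invoke the guarantee of Lemma \ref{lem:bound-cost-by-alg} on the same high-probability event that justifies using the fractional weights, so that all inequalities hold simultaneously.
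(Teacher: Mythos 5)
Your proposal is correct and follows essentially the same route as the paper: both use the optimal (unweighted) centers for $S$ as a feasible solution for the weighted instance, apply the relaxed triangle inequality through each point $p$, and split the bound into twice the fractional assignment cost (controlled by Lemma \ref{lem:bound-cost-by-alg}) plus $2\phi_\opt$. No gaps.
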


\begin{proof}
Consider the cost of the fractional assignment we've designed. For $c_i \in \cC$, the weight is $w^f_i = \sum_{p \in S}\weight(p, c_i)$. Denote the $k$-means cost of $\weight$ by $\phi_\weight = \sum_{p \in S}\sum_{c \in \cC}\weight(p, c)\|p - c\|^2$. By Lemma \ref{lem:bound-cost-by-alg}, we have that $\phi_\weight \leq 160(1+\epsilon) \phi_\opt$. 

Intuitively, in the following we show $\phi_{\cW^f}^*$ is close to $\phi_\weight$. As always, we let $\cC_\opt$ denote the optimal centers for $k$-means on set $S$. For set of points $X$ with weights $Y:X \to \mathbb{R}^+$ and a set of centers $Z$, we let $\phi_{(X,Y)}(Z) = \sum_{x \in X}Y(x)\min_{z \in Z}\|x-z\|^2$ denote the cost of assigning the weighted points in $X$ to their closest centers in $Z$. Note that $\phi_{\cW^f}^*\leq \phi_{(\cC, \cW^f)}(\cC_\opt)$ since $\cC_\opt$ is chosen with respect to $S$. 

\begin{align*}
    \phi_{\cW^f}^* & \leq \phi_{(\cC, \cW^f)}(\cC_\opt) 
    \\
    &= \sum_{c_i \in \cC} (\sum_{p \in S}\weight(p, c_i)) \min_{c \in \cC_\opt}\|c_i - c\|^2
    & \mbox{[$w_i^f=\sum_{p \in S}\weight(p, c_i)$]}
    \\  
    &= \sum_{c_i \in \cC} \sum_{p \in S}\min_{c \in \cC_\opt}\weight(p, c_i) \|c_i - c\|^2
    \\
    & \leq \sum_{c_i \in \cC} \sum_{p \in S}\min_{c \in \cC_\opt}\weight(p, c_i) \cdot 2(\|p-c_i\|^2 + \|p - c\|^2)
    & \mbox{[relaxed triangle inequality]}\\
    & = 2\phi_\weight + 2\phi_\opt \leq 322(1+\epsilon) \phi_\opt
\end{align*}
\end{proof}

Using the mentioned lemmas, we can prove the final approximation guarantee. 
\begin{proof}[Proof of Theorem \ref{thm:coreset_kmeans}]
%We first prove that approximating the fractional weights by a factor of $1+\epsilon$ can increase the cost \alireza{what cost?} at most by a factor of $(1+\epsilon)^2$. \alireza{Should we change to $1+2\eps$?} 
Using Lemma \ref{lem:close-weights}, we know $w'_i = (1 \pm 2\epsilon) w_i^f$ for any center $c_i$. Let $\cC'_k$ be $k$ centers for $(\cC, \cW')$ that is a $\gamma$-approximate for optimal weighted $k$-means. Let $\cC^f_\opt$ be the \emph{optimal} $k$ centers for $(\cC, \cW^f)$, and $\cC'_\opt$ optimal for $(\cC, \cW')$. We have $\phi_{(\cC, \cW^f)}(\cC'_k) \leq (1+2\epsilon) \phi_{(\cC, \cW')}(\cC'_k)$ for the reason that the contribution of each point grows by at most $(1+2\epsilon)$ due to weight approximation. Using the same analysis, $\phi_{(\cC, \cW')}(\cC^f_\opt) \leq (1+2\epsilon) \phi_{\cW^f}^*$. Combining the two inequalities, we have 
\begin{align}
\begin{aligned}
\label{mainthm:eq1}
\phi_{(\cC, \cW^f)}(\cC'_k) &\leq (1+2\epsilon)^2 \phi_{(\cC, \cW')}(\cC'_k) \leq (1+2\epsilon)^2\gamma \phi_{\cW'}^* \\ 
&\leq (1+2\epsilon)^2\gamma \phi_{(\cC, \cW')}(\cC^f_\opt)  &\mbox{[by optimality of $\phi_{\cW'}^*$]}\\
&\leq (1+2\epsilon)^3\gamma \phi_{\cW^f}^* \leq 322\gamma(1+2\eps)^4 \phi_\opt &\mbox{[using Lemma \ref{lem:const_appx_adaptive}]}
\end{aligned}
\end{align}

Let $\phi_{S}(\cC'_k) = \sum_{p \in S}\min_{c \in \cC'_k}\norm{p - c}^2$. For every point $p \in S$, to bound its cost $\min_{c \in \cC'_k}\|p - c\|^2$, we use multiple relaxed triangle inequalities for every center $c_i \in \cC$ , and take the weighted average of them using $\weight(p,c_i)$.
% let $p$ go to its centers in $\cC$ first, and then from $\cC$ each go to their closest centers in $\cC'_k$. 

\begin{align*}
\phi_{S}(\cC'_k) &= \sum_{p \in S} \min_{c \in \cC'_k}\|p - c\|^2
\\
&= \sum_{p \in S} \sum_{c_i \in \cC} \weight(p, c_i) \min_{c \in \cC'_k}\|p - c\|^2 
&\mbox{[$\sum_{c_i \in \cC} \weight(p, c_i) = 1$]}
\\
&\leq \sum_{p \in S} \sum_{c_i \in \cC} \weight(p, c_i) \min_{c \in \cC'_k} 2(\|p - c_i\|^2 + \|c_i-c \|^2) 
& \mbox{[relaxed triangle inequality]}\\
& = 2\phi_\weight + 2\phi_{(\cC, \cW^f)}(\cC'_k)
&\mbox{[$\sum_{p \in S}\weight(p, c_i) = w_i^f$]}
\\
& \leq 2\phi_\weight + 2 \cdot 322\gamma(1+2\epsilon)^4 \phi_\opt
&\mbox{[inequality \eqref{mainthm:eq1}]}
\\
& \leq 2 \cdot 160 (1+\epsilon)\phi_\opt + 2 \cdot 322\gamma(1+2\epsilon)^4 \phi_\opt
& \mbox{[Lemma \ref{lem:bound-cost-by-alg}]}
\\
& = O(\gamma)\phi_\opt
\end{align*}

\end{proof}

\bibliographystyle{plain}
\bibliography{references}

\newpage
\appendix

\section{Relational Implementation of 3-means++}
\label{section:3means}
Recall that the 3-means++ algorithm picks a point $x$ 
to be the third center $c_3$ with probability
$P(x) = \frac{L(x)}{Y}$
where $L(x) = \min( \norm{x-c_1}_2^2, \norm{x - c_2}_2^2)$ and
$Y = \sum_{x \in J} L(x)$  is a normalizing constant.
Conceptually think of $P$ as being a 
`hard'' distribution to sample from. 

\textbf{Description of the Implementation:}
The implementation first constructs two identically-sized axis-parallel hypercubes/boxes $b_1$ and $b_2$ centered around $c_1$ and $c_2$ that are \textbf{as large as possible}
subject to the constraints that the side
lengths have to be non-negative integral powers of $2$,
and that $b_1$  and $b_2$ can not intersect. Such side lengths could be found since we may assume $c_1$ and $c_2$  have integer coordinates or they are sufficiently far away from each other that we can scale them and increase their distance.
Conceptually the implementation also considers a box $b_3$ that is the whole Euclidean space.  

\begin{figure}[h]
    \centering
\includegraphics[scale=0.3]{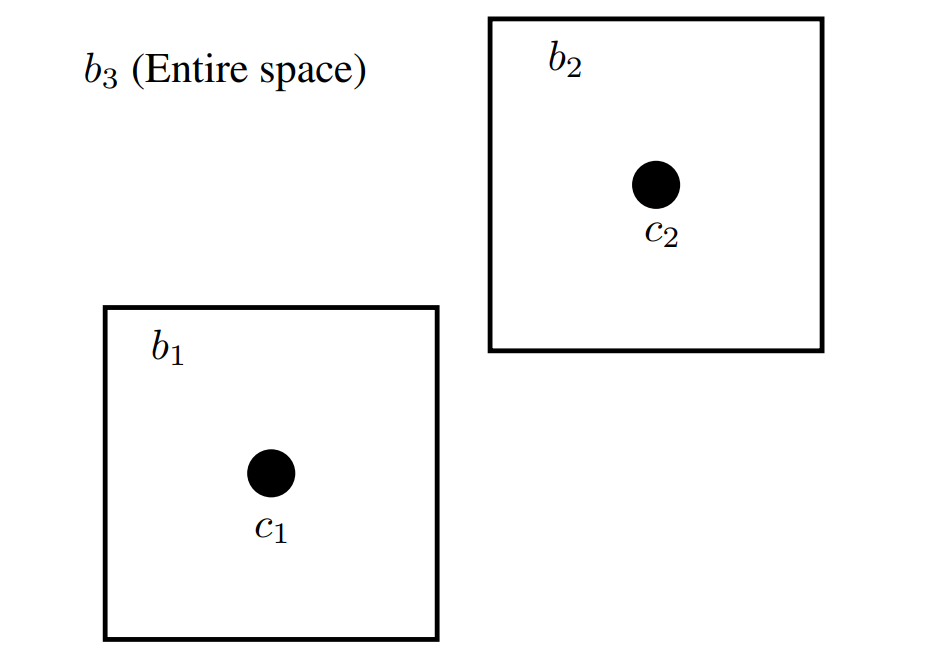}
    \caption{Boxes used for sampling the third center}
    \label{fig:third:center}
\end{figure}
To define our ``easy'' distribution $Q$, for each point $x$ define $R(x)$ to be
\begin{align*}
    R(x) = \begin{cases}
    \norm{x-c_1}_2^2    &    x\in b_1 \\
    \norm{x-c_2}_2^2   &   x \in b_2 \\
    \norm{x-c_1}_2^2    &   x \in b_3 \text{ and } x\notin b_1 \text{ and } x\notin b_2
    \end{cases}
\end{align*}
In the above definition, note that when $x\notin b_1 \text{ and } x\notin b_2$, the distance of $x$ to both centers are relatively similar; therefore, we can assign $x$ to either of the centers -- here we have assigned it to $c_1$. Then $Q(x)$ is defined to be $\frac{R(x)}{Z}$, where $Z = \sum_{x \in J} R(x)$ is normalizing constant.
The implementation then repeatedly samples
a point $x$ with probability  $Q(x)$. After sampling $x$, the implementation
can either (A) reject $x$, and then resample
or (B) accept $x$, which means setting the third center $c_3$ to be $x$. 
The probability that 
$x$ is accepted after it is sampled is $\frac{ L(x)}{R(x)}$, and thus the probability
that $x$ is rejected is $1-\frac{L(x)}{R(x)}$.

It is straightforward to see how to compute $b_1$ and $b_2$ (note that $b_1$ and $b_2$ can be computed without any relational operations), and how to compute $L(x)$ 
and $R(x)$ for a particular point $x$. 
Thus, the only non-straight-forward part is sampling a point $x$ with probability  $Q(x)$, which we explain now:
\begin{itemize}
    \item 
The implementation uses
a SumProd query  to compute the aggregate 
2-norm squared distance from $c_1$ constrained to
points in $b_3$ (all the points) and grouped 
by table $T_1$ using Lemma \ref{lem:box_computing}.
Let the resulting vector
be $C$. So $C_r$ is the aggregate 2-norm squared distance from $c_1$ of all rows in the design matrix that 
are extensions of row $r$ in $T_1$. 
\item
Then the implementation uses a SumProd query
to compute the aggregated 2-norm squared distance from $c_2$,
constrained to points in $b_2$, 
and grouped by $T_1$. Let the resulting
vector be $D$. Notice that an axis-parallel box constraint can be expressed as a collection of  axis-parallel hyperplane constraints, and for every axis-parallel constraint it is easy to remove the points not satisfying it from the join by filtering one of the input tables having that dimension/feature. Then the sum product query is the same as the sum product query in the previous step.
\item
Then the implementation uses a SumProd query
to compute the aggregated 2-norm squared distance  from $c_1$,
constrained to points in $b_2$,
and grouped by $T_1$
Let the resulting
vector be $E$. 
\item
Then pick a row $r$ of $T_1$ with probability proportional to $C_r - E_r + D_r$.
\item
The implementation then replaces $T_1$ by 
a table consisting only of the picked row $r$.
\item
The implementation then repeats this process
on table $T_2$, then table $T_3$ etc. 
\item
At the end $J$ will consist of one point/row $x$, where the probability that a particular point $x$ ends up as this final row is  $Q(x)$. 
To see this note that in the iteration performed for $T_i$, $C-E$ is the aggregate 2-norm squared distances to $c_1$ for all points not in $b_2$ grouped by $T_i$, and
$D$ is the aggregated squared distances of the
points in $b_2$ to $c_2$ grouped by $T_i$.
\end{itemize}

 We now claim that this implementation
 guarantees that $c_3=x$ with probability $P(x)$. We can see
 this using the standard rejection sampling
 calculation. At each iteration of sampling
 from $Q$, let $S(x)$ be the event that point $x$ is sampled and $A(x)$ be the event that $x$ is accepted. Then,
\begin{align*}
    \pr[S(x) \text{ and } A(x)] &= \pr[A(x)] \mid S(x)] \cdot \pr[S(x)]
    = \frac{L(x)}{R(x)} Q(x)
    = \frac{L(x)}{Z}
\end{align*}
Thus $x$ is accepted with probability proportional
to $L(x)$, as desired.

As the number of times that
the implementation has to sample from $Q$ is
geometrically distributed, the expected
number of times that it will have to sample
is the inverse of the probability of success, which is $\max_x \frac{R(x)}{L(x)}$. 
It is not too difficult to see (we prove
it formally in 
Lemma \ref{lemma:box:boundary}) that
$\max_x \frac{R(x)}{L(x)} = O(d)$.
It takes $3m$ SumProd queries to sample from $Q$. 
Therefore, the expected running time of
our implementation of 3-means++ is 
 $O(md \Psi(n, d, m))$.

\section{Pseudo-code}
\label{sect:pseudo-code}

In this section you may find the algorithms explained in Section \ref{sec:kmeans++} in pseudo-code format. 
\begin{algorithm}[H]
    \caption{Algorithm for creating axis-parallel hyperrectangles}
    \label{alg:finding:boxes}
    \begin{algorithmic}[1]
    \Procedure{Construct Boxes}{$\cC_{i-1}$}
        \State \textbf{Input:} Current centers $\cC_{i-1} = \{c_1, \dots, c_{i-1}\}$
        \State \textbf{Output:} $\cB_i$, a set of boxes and their centers
        \State $\cB_i \gets \emptyset$
        \State $\cG_i \gets \{(b^*_j,c_j) \:\vert\: b^*_j \text{ is a unit size hyper-cube around }c_j, j \in [i-1]\}$
        \Comment{We assume there is no intersection between the boxes in $G$ initially, up to scaling}
        \While{$|\cG_i| > 1$}
            \State Double all the boxes in $\cG_i$.
            \State $\cG'_i = \emptyset$ \Comment{Keeps the boxes created in this iteration of doubling}
            \While{$\exists (b_1,y_1) , (b_2,y_2) \in \cG_i$ that intersect with each other}
                \State $b \gets $ the smallest box in Euclidean space containing both $b_1$ and $b_2$.
                \State $\cG_i \gets (\cG_i \setminus \{(b_1,y_1),(b_2,y_2)\}) \cup \{(b,y_1)\}$
                \State $\cG'_i \gets (\cG'_i \cup \{(b,y_1)\}$
                \If{$(b_1,y_1)\notin \cG'_i$} \Comment{Check if box $b_1$ hasn't been merged with other boxes in the current round}
                    \State $b'_1 \gets$ halved $b_1$, add $(b'_1, y_1)$ to $\cB_i$
                \EndIf
                \If{$(b_2,y_2)\notin \cG'_i$} \Comment{Check if box $b_2$ hasn't been merged with other boxes in the current round}
                    \State $b'_2 \gets$ halved $b_2$, add  $(b'_2,y_2)$ to $\cB_i$
                \EndIf
            \EndWhile
        \EndWhile
        \State There is only one box and its representative remaining in $\cG_i$, replace this box with the whole Euclidean space.
        \State $\cB_i \gets \cB_i \cup \cG_i$.
        \State Return $\cB_i$.
    \EndProcedure
    \end{algorithmic}
\end{algorithm}

\section{Omitted Proofs}
\label{section:ommited:proofs}
\subsection*{NP-Hardness of Approximating Cluster Size}
\begin{proofof}[Theorem \ref{thm:hardcount}]

We've proved the \#P-hardness in the main body. Here we prove the second part of Theorem \ref{thm:hardcount} that given an acyclic database and a set of centers ${c_1,\dots,c_k}$, it is NP-Hard to approximate the number of points assigned to each center when $k\geq 3$. We prove it by reduction from Subset Sum. In Subset Sum problem, the input is a set of integers $A = {w_1,\dots,w_m}$ and an integer $L$, the output is true if there is a subset of $A$ such that its summation is $L$. We create the following acyclic schema. There are $m$ tables. Each table $T_i$ has a single unique column $x_i$ with two rows ${w_i,0}$. Then the join of the tables has $2^m$ rows, and it is a cross product of the rows in different tables in which each row represents one subset of $A$. 

Then consider the following three centers: $c_1 = (\frac{L-1}{m},\frac{L-1}{m},\dots,\frac{L-1}{m})$, $c_2 = (\frac{L}{m},\dots,\frac{L}{m})$, and $c_1 = (\frac{L+1}{m},\frac{L+1}{m},\dots,\frac{L+1}{m})$. The Voronoi diagram that separates the points assigned to each of these centers consists of two parallel hyperplanes: $\sum_i x_i = L-1/2$ and $\sum_i x_i = L+1/2$ where the points between the two hyperplanes are the points assigned to $c_2$. Since all the points in the design matrix have integer coordinates, the only points that are between these two hyperplanes are those points for which $\sum_i x_i = L$. Therefore, the approximation for the number of points assigned to $c_2$ is non-zero if and only if the answer to Subset Sum is True.
\end{proofof}

\begin{algorithm}[H]
    \caption{Algorithm for sampling the next center}
    \label{alg:sampling1}
    \begin{algorithmic}[1]
    \Procedure{KMeans++Sample}{$\cC_{i-1}, T_1, \dots, T_m$}
% \State Let $Q(c)$ be the SumProd query that would evaluate to $ \sum_{x \in J} \norm{x-c}_2^2$ 
    \State Let $p(b)$ be the box that is the parent of $b$ in the tree structure of all boxes in $\cB_i$.
    \State $c_i \gets \emptyset$
    \State $\cB_i \gets \textsc{Construct Boxes}(\cC_{i-1})$
    \State Let $(b_0,y_0)$ be the tuple where $b_0$ is the entire Euclidean space in $\cB_i$.
    \While{$c_i = \emptyset$}
    \For{$1 \leq \ell \leq m$} \Comment{Sample one row from each table.}
        \State Let $H$ be a vector having an entry $H_r$ for each $r \in T_\ell$.
        \State $J' \gets r_1 \Join \ldots \Join r_{\ell-1} \Join J$. \Comment{Focus on only the rows in $J$ that uses all previously sampled rows from $T_1, \ldots, T_{\ell-1}$ in the concatenation.}
        \State $\forall r \in T_\ell$ evaluate $H_r \gets \sum_{x \in r \Join J' \cap b_0} \norm{x-y_0}_2^2$ 
        \For{$(b,y) \in \cB_i \setminus \{(b_0, y_0)\}$}
            \State Let $(b',y') \in \cB_i$ be the tuple where $b'=p(b)$.
            % \State $c_i \gets c_i + Evaluate(Q(c), T_i, b)-  Evaluate(Q(c'), T_i, b')$
            \State $\forall r \in T_\ell$ use SumProd query to evaluate two values: $\sum_{x \in r \Join J' \cap b}\norm{x-y}_2^2$ and $\sum_{x \in r \Join J' \cap b}\norm{x-y'}_2^2$.
            \State $H_r \gets H_r - \sum_{x \in r \Join J' \cap b} \norm{x-y'}_2^2 + \sum_{x \in r \Join J' \cap b} \norm{x-y}_2^2$ 
        \EndFor
        \State Sample a row $r_\ell \in T_\ell$ with probability proportional to $H_r$. 
       % \State Remove all other rows from table $T_\ell$ apart from $r_\ell$.
    \EndFor
        \State $x \gets r_1 \Join \dots \Join r_m$.
        \State Let $(b^*, y^*)$ be the tuple where $b^*$ is the smallest box in $\cB_i$ containing $x$.
        \State $c_i \gets x$ with probability $\frac{\min_{c \in \cC_{i-1}}\|x- c\|_2^2}{\norm{x - y^*}_2^2}$. \Comment{Rejection sampling.}
    \EndWhile
    \State \textbf{return} $c_i$.
    \EndProcedure
    \end{algorithmic}
\end{algorithm}
%$J(b)$ is the set of tuples in $J$ that lies inside the axis-parallel box $b$. Note that $C_r \gets \sum_{x \in r \Join J(b_1)} \norm{x-c_1}_2^2$ is a SumSum query constrained by box $b_1$ and grouped by table $T_i$. Based on Theorem \ref{thm:sumsum:query},  any SumSum query can be converted to a SumProd query.

\section{Uniform Sampling From a Hypersphere}
\label{sect:faqai}

In order to uniformly sample a point from inside a ball, it is enough to show how we can count the number of points located inside a ball grouped by a table $T_i$. Because, if we can count the number of points grouped by input tables, then we can use similar technique to the one used in Section \ref{sec:kmeans++} to sample. Unfortunately, as we discussed in Section \ref{subsec:warmup}, it is $\#P$-Hard to count the number of points inside a ball; however, it is possible to obtain a $1\pm \delta$ approximation of the number of points \cite{abokhamis2020approximate}. Bellow we briefly explain the algorithm in \cite{abokhamis2020approximate} for counting the number of points inside a hypersphere. 

Given a center $c$ and a radius $R$, the goal is approximating the number of tuples $x\in J$ for which $\sum_i (c^i-x^i)^2 \leq R$. Consider the set $S$ containing all the multisets of real numbers. We denote a multiset $A$ by a set of pairs of $(v,f_A(v))$ where $v$ is a real value and $f(v)$ is the frequency of $v$ in $A$. For example, $A=\{(2.3,10),(3.5,1)\}$ is a multiset that has $10$ members with value $2.3$ and $1$ member with value $3.5$. Then, let $\oplus$ be the summation operator meaning $C = A \oplus B$ if and only if for all $x \in R$, $f_C(x) = f_A(x) + f_B(x)$, and let $\otimes$ be the convolution operator such that $C = A \otimes B$ if and only if $f_C(x) = \sum_{i\in \mathbb{R}} f_A(i) + f_B(x-i)$. Then the claim is $(S,\oplus,\otimes)$ is a commutative semiring and the following SumProd query returns a multiset that has all the squared distances of the points in $J$ from $C$:
\begin{align*}
    \bigoplus_{x\in J}\bigotimes_{i}\{((x^i-c^i)^2, 1)\}
\end{align*}
Using the result of the multiset, it is possible to count exactly the number of tuples $x\in J$ for which $\norm{x-c}_2^2\leq R^2$. However, the size of the result is as large as $\Omega(|J|)$.

In order to make the size of the partial results and time complexity of $\oplus$ and $\otimes$ operators polynomial, the algorithm uses $(1 + \delta)$ geometric bucketing. The algorithm returns an array where in $j$-th entry it has the smallest value $r$ for which there are $(1+\delta)^j$ tuples $x\in J$ satisfying $\norm{x-c}_2^2 \leq r^2$. 

The query can also be executed grouped by one of the input tables. Therefore, using this polynomial approximation scheme, we can calculate conditioned marginalized probability distribution with multiplicative $(1 \pm \delta)$. Therefore, using $m$ queries, it is possible to sample a tuple from a ball with probability distribution $\frac{1}{n}(1\pm m \delta)$ where $n$ is the number of points inside the ball. In order to get a sample with probability $\frac{1}{n}(1\pm \epsilon)$, all we need is to set $\delta = \epsilon/m$; hence, on \cite{abokhamis2020approximate}, the time complexity for sampling each tuple will be $O\Big(\frac{m^9\log^4(n)}{\epsilon^2}\Psi(n,d,m)\Big)$

\section{Hardness of Lloyd's Algorithm}
\label{sect:Lloyds}
After choosing  $k$ initial centers, a type of local search algorithm,
called Lloyd's algorithm,  is commonly used to iteratively find better centers. After associating each point with its closest center, and Lloyd's algorithm updates the position of each center to the center of mass of its associated points. Meaning, if $X_c$ is the set of points assigned to $c$, its location is updated to $\frac{\sum_{x\in X_c} x}{|X_c|}$. While this can be done easily when the data is given explicitly,  we show in the following theorem that finding the center of mass for the points assigned to a center is $\#$P-hard when the data is relational,
even in the special case of an acyclic join and two centers. 

\begin{theorem}
Given an acyclic join, and two centers, it is $\#$P-hard to compute the center of mass for the points assigned to each center.
\end{theorem}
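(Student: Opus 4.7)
The plan is to reduce Knapsack Counting to the center-of-mass problem, reusing the relational schema from the proof of Theorem~\ref{thm:hardcount}. Recall that each weight $w_i$ in the Knapsack instance yields two tables $T_{2i-1}, T_{2i}$, so that the $2^h$ rows of $J$ are in bijection with subsets of $\{w_1,\dots,w_h\}$, with the coordinate sum $\sum_j p_j$ of a point $p\in J$ equal to the weight of the corresponding subset. The centers $c_1, c_2$ are placed so that $p$ is closer to $c_1$ iff $\sum_j p_j \le L$, and I would choose $0 < L < \sum_i w_i$ so that both clusters are nonempty.

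I would then observe that the quantities $|J| = 2^h$ and the vector $\sigma := \sum_{p\in J} p$ are both polynomial-time computable on this acyclic instance: $|J|$ is a basic SumProd count, and each coordinate of $\sigma$ is a standard SumProd query over the real semiring (take the feature map to be the identity on the coordinate of interest and constantly $1$ on the other coordinates). Now suppose for contradiction that the cluster centers of mass $\mu_1, \mu_2$ can be computed exactly in polynomial time. Writing $n_1, n_2$ for the unknown cluster sizes, we have
\begin{equation*}
n_1 + n_2 = |J|, \qquad n_1\mu_1 + n_2\mu_2 = \sigma,
\end{equation*}
so in any coordinate $k$ with $\mu_1^{(k)} \neq \mu_2^{(k)}$ we can extract $n_1 = (\sigma^{(k)} - |J|\,\mu_2^{(k)})/(\mu_1^{(k)} - \mu_2^{(k)})$. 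Since $n_1$ is exactly the number of subsets of weight at most $L$, this completes the reduction from $\#$P-hard Knapsack Counting.

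The step I expect to be the main obstacle is certifying that $\mu_1 \neq \mu_2$, so that some usable coordinate $k$ actually exists. This falls out of the geometry of the construction: summing coordinates, $\sum_j \mu_1^{(j)}$ is the average subset weight over cluster $1$, which is at most $L$, while $\sum_j \mu_2^{(j)}$ is the average subset weight over cluster $2$, which is strictly greater than $L$, because every point in that cluster has subset weight exceeding $L$ and the cluster is nonempty by our choice of $L$. Hence $\sum_j(\mu_1^{(j)} - \mu_2^{(j)}) \neq 0$, so at least one coordinate differs, and the reducer can simply scan the $d$ coordinates and take any one for which the difference is nonzero. The whole reduction runs in time polynomial in the Knapsack input size, so a polynomial-time oracle for the center of mass would yield a polynomial-time algorithm for $\#$P-hard Knapsack Counting, establishing the theorem.
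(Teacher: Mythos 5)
Your reduction is correct, but it takes a genuinely different route from the paper's. The paper modifies the relational instance: it adds an $(n+1)$-st shared column and a sentinel row so that the join consists of the $2^n$ subset-points (each carrying $-1$ in the extra coordinate) plus one extra point carrying a large value $D$, and it places the centers so that the extra coordinate of $c_1$'s center of mass has the form $(D-C)/(C+1)$ where $C$ is the knapsack count; the sign of that single coordinate then answers the threshold question ``are there at least $D$ feasible subsets,'' reducing from a decision version of counting knapsack. You instead leave the Theorem~\ref{thm:hardcount} instance untouched and recover the exact count by linear algebra: $|J|$ and the global coordinate sum $\sigma$ are cheap aggregates, so the two mass-balance equations $n_1+n_2=|J|$ and $n_1\mu_1+n_2\mu_2=\sigma$ determine $n_1$ once some coordinate of $\mu_1-\mu_2$ is nonzero, which you correctly certify by noting that the average coordinate sums of the two clusters straddle $L$ (both clusters being nonempty for $0\le L<\sum_i w_i$, an assumption that is without loss of generality). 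The trade-off: your argument extracts the exact count from one oracle call with no instance gadgetry or binary search over $D$, at the cost of needing the auxiliary aggregate computations and the nondegeneracy argument; the paper's sentinel coordinate makes the two centers of mass differ by construction and needs no side computation, but only yields a threshold comparison per call. Both are valid \#P-hardness proofs.
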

\begin{proof}
We prove by a reduction from  a decision version
of the counting knapsack problem. 
The input to the counting knapsack problem
consists of a the set  $W= \{w_1,\dots, w_n\}$
of positive integer weights,  a knapsack size $L$, and a count $D$. The problem is to
determine whether there are at least $D$ subsets of $W$ with
aggregate weight at most $L$. The points in our
instance of $k$-means will be given relationally. 
We construct a join query with $n+1$ columns/attributes, and $n$ tables. 
All the tables have one column in common and one distinct column. The $i$-th table has $2$ columns $(d_i, d_{n+1})$ and three rows $\{(w_i,-1), (0,-1), (0, D)\}$. Note that the join has $2^n$ rows with $-1$ in dimension $n+1$, and one row with values $(0,0,\dots,0,D)$. The rows with $-1$ in dimension $d+1$ have all the subsets of $\{w_1,\dots, w_n\}$ in their first $n$ dimensions. Let the two centers for $k$-means problem be any two centers $c_1$ and $c_2$ such that a point $x$ is closer to $c_1$ if it satisfies $\sum_{d=1}^n x_d <  L$ and  closer to $c_2$ if it satisfies $\sum_{d=1}^n x_d >  L$. Note that the row $(0,0,\dots,0,D)$ is closer to $c_1$. Therefore, the value of dimension $n+1$ of the center of mass for the tuples that are closer to $c_1$ is $Y= (D-C)/C$ where $C$ is the actual number of subsets of $W$ with aggregate weight at most $L$. If $Y$ is negative, then the number of solutions to the counting knapsack instance is at least $D$.
\end{proof}

\section{Background Information About Database Concepts} 
\label{sect:dbbackground}

Given a tuple $x$, define $\Pi_{F}(x)$ to be projection of $x$ onto the set of features $F$ meaning $\Pi_{F}(x)$ is a tuple formed by keeping the entries in $x$ that are corresponding to the feature in $F$. For example let $T$ be a table with columns $(A,B,C)$ and let $x = (1,2,3)$ be a tuple of $T$, then $\Pi_{\{A,C\}}(x) = (1,3)$. 

\begin{definition}[Join]
Let $T_1,\dots, T_m$ be a set of tables with corresponding sets of columns/features $F_1,\dots,F_m$ we define the join of them $J=T_1 \Join \dots \Join T_m$ as a table such that the set of columns of $J$ is $\bigcup_i F_i$, and $x\in J$ if and only if $\Pi_{F_i}(x) \in T_i$.
\end{definition}

Note that the above definition of join is consistent with the definition written in Section \ref{section:intro} but offers more intuition about what the join operation means geometrically.

\begin{definition}[Join Hypergraph]
Given a join $J=T_1 \Join \dots \Join T_m$, the hypergraph associated with the join is $H=(V,E)$ where $V$ is the set of vertices and for every column $a_i$ in $J$ there is a vertex $v_i$ in $V$, and for every table $T_i$ there is a hyper-edge $e_i$ in $E$ that has the vertices associated with the columns of $T_i$.
\end{definition}

\begin{theorem}[AGM Bound \cite{atserias2008size}]
\label{appendix:agm}
Given a join $J=T_1 \Join \dots \Join T_m$ with $d$ columns and its associated hypergraph $H=(V,E)$, and let $C$ be a subset of $\col(J)$, let $X = (x_1, \dots, x_m)$ be any feasible solution to the following Linear Programming:
\begin{alignat*}{3}
 & \text{minimize} & \sum_{j=1}^{m} \log(|T_j|)x_{j}& \\
 & \text{subject to} \quad& \sum_{\mathclap{{j:v \in e_{j}}}}x_{j}& \geq 1, & v \in C\\
                 && 0 \leq x_{j}& \leq 1,\quad & j &=1 ,..., t
\end{alignat*}
Then $\prod_i |T_i|^{x_i}$ is an upper bound for the cardinality of $\Pi_C(J)$, this upperbound is tight if $X$ is the optimal answer.
\end{theorem}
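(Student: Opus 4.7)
The plan is to prove the upper bound via the information-theoretic (entropy) method, following the standard approach used by Atserias, Grohe, and Marx. First I would set up the probabilistic object: let $(Y_1,\dots,Y_d)$ be a tuple drawn uniformly at random from $\Pi_C(J)$, where the coordinates are indexed by the columns/vertices in $C$. Since this random variable is uniform on its support, its Shannon entropy satisfies
\[
H(Y_{v_1},\dots,Y_{v_{|C|}}) \;=\; \log |\Pi_C(J)|.
\]
The goal then reduces to proving the single inequality $H(Y) \le \sum_j x_j \log |T_j|$.

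The main tool I would invoke is the fractional-cover version of Shearer's entropy inequality: if $\{e_1,\dots,e_m\}$ is a collection of subsets of the variable index set and $(x_1,\dots,x_m)$ is a fractional cover of the variables appearing in $Y$ (i.e.\ $\sum_{j:v\in e_j} x_j \ge 1$ for every such $v$), then
\[
H(Y) \;\le\; \sum_{j=1}^{m} x_j\, H(Y_{e_j}).
\]
The LP constraint $\sum_{j:v\in e_j} x_j \ge 1$ for $v\in C$ is precisely what makes $x$ a valid fractional cover for the vertices appearing in $Y$. Combining this with the trivial marginal bound
\[
H(Y_{e_j}) \;\le\; \log |\text{support}(Y_{e_j})| \;\le\; \log |\Pi_{e_j\cap C}(T_j)| \;\le\; \log |T_j|,
\]
which holds because every row of $\Pi_C(J)$ projected to $e_j\cap C$ must already appear in the corresponding projection of $T_j$, yields $H(Y) \le \sum_j x_j \log|T_j|$. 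Exponentiating both sides gives the desired bound $|\Pi_C(J)| \le \prod_j |T_j|^{x_j}$.

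For the tightness claim when $X$ is the LP optimum, I would exhibit a matching worst-case instance. The standard construction takes each feature's domain to be $[q]$ for a suitable integer $q$, and fills each table $T_j$ with all of $[q]^{|F_j|}$ (or a carefully chosen subset of size $|T_j|=q^{|F_j|}$). A direct counting argument then shows $|\Pi_C(J)|$ equals $q^{|C|}$, and LP duality (the optimal fractional cover value equals the optimal fractional independent set value on the variables in $C$) shows this matches $\prod_j |T_j|^{x_j}$ for the optimal $X$.

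I expect the main obstacle to be Shearer's inequality itself: it is not obvious and requires a short induction using the chain rule for entropy together with submodularity of conditional entropy. A clean way to present this step is to decompose $H(Y)$ using the chain rule along an arbitrary ordering of $C$, then upper-bound each conditional entropy $H(Y_v \mid Y_{<v})$ using only those $e_j$ that cover $v$, and finally rearrange the sum using the fractional cover constraint. The marginal bound on $H(Y_{e_j})$ and the reduction from $|\Pi_C(J)|$ to $H(Y)$ are both routine, so the conceptual weight of the proof really sits in the entropy inequality.
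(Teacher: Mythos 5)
The paper does not actually prove this statement---it is imported verbatim from Atserias--Grohe--Marx \cite{atserias2008size} as background---so there is no in-paper argument to compare against. Your entropy proof is the standard (and correct) route for the upper bound: taking $Y$ uniform on $\Pi_C(J)$ so that $H(Y)=\log|\Pi_C(J)|$, applying the fractional-cover form of Shearer's inequality with the sets $e_j\cap C$ (the LP constraints $\sum_{j:v\in e_j}x_j\ge 1$ for $v\in C$ are exactly the fractional-cover condition, and the upper bounds $x_j\le 1$ are not even needed), and bounding each marginal entropy by $\log|T_j|$ since the support of $Y_{e_j\cap C}$ embeds into $\Pi_{e_j\cap C}(T_j)$. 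Your plan for Shearer itself (chain rule plus dropping conditioning, then rearranging via the cover constraint) is the right one. The only soft spot is the tightness claim: with a single domain $[q]$ for every feature and $|T_j|=q^{|F_j|}$ you would need $\sum_j x_j|F_j|=|C|$, which does not hold in general. The correct worst-case construction assigns vertex $v$ a domain of size roughly $N^{y_v}$, where $y$ is an optimal solution of the dual (fractional independent set) LP, and then invokes strong duality; also note that tightness is a worst-case statement over instances with the given table sizes, not a claim about every instance. You gesture at LP duality, so the idea is present, but that step as written would not compile into a proof without the per-vertex domain sizes.
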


We give another definition of \emph{acyclicity} which is consistent with the definition in the main body.
\begin{definition}[Acyclic Join]
We call a join query (or a relational database schema) \textbf{acyclic} if one can repeatedly apply one of the two operations and convert the set of tables to an empty set:
\begin{enumerate}
    \item Remove a column that is only in one table.
    \item Remove a table for which its columns are fully contained in another table.
\end{enumerate}
\end{definition}

\begin{definition}[Hypertree Decomposition]
Let $H=(V,E)$ be a hypergraph and $T=(V',E')$ be a tree with a subset of $V$ associated to each vertex in $v' \in V'$ called \textbf{bag} of $v'$ and show it by $b(v') \subseteq V$. $T$ is called a \textbf{hypertree decomposition} of $H$ if the following holds:
\begin{enumerate}
    \item For each hyperedge $e \in E$ there exists $v' \in V'$ such that $e \subseteq b(v')$
    \item For each vertex $v \in E$ the set of vertices in $V'$ that have $v$ in their bag are all connected in $T$.
\end{enumerate}
\end{definition}

\begin{definition}
Let $H=(V,E)$ be a join hypergraph and $T=(V',E')$ be its hypertree decomposition. For each $v' \in V'$, let $X^{v'} = (x_1^{v'},x_2^{v'},\dots, x_m^{v'})$ be the optimal solution to the following linear program: $\texttt{min}  \sum_{j=1}^{t} x_{j}$,  $\text{subject to }  \sum_{j:v_{i} \in e_{j} }x_{j} \geq 1, \forall v_i \in b(v')$ where $0 \leq x_{j} \leq 1$ for each $j \in [t]$.
%\begin{alignat*}{3}
% & \text{minimize} & \sum_{j=1}^{t} x_{j}& \\
%& \text{subject to} \quad& \sum_{\mathclap{{j:v_{i} \in e_{j}}}}x_{j}& \geq 1, & \forall v_i &\in b(v')\\
 %                && 0 \leq x_{j}& \leq 1,\quad & j &=1 ,..., t
%\end{alignat*}
Then the \textbf{width of $v'$} is $\sum_i x^{v'}_i$ denoted by $w(v')$ and the \textbf{fractional width of $T$} is $\max_{v' \in V'} w(v')$.
\end{definition}

\begin{definition}[fhtw]
Given a join hypergraph $H=(V,E)$, the \textbf{fractional hypertree width of $H$}, denoted by fhtw, is the minimum fractional width of its hypertree decomposition. Here the minimum is taken over all possible hypertree decompositions.
\end{definition}

\begin{observation}
The fractional hypertree width of an \allowbreak acyclic join is $1$, and each bag in its hypertree decomposition is a subset of the columns in some input table.
\end{observation}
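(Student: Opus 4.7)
The plan is to prove both halves of the observation simultaneously by explicitly constructing a hypertree decomposition whose every bag is exactly the column set of some input table, and then verifying that this decomposition has fractional width equal to $1$. Since fhtw is defined as a minimum over decompositions, and since any non-empty bag $b(v')$ forces the LP at $v'$ to have optimum at least $1$ (each vertex constraint $\sum_{j:v_i\in e_j}x_j\ge 1$ is a lower bound on $\sum_j x_j$), the upper bound fhtw $\le 1$ established by our construction will match the trivial lower bound fhtw $\ge 1$, giving fhtw $=1$.

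To construct the decomposition, I would run the GYO-style reduction implicit in the definition of acyclicity and record a tree along the way. Starting from the empty tree, repeatedly apply the two allowed operations to the current hypergraph: (i) when a column appears in only one table, just delete the column and leave the tree alone; (ii) when a table $T_i$ has all its columns contained in some other table $T_j$, attach a new tree node $v_i$ with bag $b(v_i)=F_i$ (the columns of $T_i$) as a child of the node corresponding to $T_j$ (creating the node for $T_j$ lazily if it does not yet exist). When the hypergraph is finally emptied, the last surviving table becomes the root, and we have produced a rooted tree $T$ in which each node $v'$ carries a bag that is exactly $F_{i(v')}$ for some input table $T_{i(v')}$.

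Next I would verify that $T$ is a legal hypertree decomposition. Edge-coverage is immediate: every table $T_i$ is introduced as the bag of exactly one tree node. For the running-intersection (connectivity) property, I would argue by induction on the reduction steps: whenever table $T_i$ is removed because $F_i\subseteq F_j$, any column $v\in F_i$ that still survives in the reduced hypergraph must also lie in $F_j$, so $v$'s occurrences in $T$ remain in one connected subtree after the new node for $T_i$ is attached as a child of $T_j$. Columns that do not survive (removed by rule (i)) appear only in $F_i$ at the moment of attachment, so trivially their occurrences in $T$ form a single node. The main subtlety of the argument is handling this bookkeeping cleanly; it is the one step that requires care, but it is the standard correctness proof of GYO and goes through without surprises.

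Finally, for any node $v'$ of $T$ with bag $b(v')=F_{i(v')}$, the edge cover LP admits the feasible solution $x_{i(v')}=1$ and $x_j=0$ for $j\ne i(v')$, since $e_{i(v')}\supseteq b(v')$ covers every vertex of the bag. Hence $w(v')\le 1$, and combined with $w(v')\ge 1$ this gives $w(v')=1$, so the fractional width of $T$ equals $1$. Both conclusions of the observation now follow: fhtw $=1$, and each bag in the constructed optimal decomposition is, by construction, the column set $F_{i(v')}$ of an input table.
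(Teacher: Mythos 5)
Your proof is correct. The paper states this observation without any proof (it is invoked as a classical fact from the database-theory literature), so there is no in-paper argument to compare against; your write-up supplies the standard justification, namely the GYO/ear-removal construction of a join tree whose bags are exactly the column sets of the input tables, together with the LP argument that a bag contained in a single hyperedge has fractional edge-cover number exactly $1$. The one point that genuinely needs care is the one you flag: the containment test $F_i\subseteq F_j$ in rule (ii) is on the \emph{reduced} column sets (after some private columns have been deleted by rule (i)), while the bags you assign are the \emph{original} column sets, so the connectivity check must account for columns of $T_i$ that were deleted before $T_i$ was removed. The clean way to discharge this is the reverse induction (peel one ear $T_i$ with witness $T_j$ from the full hypergraph, apply the inductive hypothesis to the remaining acyclic hypergraph, and attach $T_i$ as a leaf under $T_j$; then every column of $T_i$ is either private to $T_i$ in the original hypergraph or lies in $F_j$), which is what your sketch amounts to. A small remark: your construction actually yields the slightly stronger conclusion that each bag \emph{equals} the column set of some input table, which of course implies the ``subset'' form stated in the observation; alternatively, the subset property can be derived directly from width $1$, since a feasible fractional cover of total weight $1$ for a bag $B$ forces $B\subseteq e_j$ for every edge with $x_j>0$.
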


% \begin{definition}[FAQ]
% Let $F_i$ be a function from domain of column $i$ in $J$ to an arbitrary set $S$, and let $\otimes$ and $\oplus$ be two operators such that $(S,\oplus,\otimes)$ forms a commutative semiring. Then,
% \begin{align*}
%     \bigoplus_{X\in J}\bigotimes_{i} F_i(x_i)
% \end{align*}
% is a Functional Aggregation Query (FAQ) without free variable. Let $A \subseteq \col(J)$ and $Y_A$ be a variable with the domain $\Dom{A}$, then the following is an FAQ with free variables $A$
% \begin{align*}
%     G(Y_A) = \bigoplus_{X\in Y_A \Join J}\bigotimes_{i} F_i(x_i)
% \end{align*}
% where $Y_A \Join J$ has all the tuples $X = (x_1, \dots,x_d) \in J$ that $\Pi_{A}(X) = Y_A$.
% \end{definition}

\begin{theorem}[Inside-out \cite{khamis2016joins}]
\label{thm:insideout}
There exists an algorithm to evaluate a SumProd query in time $O(T md^2 n^{\fhtw} \log(n))$ where $\fhtw$ is the fractional hypertree width of the query and $T$ is the time needed to evaluate $\oplus$ and $\otimes$ for two operands. The same algorithm with the same time complexity can be used to evaluate SumProd queries grouped by one of the input tables.
\end{theorem}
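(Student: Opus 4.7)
The plan is to prove this by combining the classical variable-elimination (message-passing) paradigm with the AGM bound, in a manner that exploits the commutative-semiring structure. Concretely, I would proceed as follows.

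First, I would compute a hypertree decomposition $T=(V',E')$ of the query hypergraph $H=(V,E)$ whose fractional width equals $\fhtw$; such a decomposition can be assumed given, or computed in time depending only on $m$ and $d$. For each bag $b(v')$, I would materialize a relation $R_{v'}$ over the features in $b(v')$ whose tuples are exactly $\Pi_{b(v')}(J)$. By the defining property of the hypertree decomposition every hyperedge (input table) is contained in some bag, so $R_{v'}$ can be obtained by joining the input tables that sit inside $b(v')$ and then semi-joining with the remaining tables; applying a worst-case optimal join algorithm together with Theorem~\ref{appendix:agm} bounds $|R_{v'}|$ and the construction cost by $O(d \, n^{w(v')}) \le O(d \, n^{\fhtw})$. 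Across the $O(m)$ bags this costs $O(m d \, n^{\fhtw})$.

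Next I would run a bottom-up dynamic program on $T$, rooted arbitrarily. Each bag $v'$ stores a message $M_{v'}$, indexed by the tuples of the projection of $b(v')$ onto the features it shares with its parent bag. For a leaf $v'$ I would initialize $M_{v'}(\pi) = \bigoplus_{x \in R_{v'},\ \Pi_{\text{shared}}(x)=\pi} \bigotimes_{f \in U(v')} q_f(x_f)$, where $U(v')$ is the set of features ``owned'' by $v'$ (those appearing in $b(v')$ but not in its parent's bag; by the running-intersection property each feature is owned by exactly one bag when the tree is rooted). For an internal bag, I combine $R_{v'}$ with the messages $M_{u'}$ of its children via $\otimes$, and then $\oplus$-marginalize away the owned features of $v'$. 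Computing one message costs $O(T \cdot |R_{v'}|) = O(T d \, n^{\fhtw})$ since all joins and marginalizations happen within the materialized bag relation. Summing over all $m$ bags and accounting for the $O(\log n)$ factor incurred by sorting/hashing to index the messages gives the claimed $O(T m d^2 n^{\fhtw} \log n)$ bound. Correctness follows by induction on the tree, using commutativity and, crucially, the distributive law $a \otimes (b \oplus c) = (a\otimes b)\oplus(a \otimes c)$ guaranteed by the semiring structure: this is precisely what allows each owned feature to be $\oplus$-eliminated locally rather than globally over $J$.

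For the grouped-by variant, I would choose the root of $T$ to be a bag that contains all features of the designated table $T_i$ (such a bag exists because $T_i$'s feature set is a hyperedge). I then never marginalize those features: the final table stored at the root, restricted to tuples of $T_i$, yields the per-row answers. The cost analysis is unchanged.

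The main obstacle I anticipate is not the algorithm itself but the careful accounting that keeps every intermediate relation of size $O(n^{\fhtw})$. Specifically, showing that each bag relation $R_{v'}$ can actually be built within the AGM budget requires invoking a worst-case optimal join inside each bag and then pruning against the remaining database via semi-joins so that dangling tuples do not inflate intermediate sizes; the $d^2$ and $\log n$ factors in the stated bound come from index maintenance and from the $d$ coordinates involved in each semiring operation when messages are combined. The remainder is a routine induction on $T$ using distributivity, so once the size bound is locked in the runtime follows directly.
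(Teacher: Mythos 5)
The paper does not actually prove this statement: Theorem~\ref{thm:insideout} is imported verbatim from \cite{khamis2016joins} (the InsideOut/FAQ line of work) and used as a black box, so there is no in-paper proof to compare against. That said, your sketch is a faithful reconstruction of the argument in the cited source: materialize bag relations within the AGM budget, then run a bottom-up semiring dynamic program on the hypertree decomposition, using distributivity to $\oplus$-eliminate each variable at the unique bag that owns it, and handle the grouped-by variant by rooting the tree at a bag covering the designated table's hyperedge. Two small points deserve more care than you give them. First, your claim that $R_{v'}$ equals exactly $\Pi_{b(v')}(J)$ after local joins plus semijoins is not quite right (full reduction needs Yannakakis-style passes, and for cyclic queries even that does not yield exact projections of $J$); fortunately exactness is unnecessary, because a superset within the AGM bound suffices --- dangling tuples eventually acquire the additive identity as their message, and the semiring axiom $\mathbf{0}\otimes a=\mathbf{0}$ makes them vanish from the aggregate. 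You should state explicitly that you are relying on this annihilation axiom, since without it the induction breaks. Second, finding a width-$\fhtw$ decomposition is NP-hard in the schema size; this is harmless only because the cost depends on $m$ and $d$ but not on $n$, which is the convention under which the theorem's bound is stated. With those caveats made explicit, your induction on the rooted tree via distributivity is the standard and correct route.
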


\begin{theorem}
\label{thm:sumsum:query}
Let $Q_f$ be a function from domain of column $f$ in $J$ to $\mathbb{R}$, and $G$ be a vector that has a row for each tuple $r\in T_i$. Then the query 
\begin{align*}
    \sum_{X\in J}\sum_{f} Q_f(x_f)
\end{align*}
can be converted to a SumProd and the query returning $G$ with definition
\begin{align*}
   G_r = \sum_{X\in Y_i \Join J} \sum_{f} F_i(x_f)
\end{align*}
can be converted to a SumProd query grouped by $T_i$.
\end{theorem}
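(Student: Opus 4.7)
The plan is to encode the sum-sum query as a SumProd query by choosing an appropriate semiring in which addition agrees with arithmetic addition and the ``product'' operation implements addition together with a counting component. Concretely, I would work with the base set $S = \mathbb{R} \times \mathbb{Z}_{\ge 0}$ with operations
\begin{align*}
(a,b) \oplus (c,d) &= (a+c,\; b+d),\\
(a,b) \otimes (c,d) &= (ad+bc,\; bd),
\end{align*}
and per-feature map $q_f(x_f) = (Q_f(x_f),\,1)$. A straightforward induction on $d$ then shows $\bigotimes_{f} q_f(x_f) = \bigl(\sum_f Q_f(x_f),\,1\bigr)$, so that
\[
\bigoplus_{x\in J}\bigotimes_{f} q_f(x_f) = \Bigl(\sum_{x\in J}\sum_{f} Q_f(x_f),\; |J|\Bigr),
\]
whose first coordinate is the desired query value. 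For the grouped variant, the exact same semiring and feature maps are used, and we simply invoke the ``grouped by $T_i$'' form of Theorem~\ref{thm:insideout}; the first coordinate of the resulting vector entry for each row $r \in T_i$ equals $G_r$.

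The main obstacle, and the only technically substantive piece, is verifying that $(S,\oplus,\otimes)$ is genuinely a commutative semiring, since Lemma~\ref{lem:sumprod} (Theorem~\ref{thm:insideout}) requires exactly this structural hypothesis. I would check each axiom in turn: $\oplus$ is commutative and associative by component-wise inheritance from $(\mathbb{R},+)$ and $(\mathbb{Z}_{\ge 0},+)$, with identity $(0,0)$; $\otimes$ is commutative since $ad+bc = cb+da$ and $bd=db$, and associative by a short direct calculation showing both $((a_1,b_1)\otimes(a_2,b_2))\otimes(a_3,b_3)$ and $(a_1,b_1)\otimes((a_2,b_2)\otimes(a_3,b_3))$ equal $(a_1 b_2 b_3 + b_1 a_2 b_3 + b_1 b_2 a_3,\; b_1 b_2 b_3)$; the multiplicative identity is $(0,1)$; and distributivity $(a,b)\otimes((c,d)\oplus(e,f)) = (a,b)\otimes(c+e,d+f) = (a(d+f)+b(c+e),\,b(d+f))$ matches $(a,b)\otimes(c,d) \oplus (a,b)\otimes(e,f)$ by expanding.

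Once the semiring properties are in hand, the rest is bookkeeping. I would note that each $q_f$ is efficiently computable (the $(\cdot,1)$ tagging is free, and $Q_f$ is given), and that $\oplus$ and $\otimes$ each take constant time on pairs, so Theorem~\ref{thm:insideout} applies in both its ungrouped form (yielding a scalar) and its grouped-by-$T_i$ form (yielding the vector $G$). Finally, I would project onto the first coordinate at the end to discard the auxiliary count component. No additional combinatorial arguments are needed; the entire content of the theorem is that the sum-sum pattern fits the SumProd template via this ``value plus multiplicity'' semiring, so beyond the axiom check the proof reduces to invoking the already-cited inside-out algorithm.
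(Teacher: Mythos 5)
Your proposal is correct and follows essentially the same route as the paper's proof: the identical ``value plus multiplicity'' semiring on pairs $(a,b)$ with $(a,b)\otimes(c,d)=(ad+bc,\,bd)$, the per-feature map $q_f(x_f)=(Q_f(x_f),1)$, verification of the semiring axioms, and an appeal to the inside-out algorithm (Theorem~\ref{thm:insideout}) for both the scalar and grouped-by-$T_i$ forms. No gaps; the axiom checks and the evaluation of $\bigotimes_f q_f(x_f)=(\sum_f Q_f(x_f),1)$ match the paper's argument.
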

\begin{proof}
Let $S = \{(a,b) \:\vert\: a \in \mathbb{R}, b \in \mathbb{I}\}$, and for any two pairs of $(a,b),(c,d) \in S$ we define:
\begin{align*}
    (a,b) \oplus (c,d) = (a+c,b+d)
\end{align*}
and
\begin{align*}
    (a,b) \otimes (c,d) = (ad+cb, bd).
\end{align*}

Then the theorem can be proven by using the following two claims:
\begin{enumerate}
    \item $(S,\oplus,\otimes)$ forms a commutative semiring with identity zero $I_0 = (0,0)$ and identity one $I_1 = (0,1)$.
    \item The query $\oplus_{X \in J} \otimes_{f} (Q_f(x_f),1)$ is a SumProd FAQ where the first entry of the result is $\sum_{X\in J}\sum_{f} Q_f(x_f)$ and the second entry is the number of rows in $J$.
\end{enumerate}

proof of the first claim: Since arithmetic summation is commutative and associative, it is easy to see $\oplus$ is also commutative and associative. Furthermore, based on the definition of $\oplus$ we have $(a,b) \oplus I_0 = (a+0,b+0) = (a,b)$.

The operator $\otimes$ is also commutative since arithmetic multiplication is commutative, the associativity of $\otimes$ can be proved by
\begin{align*}
    (a_1,b_1) \otimes ((a_2,b_2) \otimes (a_3,b_3)) &= (a_1,b_1) \otimes (a_2 b_3 + a_3 b_2, b_2 b_3)
    \\ &= (a_1 b_2 b_3 + b_1 a_2 b_3 + b_1 b_2 a_3, b_1 b_2 b_3)
    \\ &= (a_1 b_2 + b_1 a_2, b_1 b_2) \otimes (a_3,b_3)
    \\&= ((a_1,b_1) \otimes (a_2,b_2)) \otimes (a_3,b_3)
\end{align*}

Also note that based on the definition of $\otimes$, $(a,b) \otimes I_0 = I_0$ and $(a,b) \otimes I_1 = (a,b)$. The only remaining property that we need to prove is the distribution of $\otimes$ over $\oplus$:
\begin{align*}
    (a,b) \otimes ((c_1,d_1) \oplus (c_2,d_2)) &= (a,b) \otimes (c_1+c_2,d_1 + d_2)
    \\ &= (a,b) \otimes (c_1 + c_2, d_1 + d_2) 
    \\ &= (c_1 b + c_2 b + a d_1 + a d_2, bd_1 + bd_2)
    \\ &= (c_1 b + a d_1, bd_1) \oplus (c_2 b + a d_2, bd_2)
    \\ &= ((a,b) \otimes (c_1,d_1)) \oplus ((a,b) \otimes (c_2,d_2)) 
\end{align*}

Now we can prove the second claim: To prove the second claim, since we have already shown the semiring properties of $(S, \oplus, \otimes)$ we only need to show what is the result of $\oplus_{X \in J} \otimes_{f} (Q_f(x_f),1)$. We have $\otimes_{f} (Q_i(x_f),1) = (\sum_f Q_i(x_f), 1)$, therefore
\begin{align*}
    \oplus_{X \in J} \otimes_{f} (Q_i(x_f),1) = \oplus_{X \in J} (\sum_f Q_f(x_f), 1) = (\sum_{X\in J}\sum_{f} Q_f(x_f), \sum_{X \in J} 1)
\end{align*}
where the first entry is the result of the SumSum query and the second entry is the number of rows in $J$. 
\end{proof}

\end{document}